\documentclass[11pt]{article}
\usepackage{graphicx} 
\usepackage[dvipsnames]{xcolor}
\usepackage{tikz, pgfplots}
\pgfplotsset{compat=1.18}
\usetikzlibrary{positioning}
\usetikzlibrary{arrows.meta}
\usetikzlibrary{patterns}
\usetikzlibrary{intersections}
\usepackage[letterpaper, margin=1in]{geometry}
\usepackage{amsmath,amssymb, amsthm}
\usepackage{authblk}

\makeatletter
\def\thm@space@setup{%
  \thm@preskip=6pt plus 2pt minus 2pt
  \thm@postskip=6pt plus 2pt minus 2pt
}
\makeatother

\providecommand{\keywords}[1]
{
  \small	
  \textbf{Keywords and phrases} #1
}

\usepackage{float}   
\usepackage{algorithm}
\usepackage{algpseudocode}
\usepackage{enumitem}
\usepackage{hyperref, cleveref}
\usepackage[breakable, skins]{tcolorbox}
\usepackage{etoolbox}
\usepackage{appendix}
\AtBeginEnvironment{appendices}{\crefalias{section}{appendix}}

\algrenewcommand\algorithmicrequire{\textbf{Input:}}
\newcommand{\MOV}{\mathsf{MOV}}
\newcommand{\A}{\mathcal{A}}
\newcommand{\CC}{\mathcal{C}}
\newcommand{\DD}{\mathcal{D}}

\newcommand{\SM}{\mathcal{S}}

\newcommand{\WP}{{\sc{Winner Prediction}}}

\newtheorem{definition}{Definition}
\newtheorem{lemma}{Lemma}
\newtheorem{claim}{Claim}
\newtheorem{fact}{Fact}
\newcommand{\LongState}[1]{\State \parbox[t]{0.88\linewidth}{#1}}
\newtheorem{theorem}{Theorem}

\renewcommand{\epsilon}{\varepsilon}
\newcommand{\veehat}[1]{\mathop{\overset{\vee}{#1}}\nolimits}
\newcommand{\epstl}{\widetilde{\epsilon}}
\newcommand{\alphatl}{\widetilde{\alpha}}

\newif\ifdraft
\drafttrue

\ifdraft
\usepackage[colorinlistoftodos,prependcaption,textsize=tiny]{todonotes}
\usepackage {xargs}
\newcommand{\sgr}[1]{\todo[color=blue!100!black!2^8]{SS: #1}}
\newcommand{\kdr}[1]{\todo[color=red,size=\tiny]{KD: #1}}
\newcommand{\dkr}[1]{\todo[color=green,size=\tiny]{DK: #1}}

\else
    \newcommand{\sgr}[1]{}
    \newcommand{\kdr}[1]{}
    \newcommand{\dkr}[1]{}

\fi

\usepackage{titlesec}
\usepackage{enumitem}

\titlespacing*{\section}{0pt}{1.2ex plus 0.4ex minus 0.2ex}{0.8ex plus 0.2ex}
\titlespacing*{\subsection}{0pt}{1.0ex plus 0.3ex minus 0.2ex}{0.6ex plus 0.2ex}

\titlespacing*{\paragraph}
{0pt}{0.5ex plus 0.2ex minus 0.1ex}{0.6em}

\setlist{nosep}

\title{Query-efficient winner prediction in district-based elections}

\author[1]{Koustav De\thanks{koustavde7@kgpian.iitkgp.ac.in}}
\author[2]{Debajyoti Kar\thanks{debajyotikar@iisc.ac.in}}
\author[3]{Swagato Sanyal\thanks{swagato.sanyal@sheffield.ac.uk}}
\affil[1]{Department of Computer Science and Engineering, Indian Institute of Technology Kharagpur, India}
\affil[2]{Department of Computer Science and Automation, Indian Institute of Science Bengaluru, India}
\affil[3]{School of Computer Science, University of Sheffield, United Kingdom}

\begin{document}
\pagenumbering{gobble}
\maketitle
\begin{abstract}
In a district-based election, $N$ voters are partitioned into $k$ districts, and each voter votes for one of $m$ candidates. Each district elects a winner using the plurality rule (i.e. the candidate getting the largest number of votes is declared the winner, breaking ties as per some fixed rule),
and the overall winner is determined by applying plurality to the district winners; we assume that there is a unique winner amongst the district winners. The margin of victory of such an election is the minimum number of votes that must be altered so that the current winner ceases to be the unique district winner. We study the problem of predicting the winner of a district-based election in the query complexity model, where one has query access to individual votes. The objective is to minimise the number of queries. This setting captures exit polling, where queries correspond to interviewing voters, and is closely related to problems in query complexity and property testing.

Assuming that the margin of victory of the election is at least $\epsilon N$, Dey, Kar and Sanyal (AAMAS 2023) gave algorithms for the case of two candidates with error probability $\delta$ and query complexity $\widetilde{O}(\frac{1}{\epsilon^6}\log^2 \frac{1}{\delta})$, which improves to $\widetilde{O}(\frac{1}{\epsilon^4}\log^2 \frac{1}{\delta})$ under the additional assumption that district populations are balanced. Our main result is an adaptive randomised algorithm that, for an arbitrary district-based election and any error parameter $\delta \in (0, 1)$, with probability at least $1-\delta$, predicts the winner correctly using $\widetilde{O}(\frac{1}{\epsilon^2} \log \frac{m}{\delta}\log \frac{1}{\delta})$ queries. In particular, we quantitatively improve the bounds of Dey, Kar and Sanyal for arbitrary district populations and extend their results to any number of candidates. Furthermore, for the special case of a constant number of candidates, our algorithm nearly matches a lower bound of $\Omega(\frac{1}{\epsilon^2}\log \frac{1}{\delta})$ on the query complexity that holds even for two candidates and a single district.
\end{abstract}

\keywords{Sampling algorithms, Query algorithms, Sub-linear algorithms, District-based Election, Winner Prediction, Margin of Victory}

\newpage
\pagenumbering{arabic}
\section{Introduction}
In a district-based election $E$, there are $N$ voters partitioned into $k$ districts. Each voter votes for some candidate in a set of $m$ candidates denoted by $\CC$. The winner of each district is defined to be the candidate getting the largest number of votes in that district. The winner of $E$  is defined to be the candidate who wins in the largest number of districts. All ties are broken in an arbitrary but fixed way.

In this paper, we are interested in predicting the winner of a district-based election in the \emph{query model}, where an algorithm can query any voter and know their vote. The queries may be made adaptively (i.e.~the decision which voter to query next may depend on the previous query outcomes) and using randomness. The objective is to minimise the number of queries made while successfully predicting the winner with probability at least $1-\delta$, for a given error parameter $\delta \in (0, 1)$. We assume that in the given election the winner is unambiguous, i.e., the winner wins in strictly more districts than any other candidate. We assume that the parameters $N, m, k$ as well as the populations of the districts are known. We call this problem $\delta$-\WP.

The \emph{margin of victory} of an election $E$, denoted by $\MOV(E)$, is the minimum number of votes that need to be altered in $E$ so that the current winner ceases to win in strictly more districts than at least one other candidate, i.e., ceases to be the unambiguous district winner. Note that it is possible for the current winner to continue being the winner in the altered election depending on the tie-breaking rule. The margin of victory \cite{BD21} is a measure of robustness of an election. We would like to bound the query complexity of our predictor in terms of $\MOV(E)$. Intuitively speaking, winner-prediction is more difficult for close elections that correspond to low margins of victory. On the other hand, for elections with high margins of victory, one can hope that few well-chosen samples from the population may potentially contain enough statistical information to reliably predict its winner. We note that we do not assume that the predictor \emph{knows} $\MOV(E)$.
\subsection{Motivation.}
$\delta$-\WP\ is motivated by a range of applications.
\subsubsection{Social choice and voting theory.} 
The setup described above abstracts a common feature of district-based electoral systems, such as those used in countries including the US, the UK, and India: voters are grouped geographically, local outcomes are determined within districts or constituencies, and these local outcomes play a central role in determining the overall political outcome. Our model is not intended to capture all institutional details of these systems, but rather to study the algorithmic problem of winner prediction under this hierarchical aggregation structure.

 The query model can be viewed as an idealised abstraction of exit polling, in which each query corresponds to interviewing a randomly chosen voter from a chosen district. Our algorithm is adaptive; in each adaptive round we check whether the query outcomes so far are statistically sufficient to predict the winner with high confidence, or we need to make more queries. However, the number of adaptive rounds is only logarithmic in the query complexity of our algorithm. Besides, in order for us to bound the query complexity in terms of $\MOV(E)$ which is unknown to the algorithm, adaptivity is inevitable.

The model also captures a broader class of hierarchical consensus mechanisms in which individual agents form local decisions that are subsequently aggregated into a global outcome. It ties naturally to the general election-prediction problem for different voting rules which, besides elections, is relevant for applications to post-election audits and social surveys (see \cite{BD21}, \cite{KDS23} and the references therein).

More broadly, our work has the potential to motivate the study of query-efficient winner prediction for district-based elections under standard voting rules other than plurality, such as approval voting, Borda count, Copeland's rule, maximin, or the Schulze rule.

Next, we discuss connections of this problem with two fields within computational complexity theory: query complexity and property testing.
\subsubsection{Query complexity.} A randomised query algorithm computing a Boolean function $f:\{0, 1\}^n \to \{0, 1\}$ on $n$ bits with error probability $\delta$ is given \emph{query access} to the input bits of $f$, and for every input bit string $x \in \{0, 1\}^n$ outputs $f(x)$ with probability at least $1-\delta$ when run on $x$. The aim of the field of query complexity is to study the power and limitation of query algorithms, to relate various query complexity measures with one another and to determine the query complexity of specific Boolean functions \cite{BDW02}. A class of Boolean functions that plays a central role in query complexity is the composition of an \emph{outer} Boolean function $f:\{0, 1\}^n \to \{0,1\}$ and an \emph{inner} Boolean function $g:\{0, 1\}^\ell \to \{0,1\}$, denoted by $f \circ g:\{0, 1\}^{n\ell} \to \{0, 1\}$, defined as: $f \circ g(x^{(1)}, \ldots, x^{(n)})=f(g(x^{(1)}), \ldots, g(x^{(n)}))$. The query complexity of composed functions has seen a long line of research \cite{AGJKLMSS18, BDK18, GLSS18, BDB20, BDBGM22, CKMPS23, S24}, and the study of query complexity of specific composed functions such as NAND tree \cite{SW86, S91}, recursive majority \cite{MNSSTX16,GJ16, ADBD26} and Tribes \cite{JK10} has contributed immensely to the development of the field.

Consider the $\delta$-\WP\ problem in which i) there are only two candidates $0$ and $1$, ii) there is an odd number $k$ of districts, and iii) the population of each district is the same odd number, namely $N/k$. In this special case, the problem is exactly designing a randomised query algorithm with error probability $\delta$ for the \emph{composed Boolean function} $\mathsf{Maj}_{k} \circ \mathsf{Maj}_{N/k}$, where for an odd positive integer $t$, $\mathsf{Maj}_t$ denotes the majority function on $t$ bits. It follows from existing lower bounds on the query complexity of composed functions \cite{AKK16, GJPW18} that any randomised query algorithm computing $\mathsf{Maj}_{k} \circ \mathsf{Maj}_{N/k}$ with error probability $1/3$ makes $\Omega(N)$ queries in the worst case. \cite{KDS23} shows that it is possible to sidestep this lower bound with high probability when the margin of victory of the given election is large. This paper improves the query complexity. This paper further extends the upper bound first to allow for arbitrary district populations (which is a generalisation of function composition where multiple inner functions are used, which are in this case majority functions on varying number of bits), and to the non-Boolean setting of arbitrary number of candidates.
\subsubsection{Property testing.} In property testing \cite{G98, G17} we study the problem of deciding whether a large object (a Boolean string, a Boolean function or a graph) satisfies some property, given query access to the object. A property is identified with the subset of objects satisfying the property. For example, the property of monotonicity of Boolean functions is defined to be the set of all monotone Boolean functions. It is assumed that the given object is either in the property or \emph{far from being} in the property. For a \emph{distance parameter} $\eta$, a Boolean function/graph is called $\eta$-far from a property if the minimum fraction of entries in the truth table/adjacency matrix that need to be altered to make it belong to the property is at least $\eta$. The objective we seek to optimise is the number of queries made to the object. Specifically, we wish to bound the query complexity in terms of the distance parameter. A property is said to be \emph{testable} if there exists a tester which can distinguish between the cases of the input object being in the property and the input object being $\eta$-far from the property, by making number of queries dependent only on the distance parameter $\eta$ (and independent of the size of the object) and with error probability $1/3$.

Once again consider the $\delta$-\WP\ where there are only two candidates, $0$ and $1$, and an odd number of districts all with odd populations (so that there is never a tie). Consider the property of $0$ being the winner. Our objective is to decide whether a given election is in the property. Note that the distance of an election $E$ where $1$ wins from this property is exactly $\MOV(E)$. However, there is one distinction between our formulation of the winner-prediction problem and the traditional property-testing setting: in our analysis we assume that even if the given election is in the property (i.e., the winner is $0$), the margin of victory is still large. In other words, we assume that the given object (election) is far from the boundary of the property regardless of whether or not it is in the property. If we suitably extend the notion of testability in this setting, our result implies that the property of $0$ being the winner is testable.\footnote{also shown by \cite{KDS23} for two candidates.} Furthermore, we extend our tester to handle multiple candidates, and maintain testability for any constant number of candidates.
\subsection{Previous work}
\label{sec:previous-work}
The problem $\delta$-\WP\ was introduced by Dey, Kar and Sanyal~\cite{KDS23}, who proposed predictors with query complexity bounded in terms of $\MOV(E)$. Dey and Mitra \cite{DM24} studied the reliability of a given survey result and their consistency with the actual election outcome. Mitra \cite{M24} studied agent-based simulation of district-based elections, and applied their model to reproduce results of elections held in India and the US.

Prior to the introduction of this problem by Dey, Kar and Sanyal, Bhattacharyya and Dey \cite{BD21} undertook a theoretical study of the task of predicting winners and estimating margins of victory of elections for various voting rules. In particular they designed a winner-predictor for the special case of a single-district election in which the winner is determined by an application of the \emph{plurality} voting rule (i.e., the candidate getting the largest number of votes is the winner, breaking ties suitably).\footnote{This is a special case of a district-based election where there is only one district.} The margin of victory for this voting rule is, up to a constant factor, the difference between the numbers of votes obtained by the winner and the runner-up. Bhattacharyya and Dey proved the following result.
\begin{lemma}[\cite{BD21}]
\label{lem:dbmain}
There exists a non-adaptive query algorithm that, for any given $\gamma,\delta \in (0, 1)$ and query access to a district-based election $E$, makes $q(\gamma, \delta) = O(\frac{1}{\gamma^2}\log \frac{1}{\delta})$ queries and outputs $m$ estimates $\{\hat{\theta}_C\}_{C \in \CC}$ of the true fractions of votes $\{\theta_C\}_{C \in \CC}$ received by the $m$ candidates satisfying:
\[\forall C \in \CC, |\theta_C-\hat{\theta}_C| \leq \gamma.\]
with probability at least $1-\delta$.
\end{lemma}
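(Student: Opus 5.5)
The plan is to sample voters uniformly at random and use empirical frequencies, controlling all candidates at once with a concentration bound and a union bound. The wrinkle is that a plain union bound costs a $\log m$ factor, which the statement does not allow. So I would instead use a uniform bound over all candidates at once: the Dvoretzky--Kiefer--Wolfowitz inequality, or more simply a VC-type bound over the collection of all subsets of $\CC$.

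\textbf{Step 1 (sampling).} The algorithm draws $q = \lceil \frac{c}{\gamma^2}\log\frac{2}{\delta}\rceil$ voters independently and uniformly at random, with replacement, from all $N$ voters; $c$ is an absolute constant. Since the district populations are known, this is done by first choosing a district $i$ with probability $n_i/N$ and then a uniform voter inside it. The queries do not depend on previous answers, so the algorithm is non-adaptive. It outputs $\hat\theta_C = \frac{1}{q}\,\#\{\text{sampled voters voting } C\}$. Each sample's vote is an i.i.d.\ draw from the distribution $\theta=(\theta_C)_{C\in\CC}$, so $\hat\theta$ is the empirical distribution of $q$ samples from $\theta$.

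\textbf{Step 2 (uniform deviation without $\log m$).} Fix any ordering $C_1,\dots,C_m$ of the candidates and view the vote as an integer-valued random variable $X\in\{1,\dots,m\}$. The DKW inequality with Massart's constant gives
\[
\Pr\Bigl[\sup_t |\hat F(t)-F(t)| > \gamma/2\Bigr] \le 2e^{-q\gamma^2/2},
\]
where $F$ and $\hat F$ are the true and empirical CDFs of $X$. Each single-point mass satisfies $\theta_{C_j} = F(j)-F(j-1)$, and likewise for $\hat\theta_{C_j}$. Hence, on the good event, $|\theta_{C_j}-\hat\theta_{C_j}| \le |F(j)-\hat F(j)| + |F(j-1)-\hat F(j-1)| \le \gamma$ for every $j$ simultaneously. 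Taking $c=8$ makes the failure probability at most $\delta$, and the query bound is $O(\frac{1}{\gamma^2}\log\frac1\delta)$, independent of $m$.

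\textbf{Main obstacle.} The step needing care is avoiding the $\log m$ factor. A Hoeffding bound per candidate plus a union bound would give only $O(\frac{1}{\gamma^2}\log\frac m\delta)$. The DKW route avoids this, and it requires citing DKW/Massart as a black box. If that is not acceptable, I would fall back on a direct argument. The candidates with $\theta_C \ge \gamma$ number at most $1/\gamma$, so a union bound over them costs only $\log\frac1\gamma$. For the light candidates, I would group them into consecutive blocks of total mass about $\gamma$ and apply Bernstein's inequality to the block totals, which bounds every light candidate's estimate by its block's estimate. This fallback yields a slightly weaker $O(\frac{1}{\gamma^2}\log\frac{1}{\gamma\delta})$, so DKW is the intended tool for the exact bound.
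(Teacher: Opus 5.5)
Your main argument is correct and is the standard proof of this fact; the paper gives no proof of its own and simply imports the lemma from Bhattacharyya--Dey. Uniform i.i.d.\ sampling of voters makes $\hat\theta$ the empirical distribution of $q$ draws from $\theta$, and the DKW inequality with Massart's constant turns a uniform $\gamma/2$ bound on the empirical CDF into a simultaneous $\gamma$ bound on all $m$ point masses with no $\log m$ loss, so your fallback is not needed.

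One aside should be corrected. A VC bound over \emph{all} subsets of the candidate set would not help: that class has VC dimension $m$, and uniform convergence over it is total-variation estimation, which costs $\Theta\bigl(\frac{m+\log(1/\delta)}{\gamma^2}\bigr)$ samples. The relevant VC-dimension-one classes are the singletons or the prefixes $\{C_1,\dots,C_j\}$, and the prefixes are exactly what DKW controls.
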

Let the margin of victory of the single-district election be at least $\epsilon N$. One can predict its winner with probability at least $1-\delta$ by simply invoking Lemma~\ref{lem:dbmain} with parameters $\gamma<\epsilon/2$ and $\delta$, and returning the candidate with the largest predicted number of votes; the algorithm uses $O(\frac{1}{\epsilon^2}\log\frac{1}{\delta})$ queries. Notably, the query complexity of the algorithm is independent of the population size $N$ and the number of candidates $m$. 

It is not difficult to show that this algorithm can be adapted so that the knowledge of $\epsilon$ is not needed, and yet with probability at least $1-\delta$ it makes $\widetilde{O}(\frac{1}{\epsilon^2}\log\frac{1}{\delta})$ queries\footnote{Throughout the paper, the notation $\widetilde{O}$ and $\widetilde{\Omega}$ hide factors polylogarithmic in $1/\epsilon$ and $\log 1/\delta$.} and returns the correct winner. The idea is to iteratively use Lemma~\ref{lem:dbmain} to compute the fractional vote share estimates for various settings of $\gamma$ starting from a large value of $\gamma$ (say $\gamma=1/2$) and decreasing it by a factor of $2$ in each step, until the difference between the two leading estimates is large compared to $\gamma$. In order for all the estimates to be correct with high probability, the $\delta$ parameter for each invocation will have to be set lower, thereby tolerating a blow-up in the query complexity by a factor of $\log (1/\epsilon)$. An analysis of this algorithm shows that comparing the difference of the two leading estimates with the parameter $\gamma$ reveals whether the estimates can be relied on for the winner-prediction or whether more queries are necessary.

Bhattacharyya and Dey also observed that a matching lower bound of $\Omega(\frac{1}{\epsilon^2}\log \frac{1}{\delta})$
for the query complexity of this problem follows from standard facts in statistics for $\epsilon = \Omega(1/\sqrt{N})$, even when there are only two candidates. As a single-district election is a special case of a district-based election where there is only one district, the above lower bound holds for $\delta$-\WP.

Dey, Kar and Sanyal proved the following results.
\begin{theorem}[Dey, Kar and Sanyal \cite{KDS23}]
\label{thm:dks}
Consider the $\delta$-\WP\ problem when there are two candidates. Let the input election $E$ satisfy $\MOV(E)\geq \epsilon N$.
\begin{enumerate}[label=(\roman*)]
\item Assume that the population of each district is within a constant factor of the average population $N/k$. Then, there exists an algorithm that, with probability at least $1-\delta$, makes at most $\widetilde{O}(\frac{1}{\epsilon^4}\log^2\frac{1}{ \delta})$ queries and outputs the true winner.
\item Assume that the population of each district is arbitrary. Then, there exists an algorithm that, with probability at least $1-\delta$, makes at most $\widetilde{O}(\frac{1}{\epsilon^6}\log^2\frac{1}{ \delta})$ queries and outputs the true winner.
\end{enumerate}
\end{theorem}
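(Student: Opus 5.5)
We prove Theorem~\ref{thm:dks} by reducing winner prediction to two sampling tasks. First, we sample districts with suitable probabilities. Second, for each sampled district we estimate its local winner using Lemma~\ref{lem:dbmain}, with the iterative halving-of-$\gamma$ trick so that the algorithm adapts to the unknown local margin.

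The structural step comes first. Call a district \emph{$\eta$-safe} if its local margin (the difference between the two candidates' vote counts) is at least $\eta$ times its population. Write $W$ for the true winner and let $D$ be the number of districts the loser would need to flip. Flipping a district $i$ costs about half its local margin. Since $\MOV(E)\ge \epsilon N$, the cheapest $D$ flippable $W$-districts together cost at least $\epsilon N$. Hence the total population lying in $W$-districts with local margin below $\epsilon/2$, which could all be flipped cheaply, cannot be too large. More precisely, we will show that a random district, chosen with the right distribution, is a safe $W$-district with probability exceeding $1/2+\Omega(\epsilon)$.

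The algorithm then samples $T=\widetilde{O}(\frac{1}{\epsilon^2}\log\frac1\delta)$ districts and queries within each. If we settle each sampled district's winner only when its estimated margin exceeds an $\Omega(\epsilon)$ threshold, and call the remainder ``undecided'', a Chernoff bound gives a majority vote over sampled districts that is correct with probability $1-\delta/2$.

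The choice of district distribution differs between the two parts.
\begin{enumerate}[label=(\roman*)]
\item With balanced populations, we sample districts uniformly. A uniformly random district costs $\Theta(N/k)$ votes per unit of margin. The counting argument shows that at least $D$ of the $W$-districts, beyond the loser's count, have margin $\Omega(\epsilon)$, giving an $\Omega(\epsilon)$ gap in fractions. Each sampled district needs $O(\frac1{\epsilon^2}\log\frac{T}{\delta})$ queries so that all $T$ district estimates are correct by a union bound. The total is $\widetilde{O}(\frac1{\epsilon^4}\log^2\frac1\delta)$, as claimed.
\item With arbitrary populations, uniform district sampling fails, because cheap flips concentrate in small districts. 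We therefore bucket districts by population into $O(\log\frac1\epsilon)$ geometric classes and discard districts whose total population is below $\epsilon N/4$. Within a class populations are balanced, so the argument of (i) applies per class. However, the advantage must be apportioned across classes. Averaging only guarantees some class carries an $\Omega(\epsilon)$-fraction of the relevant gap, at relative scale $\epsilon$ within that class. This costs an extra $1/\epsilon^2$ in the number of sampled districts and yields $\widetilde{O}(\frac1{\epsilon^6}\log^2\frac1\delta)$.
\end{enumerate}

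In both parts the unknown $\epsilon$ is handled by guessing $\epsilon=2^{-j}$ for increasing $j$. We stop once the empirical district-vote gap clearly exceeds the sampling error, and allot error $\delta/j^2$ to stage $j$.

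The main obstacle is the combinatorial step. We must translate ``$\MOV\ge\epsilon N$'' into a quantitative statement about the distribution of local margins: that the gap between the number of winner districts that are robustly won and those that are not is $\Omega(\epsilon)$ under the sampling distribution. The exchange argument will be to take the cheapest flips in order of cost per district. For part (ii), that argument must also interact correctly with population weights. The sampling and concentration steps are routine once this inequality is in hand.
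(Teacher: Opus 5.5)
Part (i) is fine and follows the paper's sketch. The algorithm is the same: $\widetilde O(\epsilon^{-2}\log\frac1\delta)$ uniformly sampled districts, with $\widetilde O(\epsilon^{-2}\log\frac1\delta)$ queries in each. Your justification of the $\Omega(\epsilon)$ surplus is valid under balanced populations: each flip costs $O(N/k)$, so $\Omega(\epsilon k)$ of the $D$ cheapest flips must each cost $\Omega(\epsilon N/k)$.

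The gap is in part (ii), where the bucketing argument does not hold together.
\begin{itemize}
\item The diagnosis is wrong. Uniform district sampling is not what breaks: every district counts once in the outer vote, and the paper's sketch of (ii) still samples uniformly. Small districts are not the problem either, since cheap flips are exactly what the global counting argument controls. The obstacle is \emph{large} districts. All that $\MOV(E)\ge\epsilon N$ gives is a flip cost $\Omega(\epsilon N/k)$, which in a district of population $n_d$ is only a relative margin of order $\epsilon N/(kn_d)$.
\item The claim that ``the argument of (i) applies per class'' is false. The hypothesis $\MOV(E)\ge\epsilon N$ and the district gap are both global, and a class may well be won by the loser. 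Moreover, in the class with populations $\approx rN/k$, an absolute cost $\epsilon N/k$ is a relative margin of only $\Theta(\epsilon/r)$, not $\epsilon$.
\item Knowing that some class carries part of the gap yields no decision rule. The outcome is a sum over all classes, and over the districts you discard, each of which still counts once. You would have to estimate every class's contribution.
\end{itemize}
So the ``extra $1/\epsilon^2$ sampled districts'', and hence the $\epsilon^{-6}$ bound, is not derived from anything.

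The paper's route for (ii) keeps uniform sampling and only lowers the per-district threshold. The global argument gives $k(\frac12+\Omega(\epsilon))$ winner districts with flip cost $\Omega(\epsilon N/k)$. Since populations sum to $N$, fewer than $c\epsilon k$ districts have $n_d>N/(c\epsilon k)$. Discarding them (for small $c$) still leaves $k(\frac12+\Omega(\epsilon))$ winner districts, now with relative margin $\Omega(\epsilon^2)$. Hence $\widetilde O(\epsilon^{-4}\log\frac1\delta)$ queries in each sampled district suffice.

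Your own relative-margin idea could also be rescued, and would even do better. But its conclusion must concern the \emph{number} of narrowly won winner districts, not their total population: a few huge, narrowly won districts can hold half of $N$. The repaired argument goes as follows.
\begin{itemize}
\item The winner districts of population at least $4/\epsilon$ and relative gap below $\epsilon/2$ can all be flipped at once with fewer than $\epsilon N/2$ altered votes, whatever their sizes.
\item Applying Lemma~\ref{lem:district-lebel-lb} to the altered election, whose margin still exceeds $\epsilon N/2$, shows there are fewer than $D-\epsilon k/24$ such districts.
\item Districts of population below $4/\epsilon$ can be read exactly with $O(\epsilon^{-2}\log\frac T\delta)$ samples.
\end{itemize}
Thus uniform sampling with $\widetilde O(\epsilon^{-2})$ queries per district works for arbitrary populations, giving $\widetilde O(\epsilon^{-4}\log^2\frac1\delta)$ even in case (ii). As written, though, you neither prove this nor use it; you argue that uniform sampling fails. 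So part (ii) is not established.
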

The authors also proved a matching lower bound for a natural class of algorithms; see Section~\ref{sec:dks-ideas} for a discussion.
\subsection{Our result}
In this paper, we quantitatively improve the result of Dey, Kar and Sanyal and refine it by relaxing several assumptions. In particular, we propose an algorithm with optimal query complexity (up to logarithmic factors) for arbitrary district populations when the number of candidates $m$ is a constant.
\begin{theorem}
\label{thm:mainthm}
Consider the general $\delta$-\WP\ problem with arbitrary district populations and arbitrary number of candidates. There exists an algorithm $\A$ with the following guarantee: for every $\epsilon, \delta \in (0, 1) $ if the input election $E$ satisfies $\MOV(E) \geq \epsilon N$ then with probability at least $1-\delta$, $\A$ makes at most $\widetilde{O}\left(\frac{1}{\epsilon^2}\log \frac{m}{\delta}\log\frac{1}{\delta}\right)$ queries on $E$ and outputs the true winner of $E$.
\end{theorem}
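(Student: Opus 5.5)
We plan to design an adaptive algorithm that never needs $\epsilon$. It works in rounds $t=1,2,\ldots$ with a precision parameter $\gamma_t = 2^{-t}$ and confidence $\delta_t = \delta/(2t^2)$. In each round we sample districts with probability proportional to population and estimate district vote shares. We then run a certification test which either outputs a winner or asks for a smaller $\gamma$. The guarantee follows from three facts. First, the certification test never outputs a wrong candidate, except on a low-probability event. Second, once $\gamma_t \le c\,\epsilon$ for a suitable constant $c$, the test succeeds. Third, the total query count up to that round is dominated by the last round, with only polylogarithmic overhead.

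\textbf{Key structural step.} We want to reduce winner prediction to estimating a population-weighted ``robust win count.'' For a candidate $C$ and district $D$ with population $n_D$, call $D$ a robust win for $C$ if $C$ leads every other candidate in $D$ by more than $\gamma n_D$ votes (in fraction terms). The number of votes needed to flip district $D$ is about $n_D\cdot(\text{margin in }D)/2$. So $\MOV(E)\ge \epsilon N$ should imply something like this: the winner $W$ must robustly win, with margin threshold $\approx\epsilon$, enough districts that no other candidate can catch up. The difficulty is that district counts are unweighted while votes are weighted by population. To handle this, we order districts by their cost to flip (population times margin) and argue the following. Since altering votes in the $\ell$ cheapest non-robust districts is cheap, and the cheapest flips are constrained by $\MOV \geq \epsilon N$, the ``number of $W$-wins minus the best rival's wins'' stays positive after removing all districts $D$ whose flip cost is below some threshold.

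Rather than sampling districts uniformly, we would query voters uniformly from the whole population. This amounts to sampling a district with probability $n_D/N$ and a voter within it. We then need to decide, for a sampled district, whether its margin exceeds $\epsilon$. Done naively, this costs $1/\epsilon^2$ per district and leads to the $1/\epsilon^4$ of \cite{KDS23}. To avoid the blow-up, we would use a level-set decomposition. We bucket districts by margin $\mu_D\in(2^{-j-1},2^{-j}]$. Deciding membership in bucket $j$ costs $4^j\log(m/\delta)$ queries via Lemma~\ref{lem:dbmain}, with error split over the $m$ candidates. We then only need about $\epsilon^{-2}4^{-j}\log(1/\delta)$ sampled districts from bucket $j$, because flipping a district of bucket $j$ costs about $2^{-j}n_D$ votes. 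Summing the cost over buckets gives a per-bucket cost of $\widetilde O(\epsilon^{-2}\log\frac m\delta\log\frac1\delta)$, and $O(\log 1/\epsilon)$ buckets contribute only polylogarithmic factors. This product form is exactly the target bound.

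\textbf{Main obstacle.} The hardest step is the combinatorial lemma linking $\MOV(E)\ge\epsilon N$ to a statistically estimable quantity. Concretely, we need to show that the weighted difference $\sum_D w_D(\mathbf 1[W\text{ wins }D]-\mathbf 1[C\text{ wins }D])$ is robust under importance sampling with weights $w_D = 1/(n_D k)$ when the sampling is population-proportional. Heavy-tailed weights for tiny districts could blow up the variance. We would try to first show that tiny districts (those with $n_D\le \epsilon N/k$) can be discarded, since their total flip cost is small relative to their count. We would then show that the remaining weights are bounded by $O(1/\epsilon)$, and apply a Bernstein bound. If this fails, the fallback is to stratify districts by population in powers of two and run the above procedure per stratum, paying another logarithmic factor.
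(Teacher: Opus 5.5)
Your outer doubling loop matches the paper, and so does the intuition that per-district precision and the number of sampled districts should trade off so their product is $\widetilde{O}(1/\epsilon^2)$. But the ingredients that make this work are missing, and one of them fails as stated.

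\textbf{The tradeoff lemma.} Your sentence ``flipping a district of bucket $j$ costs about $2^{-j}n_D$ votes, so about $\epsilon^{-2}4^{-j}$ sampled districts suffice'' silently assumes that, at the right threshold, the estimated quantity has a gap of order $2^{j}\epsilon k$ up to logarithms. Your structural step only asserts that a gap ``stays positive'', which gives no sample bound. What is needed, and what the paper proves, is a per-rival statement. For each $C\neq W$ there is a power of $1/2$, $\alpha_C\ge\epsilon/16$, such that the number of $W$-rich districts minus the number of $C$-poor districts is $\Omega(\alpha_C k)$. Here $W$-rich means $W$ wins and no candidate can be made the winner with about $\epsilon N/(\alpha_C k\log(1/\epsilon))$ altered votes; $C$-poor means $C$ wins or can be made the winner with that many. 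The paper first fixes an optimal $C$-feasible strategy. It finds $\alpha_C$ by a harmonic-sum argument on the sorted costs $h(d)$, combined with Lemma~\ref{lem:district-lebel-lb} applied to a perturbed election (Lemma~\ref{lem:mov-lemma}). It then converts this into a rich-minus-poor gap by exchange arguments over the three kinds of altered districts: third candidate to $C$, $W$ to $C$, and $W$ to a third candidate (Lemmas~\ref{lem:structure}--\ref{lem:difference-large}). For $m>2$ nothing in your proposal plays this role. Indeed, the winner-versus-runner-up margin you bucket by is not the right quantity for $C$-poorness. Also, you cannot sample ``from bucket $j$''; you must sample districts uniformly and test each at threshold $2^{-j}$, which is the paper's $\widetilde\alpha$-loop.

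\textbf{Soundness.} Your test is never specified. The natural reading of your structural step, comparing robust-win counts after discarding cheap districts, is wrong even with exact estimates. Take equal populations, let $C$ win $3k/4$ districts with shares $(\tfrac12+5\epsilon,\tfrac12-5\epsilon)$ for $(C,W)$, and let $W$ win the other $k/4$ unanimously. Then $\MOV(E)\ge\epsilon N$ and $C$ is the winner, but at $\gamma=1/4$ only $W$ has robust wins. A sound certificate must compare a lower bound on $|\DD_{C^*}|$ (sampled robust wins, $\widehat f$) with an upper bound on every $|\DD_{C'}|$ (districts where $C'$ wins or can cheaply be made to win, $\veehat{f}$), using a separate $\widetilde\alpha_{C'}$ for each rival. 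This is what lets Lemma~\ref{lem:correct-when-halt-arbitrary} hold at whatever round the algorithm halts, with no $\MOV$ assumption.

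\textbf{Arbitrary populations.} Population-proportional sampling with importance weights is the wrong tool, since the populations are known. With normalised weights up to $1/\epsilon$, which your discarding step still allows, the estimator can have variance $\Theta(k^2/\epsilon)$. Estimating a count to within $\alpha k$ then needs $\Theta(1/(\epsilon\alpha^2))$ sampled districts rather than $O(1/\alpha^2)$. When many districts have population about $2\epsilon N/k$ while most voters sit in districts of size $\Theta(N/k)$, the total can reach $\Theta(1/\epsilon^3)$ votes.

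Moreover, the real difficulty is not tiny districts, which are cheap to test, but huge ones. Testing a district of population $rN/k$ against a flip-cost threshold $\approx\epsilon N/(\alpha k)$ costs $\approx(\alpha r/\epsilon)^2$ votes. The paper discards districts with $n_d>2^9N/(\epsilon k)$, and classes with $r$ beyond $\approx\log(1/\epsilon)/\alpha$; these are few enough to be absorbed by the gap. Crucially, it gives each class $\DD^r$ an additive error budget $\approx\alpha k/\log(1/\epsilon)$. Because $|\DD^r|\le k/r$, this is a relative error $\approx\alpha r/\log(1/\epsilon)$, so only $\widetilde O(1/(\alpha r)^2)$ districts of $\DD^r$ need sampling, cancelling the $r^2$ per-district cost. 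Your fallback (``stratify and run the procedure per stratum, paying a logarithm'') omits this allocation. With a uniform error budget per stratum, stratum $r$ costs $r^2/\epsilon^2$, up to $1/\epsilon^4$.

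Two smaller omissions:
\begin{itemize}
\item When $\epsilon k$ is only polylogarithmic, the gap $\alpha_C k$ can be $O(1)$ districts. The paper handles this with $g=\log(2k)$, $\alpha_C\ge 1/(2k)$, and by sampling votes from every district when $\widetilde\alpha\le 8/k$.
\item The $\log m$ factor belongs in the number of sampled districts, to union-bound the $2m$ rich and poor count estimates, not in the per-district query count. Lemma~\ref{lem:dbmain} already estimates all $m$ shares at once.
\end{itemize}
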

In particular for the special case of a constant number of candidates, our algorithm has  query complexity $\widetilde{O}(\frac{1}{\epsilon^2}\log^2\frac{1}{\delta})$, which is optimal up to factors polylogarithmic in $1/\epsilon$ and $\log 1/\delta$.
\section{Proof ideas}
\label{sec:proof-ideas}
In this section we discuss the ideas and techniques underlying our proof. We first discuss the high-level ideas of the proof of Theorem~\ref{thm:dks}
 by Dey, Kar and Sanyal. We then discuss the barriers that their approach faces in reducing the query complexity as well as relaxing the assumptions (two candidates, equal district populations). Finally we discuss the new technical ingredients that we develop in this work to bypass these obstacles and prove our main theorem. Throughout this section, we assume that $\epsilon$ is known. This assumption is removed by iteratively guessing and affirming the guess, along the lines of the ideas outlined in Section~\ref{sec:previous-work}.
 
 For simplicity, throughout this section we assume that $\delta$ is a constant (e.g. $1/3$). A crucial ingredient in all the proofs is the single-district winner predictor by Bhattacharyya and Dey. It is a randomised algorithm and errs with some probability. However, in this overview we assume that it always outputs correctly as long as the number of votes queried is large enough with respect to the margin of victory. In the actual proof, however, we need to set the $\delta$ parameter of this algorithm low enough so that all its invocations can be relied on. This leads to a logarithmic blow-up in its query complexity, which we hide in the $\widetilde{O}$ notation. We also assume for simplicity in this section that $\epsilon=\Omega(\frac{1}{k})$.

We note here that arguably the most significant technical challenge that we deal with in this work is to handle arbitrary number of candidates. The mathematical structure of elections with more than $2$ candidates (even $3$ candidates) and high MOV turns out to be significantly deeper and more subtle than elections with only two candidates and bounded MOV. We outline the techniques that we develop and employ to handle arbitrary number of candidates in Section~\ref{intro:m>2}. Until before Section~\ref{intro:m>2} we discuss other challenges that we navigate and to keep the discussion there focussed we assume that we have only $2$ candidates.
 \subsection{Proof ideas of Theorem~\ref{thm:dks}}
\label{sec:dks-ideas}
Recall that in Theorem~\ref{thm:dks} there are only two candidates.
 Let $W_1, \ldots, W_k$ be the district winners. A key technical lemma (Lemma~\ref{lem:district-lebel-lb}) that they showed asserts that the margin of victory of the single-district election $E'$ defined by $W_1, \ldots, W_k$ is $\Omega(\epsilon k)$. We note that Lemma~\ref{lem:district-lebel-lb} also holds for an arbitrary number of candidates.

 The winner of $E$ is the same as the winner of $E'$, and $\MOV(E')=\Omega(\epsilon k)$. Hence, if we had query access to the $W_i$s, we could have sampled $\widetilde{O}(1/\epsilon^2)$ of them and used the predictor by Bhattacharyya and Dey to output a prediction. Since we do not have direct access to the $W_i$s, we instead predict $W_i$ by running the predictor by Bhattacharyya and Dey on the single-district election defined by the $i$-th district, which we denote by $E_i$.

 But the single-district predictor needs a lower bound on the $\MOV$ to output reliably. What guarantee do we have on $\MOV(E_i)$? The $E_i$s with $\MOV(E_i) \ll \epsilon\frac{N}{k}$ can be shown to not matter much, i.e., it does not hurt the reliability of the final prediction to a significant degree if the predicted winner of $E_i$ is wrong. This is a consequence of the lower bound of $\epsilon N$ on the margin of victory of the overall election $E$. So, while estimating the winner of $E_i$, we assume that $\MOV(E_i)=\Omega(\epsilon\frac{N}{k})$.  However, the query complexity of the single-district predictor within the $i$-th district depends on the $\MOV(E_i)$ \emph{viewed as a fraction of the district population}.
 \subsubsection{Part 1.} Here, we additionally assume that the district populations are within a constant factor of $N/k$. In this case, the bound $\MOV(E_i)=\Omega(\epsilon\frac{N}{k})$ implies that $\MOV(E_i)$ is at least $\Omega(\epsilon)$ fraction of the district population.  Thus, the single-district predictor samples $\widetilde{O}(1/\epsilon^2)$ within the $i$-th district to predict $W_i$. In this case the final predictor takes the following form:
 \begin{enumerate}
\item Sample $\ell_1:=\widetilde{O}(1/\epsilon^2)$ districts. Let the corresponding single-district elections be $E_{i_1}, \ldots, E_{i_{\ell_1}}$.
\item For each $j \in \{1, \ldots, \ell_1\}$, predict the winner of $E_{i_j}$ by using the single-district predictor by Bhattacharyya and Dey assuming that $\MOV(E_{i_j})=\Omega(\epsilon\frac{N}{k})$. Let the predicted winner be $\widetilde{W}_j$. This step involves sampling and querying $\ell_2:=\widetilde{O}(1/\epsilon^2)$ votes from each sampled district.
\item Output the winner of the single-district election defined by $\widetilde{W}_1, \ldots, \widetilde{W}_{\ell_1}$.
 \end{enumerate}
 The overall query complexity is $\ell_1 \ell_2=\widetilde{O}(1/\epsilon^4)$.
 \subsubsection{Part 2.} However, when the district populations are arbitrary, a lower bound $\MOV(E_i)=\Omega(\epsilon\frac{N}{k})$ does not imply that $\MOV(E_i)$ is at least $\Omega(\epsilon)$ fraction of the population of the $i$-th district (denoted by $n_i$). Nevertheless the authors could show that the number of districts for which $\MOV(E_i)=\Omega(\epsilon^2 n_i)$ is at least $k(\frac{1}{2}+\Omega(\epsilon))$. The intuition behind this result is that if $\MOV(E_i)=\Omega(\epsilon\frac{N}{k})$ and $\MOV(E_i)\ll\epsilon^2 n_i$, then $n_i \gg N/\epsilon k$, i.e., the population of the $i$-th district is far larger than the average population of a district. Thus, the number of such districts is $\ll\epsilon k$.

 In this case we need to invoke Bhattacharyya and Dey's algorithm for an $\MOV$ lower bound of $\Omega(\epsilon^2 n_i)$; so $\widetilde{O}(\frac{1}{(\epsilon^2)^2}) = \widetilde{O}(\frac{1}{\epsilon^4})$ votes are queried from each sampled district. The query complexity of the overall predictor is thus $\widetilde{O}(\frac{1}{\epsilon^2}\cdot \frac{1}{\epsilon^4})= \widetilde{O}(\frac{1}{\epsilon^6})$.
 \subsection{Reducing the query complexity}
 \label{intro:red_compl}
 Now we outline our key technical ingredient that enables us to reduce the query complexity to a quadratic dependence on $1/\epsilon$ (up to logarithmic factors). To keep the discussion focused we assume that there are only two candidates $A$ and $B$, and the population of each district is within a constant factor of $N/k$. Removing these assumptions will need additional technical ideas and will be discussed in subsequent sections. 

 How tight is the bound in Theorem~\ref{thm:dks}(i)? In particular, can we reduce either $\ell_1$ or $\ell_2$ in the previous section? Dey, Kar and Sanyal proved that neither $\ell_1$ nor $\ell_2$ can be reduced while still preserving the performance guarantee of the algorithm, leading to a lower bound of $\widetilde{\Omega}(1/\epsilon^4)$ for the class of algorithms that fit the template of the algorithm presented in the previous section with the freedom to set $\ell_1$ and $\ell_2$. An analysis of elections $E_\beta$  discussed next also recovers this lower bound.

Consider the following election $E_\beta$, parameterised by a parameter $\beta \in (0, 1)$.
\begin{tcolorbox}
    [breakable, width=\linewidth, colback=white!95!blue, title = $E_\beta$]
    The winner wins in $k(\frac{1}{2}+\epsilon^\beta)$ districts. In each of those districts, the $\MOV$ of the corresponding single-district election is $\epsilon^{1-\beta}\cdot\frac{N}{k}$. In each of the remaining districts in which the other candidate wins, we don't specify the margin of victory; in particular we allow the $\MOV$ to be as small as $1$.
\end{tcolorbox}
$\MOV(E_\beta)$ can be easily verified to be $\epsilon N$. The $\MOV$ of the single-district election defined by the district winners is $\epsilon^\beta k$; so the algorithm in Section~$\ref{sec:dks-ideas}$ must set $\ell_1$ to $\widetilde{\Theta}(\frac{1}{\epsilon^{2\beta}})$. In particular if $\beta \approx 1$, $\ell_1 = \widetilde{\Theta}(\frac{1}{\epsilon^2})$.

Now, in order to reliably predict the winner for the districts where the overall election winner wins, $\ell_2$ must be set to $\widetilde{\Theta}(1/\epsilon^{2(1-\beta)})$. Note that if $\beta \approx 0$, $\ell_2$ needs to be set to $\widetilde{\Theta}(\frac{1}{\epsilon^2})$.

With the aforementioned settings of $\ell_1$ and $\ell_2$, irrespective of the accuracy of the predictions for the districts in which the true winner does not win, we will have enough statistical information to correctly predict the winner.

As discussed above, the settings of $\ell_1$ and $\ell_2$ in the algorithm in Section~\ref{sec:dks-ideas} cannot be reduced in general. However, for any $\beta$, if we are allowed to set $\ell_1$ and $\ell_2$ depending on $\beta$, the overall query complexity is at most $\widetilde{\Theta}(\frac{1}{\epsilon^{2\beta}}\cdot \frac{1}{\epsilon^{2(1-\beta)}})=\widetilde{\Theta}(\frac{1}{\epsilon^2})$.

$E_\beta$ demonstrates a trade-off between the $\MOV$ of the election defined by the $W_i$s, and that of the elections defined by the districts that matter for the prediction task, which can potentially be exploited to reduce the query complexity. The following lemma establishes such a trade-off for any district-level election with $\MOV$ at least $\epsilon N$.
\begin{lemma}
\label{lem:simple-alpha}
Let $E$ be a district-level election with two candidates satisfying $\MOV(E) \geq \epsilon N$ and $4/k\leq \epsilon\leq 1/2$. Then there exists an $\alpha \in [\frac{\epsilon}{8}, \frac{1}{2}]$ such that there are at least $k(\frac{1}{2}+ \alpha)$ districts each with $\MOV$ at least $\widetilde{\Omega}(\frac{\epsilon N}{\alpha k})$ in which the overall winner wins.
\end{lemma}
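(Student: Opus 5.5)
Let $A$ be the overall winner, let $W$ be the set of districts won by $A$, and for $i \in W$ let $\mu_i = \MOV(E_i)$ be the number of votes that must be changed to make $A$ lose district $i$. With two candidates, $A$ wins in $|W| > k/2$ districts. To stop $A$ from being the unique district winner, it is enough to flip $t := |W| - \lceil k/2 \rceil + 1$ (or roughly $|W| - k/2$) of the districts in $W$. The cheapest way to do this flips the $t$ districts of $W$ with the smallest $\mu_i$. Sort $W$ so that $\mu_{(1)} \le \mu_{(2)} \le \cdots \le \mu_{(|W|)}$. The hypothesis $\MOV(E)\ge \epsilon N$ then gives the key inequality
\[ \sum_{j=1}^{t} \mu_{(j)} \ge \epsilon N. \]
The plan is to turn this sum inequality into the existence of a threshold $\alpha$ by a dyadic pigeonhole argument. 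This is the standard ``layer-cake'' argument that produces the $\log(1/\epsilon)$ loss hidden in $\widetilde{\Omega}$.

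\textbf{Step 1: reparametrise by excess.} For $s \in \{1,\dots,t\}$, the districts $(s),(s+1),\dots,(|W|)$ number $|W|-s+1$. Write this count as $k(\tfrac12+\alpha_s)$ with $\alpha_s \approx (t-s)/k$. Every one of these districts has $\MOV$ at least $\mu_{(s)}$. So it suffices to find an $s$ with $\mu_{(s)} \gtrsim \epsilon N/(\alpha_s k\log(1/\epsilon))$ and $\alpha_s \in [\epsilon/8, 1/2]$.

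\textbf{Step 2: contradiction via dyadic blocks.} Suppose no such $s$ exists, i.e.\ $\mu_{(s)} < c\,\epsilon N/(\alpha_s k L)$ for every admissible $s$, where $L = \log_2(8/\epsilon)$. Split the first $t$ indices into those with $\alpha_s < \epsilon/8$ and dyadic blocks $B_r = \{s : \alpha_s \in [2^{r}\epsilon/8, 2^{r+1}\epsilon/8)\}$ for $r = 0,\dots,L$.

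The indices with $\alpha_s<\epsilon/8$ are the last roughly $\epsilon k/8$ indices before $t$. Each of their $\mu$ values is at most the population of a single district, so they need separate handling, discussed below.

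Each block $B_r$ has about $2^r\epsilon k/8$ indices, and each contributes less than $c\,\epsilon N/(2^r (\epsilon/8) k L)$. So the block's total is at most $c\,\epsilon N/L$ up to constants. Summing over the $L+1$ blocks gives $O(c\,\epsilon N)$, which is less than $\epsilon N/2$ for small $c$.

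\textbf{Main obstacle: the top $\epsilon k/8$ indices.} These are the indices with small excess, and the sorted $\mu$ values there are the largest among the first $t$. I would handle them by monotonicity. Since $\mu_{(s)} \le \mu_{(s')}$ for $s\le s'$, the indices with $\alpha_s<\epsilon/8$ all dominate $\mu_{(s_0)}$, where $s_0$ is the index with $\alpha_{s_0} = \epsilon/8$, but that gives a lower bound rather than the upper bound I need.

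Instead I would use the freedom in choosing the cheap set of districts to flip. The instance $E$ must withstand every attack, including ones that flip districts of $W$ and also change votes in the loser's districts. More simply, apply the argument with $t' = t - \epsilon k/8$: the adversary flips only $t'$ cheap districts plus about $\epsilon k/8$ further districts. This needs care. One option is to redefine $\alpha$ so that the count is $|W| - (s-1) \ge k/2 + \epsilon k/8$. Then the hypothesis $\epsilon\ge 4/k$ gives $\epsilon k/8 \ge 1/2$, which makes the range nonempty and handles the rounding at $\alpha=\epsilon/8$.

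Concretely, I would check that the sum over the top small-excess indices is also $\le \epsilon N/2$. If it is not, some single index $s$ among them has $\mu_{(s)} \ge 4N/k$. Since $\mu$ is sorted, every later district then has $\MOV \ge 4N/k \ge \epsilon N/(\alpha k)$ with $\alpha = \epsilon/8$. But there are only $\epsilon k/8$ of them, not $k(\tfrac12+\epsilon/8)$.

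So the fix is to include the indices before them: choose $\alpha = \epsilon/8$ and $s$ at the block boundary. Pigeonhole on the top region then shows that either the block sums cover $\epsilon N/2$, or $\mu_{(s_0)} \ge \epsilon N/(\alpha k L)$ directly. The contradiction with $\sum_{j\le t}\mu_{(j)}\ge\epsilon N$ yields the desired $\alpha$.
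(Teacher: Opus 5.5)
Your skeleton is the same as the paper's: sort $A$'s districts by $\mu$, reparametrise the suffix $(s),\dots,(|W|)$ by its excess $\alpha_s$, and dispose of all indices $s\le t$ with $\alpha_s\ge\epsilon/8$ by a harmonic sum that costs one $\log(1/\epsilon)$. That part is fine, provided you define $\alpha_s:=(|W|-s+1)/k-\tfrac12$ exactly. The gap is the step you flag yourself: the at most $\epsilon k/8+1$ indices $s\le t$ with $\alpha_s<\epsilon/8$ are never bounded. Your closing dichotomy (``either the block sums cover $\epsilon N/2$, or $\mu_{(s_0)}\ge\epsilon N/(\alpha kL)$ directly'') does not follow from anything you have established. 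Large values of $\mu$ above $s_0$ give no lower bound on $\mu_{(s_0)}$. So from $\sum_{j\le t}\mu_{(j)}\ge\epsilon N$ alone, nothing prevents the whole $\epsilon N$ from sitting in those top indices while everything at or below $s_0$ is tiny. The idea of flipping $t'=t-\epsilon k/8$ cheap districts ``plus about $\epsilon k/8$ further districts'' stalls for the same reason: you have no handle on the cost of the further ones. Finally, the count ``only $\epsilon k/8$ of them'' is wrong. Every district $(s),\dots,(|W|)$ comes after $s$, and since $s\le t$ there are at least $\lceil k/2\rceil$ of them.

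That miscount is exactly where the missing idea sits. A district won by $A$ with margin $\mu$ has population at least $2(\mu-1)$. So the at least $\lceil k/2\rceil$ districts $(s),\dots,(|W|)$ give $N\ge k(\mu_{(s)}-1)$, i.e.\ $\mu_{(s)}\le N/k+1$ for every $s\le t$. Hence the top region contributes at most $(\epsilon k/8+1)(N/k+1)\le 3\epsilon N/4$, using $\epsilon\ge 4/k$ and $k\le N$. Your block bound with, say, $c=1/5$ and $L=\log_2(8/\epsilon)$ gives less than $\epsilon N/4$. Together these contradict $\sum_{j\le t}\mu_{(j)}\ge\epsilon N$.

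The paper closes the same hole differently: it never flips the top region. It flips only the districts of the cheap set outside the top $\approx\epsilon k/8$. If that costs less than $\epsilon N/5$, the resulting election $E'$ still has $A$ as winner with $\MOV(E')>4\epsilon N/5$. But its district gap is at most about $\epsilon k/4$, below the $4\epsilon k/15$ that Lemma~\ref{lem:district-lebel-lb} guarantees for $E'$. Either ingredient completes your argument; as written, the proposal has neither.
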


Lemma~\ref{lem:simple-alpha} is a simplified version of a more general lemma (Lemma~\ref{lem:mov-lemma}) which is used to prove Theorem~\ref{thm:mainthm}.  Lemma~\ref{lem:mov-lemma} handles general district-based elections with arbitrary number of candidates, is integrated with structural properties of such elections and is more technical. Lemma~\ref{lem:mov-lemma} is discussed in Section~\ref{intro:m>2}. For a proof of Lemma~\ref{lem:simple-alpha} see \Cref{sec:simple-alpha-proof}.

Lemma~\ref{lem:simple-alpha} enables a cheaper algorithm as follows. Let $t:=\frac{\epsilon}{\alpha}$, where $\alpha$ is given by Lemma~\ref{lem:simple-alpha}. First, assume that the $\alpha$
 is known to the algorithm. In that case, we set $\ell_1$ to $\widetilde{\Theta}(1/\alpha^2)$, $\ell_2$ to $\widetilde{\Theta}(1/t^2)$ and run the algorithm in Section~\ref{sec:dks-ideas}.  We've set $\ell_2$ large enough to correctly predict the winners of each of the districts mentioned in Lemma~\ref{lem:simple-alpha}. Since their fraction is at least $\frac{1}{2}+\alpha$, in our sampled election more than half of the predicted winners would be the actual overall winner, and hence the output will be correct. The overall query complexity is $\widetilde{O}(\frac{1}{\alpha^2}\cdot \frac{1}{t^2})=\widetilde{O}(\frac{1}{\epsilon^2})$.

But the algorithm does not know $\alpha$. To deal with that, it breaks up the range $[\epsilon/8, 1/2]$ into $O(\log \frac{1}{\epsilon})$ intervals $[\epsilon/8, \epsilon/4),[\epsilon/4, \epsilon/2), \ldots, [1/4, 1/2]$ (assuming $\epsilon$ to be an inverse power of $2$).
For each of these intervals, it guesses $\alpha$ to be in that interval and runs the aforementioned algorithm. This adds an additional factor of $O(\log(1/\epsilon))$ to the query complexity.

A correct guess would lead to the correct prediction. But can a wrong guess lead to a wrong prediction? The key insight here is that the Bhattacharyya-Dey single-district winner predictor can also detect whether the number of queries into a district is insufficient to reliably predict its winner, or whether the number of confident predictions of district winners is too small to predict the overall winner reliably based on them (see the discussion in Section~\ref{sec:previous-work}).  So the algorithm can be made to discard guesses of $\alpha$ if the query outcomes do not present clear statistical evidence of a candidate winning more than half of the districts.
\subsection{Handling arbitrary district populations}
The improvement outlined in the preceding section hinges on the following observation: an $\MOV$ of $\widetilde{\Omega}(t\frac{N}{k})$ for a district is indeed $t$-fraction of the district population, as the population of each district is within a constant factor of $N/k$. Thus, sampling $\widetilde{O}(1/t^2)$ votes from such a district suffices to predict its winner reliably. While Lemma~\ref{lem:simple-alpha} holds for arbitrary district populations, the aforementioned argument fails when the district populations are arbitrary. 

To tackle this, the first step is to ignore districts which are too large: with population $\gg\frac{N}{\epsilon k}$ as their total number is $\ll \epsilon k$, which is too small to matter.

Next, the remaining districts are grouped into classes based on their populations; within each class (except one) all districts have their populations within a factor of $2$ of one another. We have the set of districts $\DD_0$ with populations at most $ N/k$,
and for each $i$ in $[1, \ldots, \approx \log (1/\epsilon)]$ we define a set $\DD_i$ of districts with population lying in $(\frac{2^iN}{k}, \frac{2^{i+1}N}{k}]$.\footnote{We remark that in the formal proof the notations used are slightly different from the ones used in this overview.}

Now for each $i$ we estimate the number of districts in $\DD_i$ that satisfy the conditions in Lemma~\ref{lem:simple-alpha}: bounded $\MOV$ and the winner being the same as the overall winner of the district-level election, within an additive error of $O(\frac{\alpha}{\log(1/\epsilon)}k)$. Since there are $\approx \log (1/\epsilon)$ district classes, the sum of these estimates will be an estimate of the total number of districts (leaving out the districts with very large populations as explained before) that satisfy the Lemma conditions within an additive $O(\alpha k)$.

For a specific $i$, the number of districts in $\DD_i$ is at most $O(k/2^i)$; so the tolerable error of $O(\frac{\alpha}{\log(1/\epsilon)}k)$ is at least a $\Omega(\frac{\alpha}{\log(1/\epsilon)}2^i)$ fraction of the number of such districts. So it suffices to sample at most $\widetilde{O}(\frac{\log^2(1/\epsilon)}{\alpha^22^{2i}})$ such districts. Note that here we save a factor of $2^{2i}$ compared to previous section.

However, for a district in $\DD_i$, the $\MOV$ lower bound of $\widetilde{\Omega}(t
\frac{N}{k})$ is $\widetilde{\Omega}(\frac{t}{2^i})$ fraction of its population; so we need to query $\widetilde{O}(\frac{2^{2i}}{t^2})$ votes from each sampled district in $\DD_i$, which is a factor of $2^{2i}$ worse than the previous section.

Hence, the total number of queries for each $i$ is $\widetilde{O}(\frac{\log^2(1/\epsilon)}{\alpha^22^{2i}}\cdot \frac{2^{2i}}{t^2})=\widetilde{O}(1/\epsilon^2)$. Repeating it for each population class with appropriate settings of the error parameter adds an additional multiplicative overhead of  $O(\mathsf{poly}\log(1/\epsilon))$ to the query complexity.

We would like to contrast our approach with that of Dey, Kar and Sanyal; they incurred an overhead of a factor of $\Omega(1/\epsilon^2)$ while going from part 1 to part 2 of Theorem~\ref{thm:dks} by tolerating a deterioration of the $\MOV$ guarantee of the single-district elections defined by the districts with arbitrary population. We instead incur an overhead of only $O(\mathsf{poly}\log (1/\epsilon))$ by treating each population class differently.
\subsection{Handling more than two candidates}
\label{intro:m>2}
As mentioned before, the most significant technical challenge that we faced is to extend our proof to handle more than $2$ candidates. All the approaches discussed so far crucially rely on a property of a district-based election involving $2$ candidates: to win an election it is both necessary and sufficient to win in more than half the districts. All the algorithms discussed so far focus on finding a candidate who wins in more than half the districts, and declare such a candidate the winner. However, this characterisation does not generalise to $m>2$: while it is necessary for the winner to win in at least $k/m$ districts, it is no longer a sufficient condition to win the election.

So, what condition should an algorithm look for? We look for the condition implied by the definition of winner: $W$ is the winner if and only if for every other candidate $C$, $W$ wins in more districts than $C$. So, at a high level the algorithm finds a candidate $W$ that satisfies the above condition.

Note that in Theorem~\ref{thm:mainthm} the query complexity has a logarithmic dependence on $m$, and it is a good time to outline the source of this factor. As per the strategy hinted above, pairwise comparisons between candidates are involved. This will translate to accurately estimating various statistics for all the candidates, using the Bhattacharyya-Dey algorithm. In order for all of those estimates to be reliable, the error parameter $\delta$ needs to be inversely proportional to $m$. Since the query complexity of the Bhattacharyya-Dey algorithm has a logarithmic dependence on $1/\delta$, our algorithm ends up suffering an extra $\log m$ factor in its query complexity.

For the rest of this section, we fix two candidates $W$ and $C$, and explain at a high level what kind of evidence our algorithm gathers that certifies that $W$ wins in more districts than $C$.  We also switch back to the equal district population setting to focus on the technical ideas developed to handle more than $2$ candidates; arbitrary district populations are handled by integrating these ideas with the ones discussed in the preceding section.

Informally speaking, for a parameter $\alpha \in (0, 1)$, call a district $d$ to be \emph{$(\alpha, W)$-rich} if $W$ is the winner of $d$ and no other candidate can be made the winner by changing $\widetilde{O}(\frac{\epsilon N}{\alpha k})$ votes, and call $d$ to be \emph{$(\alpha, C)$-poor} if $C$ is either the winner of $d$ or can be made the winner by changing $\widetilde{O}(\frac{\epsilon N}{\alpha k})$ votes. 
Let $R(\alpha, W)$ and $P(\alpha, C)$ stand for the numbers of $(\alpha, W)$-rich and $(\alpha, C)$-poor districts respectively.
We note that these notations are only for this overview, and are not used in the formal proof.

Observe that $R(\alpha, W)$ is a lower bound on the number of districts in which $W$ wins, and $P(\alpha, C)$ is an upper bound on the number of districts in which $C$ wins. So, if for some value of $\alpha$ we can certify that $R(\alpha, W) > P(\alpha, C)$, then that will also certify that $W$ wins in more districts than $C$.

This is enabled by Lemma~\ref{lem:mov-lemma} which, as mentioned before, is a generalisation of Lemma~\ref{lem:simple-alpha}. At a high level, Lemma~\ref{lem:mov-lemma} asserts that there exists an $\alpha=\Omega(\epsilon)$ (which may depend on $W$ and $C$) such that if $W$ is indeed the true winner of the election then $R(\alpha, W) - P(\alpha, C) = \Omega(\alpha k)$.

For simplicity, suppose that $\alpha$ is known to the algorithm; this assumption can be removed along the lines of the ideas discussed in Section~\ref{intro:red_compl}. $\widetilde{O}(\frac{\alpha^2}{\epsilon^2})$ queries into a district suffice to figure out whether it is $(\alpha, W)$-rich or $(\alpha, C)$-poor (note that the assumption that all district populations are within a constant factor of $N/k$ is used here). We can use it simultaneously for all $m$ candidates without the query complexity scaling with $m$, using the Bhattacharyya-Dey algorithm.

In order to estimate $R(\alpha, W)$ and $P(\alpha, C)$ within an additive $O(\alpha k)$, it is sufficient to sample $\widetilde{O}(\frac{1}{\alpha^2})$ districts. However this is for a fixed $W$ and $C$. If we want our estimates to be reliable for all candidates with high probability, it is sufficient for the estimate for each candidate to be accurate with error probability inversely proportional to $m$: as discussed before this blows up the number of districts sampled to $\widetilde{O}(\frac{\log m}{\alpha^2})$. In total, the query complexity sums up to $\widetilde{O}(\frac{\alpha^2}{\epsilon^2}\cdot\frac{\log m}{\alpha^2})=\widetilde{O}(\frac{\log m}{\epsilon^2})$.
 
Proving Lemma~\ref{lem:mov-lemma}  involves a deep analysis of the structure of elections with $\MOV \geq \epsilon N$. Altering votes optimally to make $C$ win in at least as many districts as $W$ wins in, potentially involves three different kinds of districts:
\begin{itemize}
    \item $\DD_1$: Districts where neither $W$ nor $C$ wins in the original election, and $C$ is made the winner by altering fewest possible votes. Doing this for one such district reduces the difference of votes between $W$ and $C$ by $1$.
    \item $\DD_2$: Districts where $W$ wins in the original election, and $C$ is made the winner by altering fewest possible votes. Doing this for one such district reduces the difference of votes between $W$ and $C$ by $2$.
    \item $\DD_3$: Districts where $W$ wins in the original election, and a candidate other than $W$ and $C$ is made the winner by altering fewest possible votes. Doing this for one such district reduces the difference of votes between $W$ and $C$ by $1$.
\end{itemize}
\Cref{lem:mov-lemma} is derived by a careful study of the properties of these three sets of districts, the optimality of the strategy under consideration and the guarantee $\MOV \geq \epsilon N$ of the election.

\section{Organisation}
For ease of exposition, we first present our result for the special case of balanced district populations. Towards this, in Section~\ref{sec:structural-result} we present structural results, some of which hold only for elections with balanced district populations. In Section~\ref{sec:alg-equal-population} we present our algorithm for balanced district populations. In Section~\ref{sec:structure-arbitrary} we refine and extend the structural results of Section~\ref{sec:structural-result} to elections with arbitrary district populations. Finally, we present the proof of Theorem~\ref{thm:mainthm} (for arbitrary district-based elections) in Section~\ref{sec:alg-arbitrary-population}.

\section{Structural results for election with $\MOV(E)\ge \epsilon N$}
\label{sec:structural-result}

In this section, we derive certain structural properties of an election $E$ satisfying $\MOV(E)\ge \epsilon N$ that we shall later leverage in designing our algorithm. We assume that $\epsilon$ is a power of $1/2$; this can be attained by rounding $\epsilon$ down if necessary, thereby reducing it by a factor of at most $2$. In Section \ref{sec:proof-ideas}, we implicitly assumed that the number of districts $k$ is sufficiently large, i.e., at least $\Omega(\frac{1}{\epsilon})$. In the complementary case when $k$ is small, a number of additional subtleties arise that need to be handled carefully.


We now introduce some notations. We call a value $x\in (0,1/2]$ to be \emph{big} if it satisfies $x>\frac{2^{15}}{k}\log\frac{\log(1/x)}{\delta x}$, and \emph{tiny} otherwise. Define a function $g\colon (0,1/2]\rightarrow\mathbb{R}$ as $g(x)=\log(1/x)$ if $x$ is big, and $g(x)=\log(2k)$ otherwise. Let $\DD$ denote the set of all districts and for any candidate $C\in \CC$, let $\DD_{C}\subseteq \DD$ be the set of districts where $C$ wins. Let $W\in \CC$ be the winner of the election $E$. We assume that we are in the setting where $|\DD_W|>|\DD_C|$ for every candidate $C\neq W$. We will need the following result from \cite{KDS23}.
\begin{lemma}[\cite{KDS23}]
\label{lem:district-lebel-lb}
Let $\MOV(E) \geq \epsilon N$. Then $|\DD_W| - |\DD_C| \geq \frac{\epsilon k}{3}$, for all $C\in \CC \setminus \{W\}$.
\end{lemma}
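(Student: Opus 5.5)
We argue by contraposition: assume some $C\neq W$ has $|\DD_W|-|\DD_C| < \frac{\epsilon k}{3}$, and exhibit fewer than $\epsilon N$ vote changes after which $W$ no longer wins strictly more districts than $C$. This contradicts $\MOV(E)\ge\epsilon N$. The natural modification flips some districts of $W$ into districts of $C$. Each such flip lowers $|\DD_W|-|\DD_C|$ by $2$. So with $\Delta := |\DD_W|-|\DD_C|$, it suffices to flip $\lceil \Delta/2\rceil \le \frac{\epsilon k}{6}+1$ districts of $\DD_W$ to make $|\DD_C|\ge|\DD_W|$. Flips of $W$-districts do not raise any other candidate's count, so the only requirement is that $C$ catches up to $W$.

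The key step is bounding the cost of these flips. In a district $d$ with population $n_d$, we can make $C$ the winner by changing at most $n_d$ votes, namely by reassigning all of them to $C$. We should pick the $\lceil\Delta/2\rceil$ districts of $\DD_W$ with the smallest populations. The cheapest districts of $\DD_W$ have total population at most $\frac{\lceil\Delta/2\rceil}{|\DD_W|}\cdot N$. So the cost is at most about $\frac{(\epsilon k/6+1)N}{|\DD_W|}$, and we need this below $\epsilon N$.

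The main obstacle is that $|\DD_W|$ may be much smaller than $k$ when $m$ is large, and the $+1$ from rounding matters when $\epsilon k$ is small. For the rounding, I would handle small $\epsilon k$ separately: if $\epsilon k/3\le 1$, the claim $\Delta\ge\epsilon k/3$ follows from $\Delta\ge1$, which holds by assumption. For the size of $\DD_W$, I would use a refined cost bound. To make $C$ beat $W$ in $d$, it suffices to move roughly half of $W$'s votes minus $C$'s votes, so the cost is at most $n_d$, but also at most the votes of $W$ in $d$. Then I would compare against the total: the districts of $\DD_W$ can be cheap in aggregate even if few. If $|\DD_W|$ is small, the alternative is to create new $C$-districts from $\DD\setminus(\DD_W\cup\DD_C)$, each raising $C$ by $1$ at cost at most $n_d$. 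Combining both pools gives $\Delta$ flips from a pool of at least $k-|\DD_C|\ge k/2$ districts, roughly, with $|\DD_W|$-flips counting double.

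Concretely, the plan is to choose $\Delta$ districts, or fewer counting doubles, from $\DD\setminus\DD_C$ with the smallest populations. Since $|\DD\setminus\DD_C|\ge k/2$ (as $|\DD_C|<|\DD_W|$ forces $|\DD_C|<k/2$), averaging bounds the total cost by $\frac{\Delta}{k/2}N<\frac{2\epsilon}{3}N<\epsilon N$. A subtlety is that turning a non-$W$, non-$C$ district to $C$ gives $+1$, while turning a $W$ district gives $+2$. Either way each chosen district reduces $\Delta$ by at least $1$, so $\Delta$ districts suffice. Ties at equality are acceptable because $\MOV$ only requires that $W$ cease to be the strict winner. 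This yields the contradiction.
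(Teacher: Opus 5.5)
The paper does not prove this lemma; it imports it from \cite{KDS23} and uses it only as a black box (in the proofs of Lemma~\ref{lem:mov-lemma} and Lemma~\ref{lem:simple-alpha}), so there is no in-paper route to compare with. Judged on its own, the argument in your ``Concretely'' paragraph is a correct and complete proof. Since $|\DD_C|<|\DD_W|$ and $|\DD_C|+|\DD_W|\le k$, the pool $\DD\setminus\DD_C$ has more than $k/2$ districts. It also contains $\DD_W$, so it has at least $\Delta$ districts. Changing every vote in its $\Delta$ least-populated districts to $C$ costs, by averaging, less than $\frac{2\Delta}{k}N$. Each such district lowers $|\DD_W|-|\DD_C|$ by at least one, so afterwards $W$ no longer wins strictly more districts than $C$. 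In fact the same computation gives the stronger bound $|\DD_W|-|\DD_C|\ge \epsilon k/2$.

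Everything before that paragraph can be deleted:
\begin{itemize}
\item The first attempt, which flips only districts of $\DD_W$, is inadequate for large $m$, as you noticed, because $|\DD_W|$ may be only about $k/m$.
\item The separate case $\epsilon k/3\le 1$ is unnecessary, since the averaging bound involves no rounding.
\item The aside bounding the cost of flipping a $W$-district by $W$'s vote count there is never used. It can also fail by one vote under unfavourable ties, e.g.\ when $W$ and a third candidate have one vote each, $C$ has none, and ties are broken against $C$.
\end{itemize}
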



\begin{figure}[htbp]
    \centering
    \begin{tikzpicture}[scale=0.7]
        \draw (0,0) rectangle (5,3);
        \draw[fill=YellowGreen] (0,3) rectangle (5,4);
        \draw[pattern=north east lines, pattern color=black] (0,4) rectangle (5,4.75);
        \draw[fill=YellowGreen] (0,4.75) rectangle (5,7);
        \draw (0,7) rectangle (5,8);
        \draw[preaction={fill, Cerulean}, pattern=north east lines, pattern color=black] (0,8) rectangle (5,8.25);
        \draw[fill=YellowGreen] (0,8.25) rectangle (5,9);
        \draw[pattern=north east lines, pattern color=black] (0,9) rectangle (5,9.25);
        \draw[fill=Cerulean] (0,9.25) rectangle (5,11.5);

        \node[rectangle,draw=black!100,minimum width=3cm,minimum height=2cm] (Enclosing_Legends) at (11,10.075) {};
        
        \node[rectangle,draw=black!100,fill=Cerulean,minimum width=1cm,minimum height=0.5cm] (D_W_Prime_Legend) at (11,10.5) {};
        \node[rectangle,draw=black!100,fill=YellowGreen,minimum width=1cm,minimum height=0.5cm] (D_C_Prime_Legend) at (11,9.5) {};
        \node at (D_W_Prime_Legend.east) [xshift=0.4cm] {$\DD_{W}^{\prime}$};
        \node at (D_C_Prime_Legend.east) [xshift=0.4cm] {$\DD_{C}^{\prime}$};
        
        \draw [decorate, decoration={brace,amplitude=7pt,raise=1pt}]
            (0,3) -- (0,4.75)
            node [midway, anchor=east, xshift=1mm, outer sep=10pt]{$\DD_1$};

        \draw [decorate, decoration={brace,amplitude=7pt,raise=1pt}]
            (0,7) -- (0,8.25)
            node [midway, anchor=east, xshift=1mm, outer sep=10pt]{$\DD_3$};

        \draw [decorate, decoration={brace,amplitude=7pt,raise=1pt}]
            (0,8.25) -- (0,9.25)
            node [midway, anchor=east, xshift=1mm, outer sep=10pt]{$\DD_2$};

        \draw[<-] (5,4.375) -- (5.3,4.375)
            node[midway,right] {$\DD_1^{\prime}$};

        \draw [decorate, decoration={brace,mirror,amplitude=7pt,raise=1pt}]
            (5,4.75) -- (5,7)
            node [midway, anchor=west, outer sep=8pt]{$\DD_{C}$};
        
        \draw[<-] (5,8.125) -- (5.4,8.125)
            node[midway,right] {$\DD_3^{\prime}$};

        \draw[<-] (5,9.125) -- (5.4,9.125)
            node[midway,right] {$\DD_2^{\prime}$};

        \draw [decorate, decoration={brace,mirror,amplitude=8pt,raise=1pt}]
            (5,8) -- (5,11.5)
            node [midway, anchor=west, outer sep=8pt]{$\DD_{W}$};
    \end{tikzpicture}

    \caption{Division of $\DD$ into various types of districts. $\DD_1, \DD_2,\DD_3$ are the districts where votes are altered in the strategy $\SM$. $\DD'_1,\DD'_2,\DD'_3$ are the districts with large values of $h(d)$ guaranteed by Lemma \ref{lem:mov-lemma}. The blue and green districts correspond to $\DD'_W$ and $\DD'_C$, respectively.}
    \label{fig:district-decomposition}
\end{figure}

Consider any candidate $C\neq W$. Let $\Delta_C:= |\DD_W|-|\DD_C|$. Note that by our assumption $\Delta_C >0$. We call a strategy to alter a subset of votes in $E$ to be $C$-\emph{feasible} if in the resulting election obtained after altering the votes, the number of districts where $C$ is the winner is at least that of $W$. A $C$-feasible strategy is said to be \emph{optimal} if it alters the minimum number of votes. Observe that an optimal $C$-feasible strategy $\mathcal{S}$ involves altering votes in three types of districts -- a set $\DD_1\subseteq \DD\setminus (\DD_W\cup \DD_C)$ where the winner is changed to $C$ from some candidate in $\CC\setminus \{W,C\}$; a set $\DD_2\subseteq \DD_W$ where the winner is changed from $W$ to $C$; and a set $\DD_3 \subseteq \DD_W$ where the winner is changed from $W$ to some candidate in $\CC \setminus \{W,C\}$ (see Figure \ref{fig:district-decomposition}). Note that $|\DD_1|+2|\DD_2|+|\DD_3|\in [\Delta_C, \Delta_C+1]$. For $d\in \DD_1\cup \DD_2\cup \DD_3$, let $h(d)$ be the number of votes altered in district $d$ in the strategy $\SM$. Since $\MOV(E)\ge\epsilon N$, it must hold that $\sum_{d\in \DD_1\cup \DD_2\cup \DD_3} h(d) \ge \epsilon N$. We show the following lemma.

\begin{lemma}
\label{lem:mov-lemma}
    For all candidates $C \neq W$, there exists an $\alpha_C\in (0, 1/2]$ such that there are sets $\DD'_i \subseteq \DD_i$, for $i\in [3]$, with $|\DD'_1\cup \DD'_2\cup \DD'_3|\ge \max\{\alpha_C k,1\}$ and $h(d) > \frac{\epsilon N}{2^5\alpha_{C}g(\epsilon) k}$, for all $d\in \DD'_1\cup \DD'_2\cup \DD'_3$. Further $\alpha_C$ is an integral power of $1/2$, and $\alpha_C\ge \epsilon/16$, if $\epsilon$ is big, and $\alpha_C \ge 1/(2k)$, otherwise.
\end{lemma}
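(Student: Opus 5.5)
The plan is a dyadic bucketing (pigeonhole) argument on the values $h(d)$. Let $S = \DD_1\cup\DD_2\cup\DD_3$ and $H=\sum_{d\in S} h(d)\ge \epsilon N$. Two bounds on the districts of $S$ drive everything. First, $|S|\le |\DD_1|+2|\DD_2|+|\DD_3|\le \Delta_C+1\le k+1$. Second, $h(d)\le n_d$ for every $d\in S$, so all values are at most $N$.

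Districts with small $h$ are discarded. Let $S_0=\{d\in S: h(d)\le \frac{\epsilon N}{4(k+1)}\}$; these contribute at most $\epsilon N/4$ to $H$. The remaining districts, contributing at least $3\epsilon N/4$, are split into dyadic classes according to $h(d)$.

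\textbf{Big $\epsilon$.} Here I would parametrise by $\alpha$ ranging over the powers of $1/2$ in $[\epsilon/16,1/2]$, about $\log(8/\epsilon)\le 4g(\epsilon)$ values. Define $T_\alpha=\{d: h(d)>\frac{\epsilon N}{2^5\alpha g(\epsilon) k}\}$; this set grows as $\alpha$ grows. Suppose, for contradiction, that $|T_\alpha|<\alpha k$ for every such $\alpha$.
\begin{itemize}
\item For the smallest $\alpha=\epsilon/16$, the threshold is $\frac{N}{2 g(\epsilon) k}$. By the assumption, fewer than $\epsilon k/16$ districts exceed it.
\item Bound $H$ by a layer-cake sum. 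Districts with $h$ between consecutive thresholds $\theta_{2\alpha}$ and $\theta_\alpha$ (where $\theta_\alpha$ denotes the threshold for $\alpha$) number fewer than $2\alpha k$. Each contributes at most $\theta_\alpha=\frac{\epsilon N}{2^5\alpha g k}$, so each layer contributes fewer than $\frac{\epsilon N}{16 g}$. Summing over the $\le 4g$ layers gives less than $\epsilon N/4$.
\item The top class, $h>\theta_{\epsilon/16}$, has fewer than $\epsilon k/16$ districts, each with $h(d)\le n_d$.
\end{itemize}

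\textbf{Main obstacle: the top class.} Its total contribution is $\sum n_d$ over those districts, which need not be small when populations are arbitrary. I would handle it with the optimality of $\SM$ and the MOV hypothesis. If $\sum_{d\in \mathrm{top}} h(d)$ were large, I would replace those few districts in the strategy by cheaper ones. The plan is to show that the districts of $S$ with the smallest $h$ can absorb the required swing: $\Delta_C\ge\epsilon k/3$ by Lemma~\ref{lem:district-lebel-lb}, while the top class is small ($<\epsilon k/16$).

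A cleaner route avoids that exchange argument altogether. Choose the threshold range so that the top class is automatically nonempty, i.e.\ take $\alpha$ down to the level where $\alpha k\le 1$ would be required. This is why the statement allows $\max\{\alpha_C k,1\}$ and the tiny regime $\alpha_C\ge 1/(2k)$. Concretely, I would extend the pigeonhole over powers of $1/2$ down to $1/(2k)$ whenever needed. At $\alpha=1/(2k)$ the requirement is only one district with $h>\frac{\epsilon N}{2^4 g}$. This holds trivially unless every $h(d)\le \epsilon N/(16g)$, and in that case the layer sum is already bounded as above.

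\textbf{Tiny $\epsilon$.} The same computation works with $g=\log(2k)$, and the number of dyadic levels from $1/(2k)$ to $1/2$ is $\log k\le g$. Summing the three pieces then gives $H<\epsilon N/4+\epsilon N/4+(\text{top})$. With the top class controlled, this contradicts $H\ge\epsilon N$.

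The constants ($2^5$, $16$) need tuning, but the strategy is to take a union over dyadic levels with a geometric sum. The $g(\epsilon)$ factor pays for the number of levels, and the big/tiny split is the mechanism for controlling the top class, which I expect to be the delicate step.
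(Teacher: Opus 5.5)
Your tiny-$\epsilon$ case is fine. The dyadic layer-cake runs down to the smallest power of $1/2$ that is at least $1/(2k)$, and $g(\epsilon)=\log(2k)$ pays for the levels. This is a discretised form of the paper's harmonic-sum argument on the sorted $h$-values. In that regime the top level is either empty or already gives the conclusion with $\max\{\alpha_C k,1\}=1$.

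The genuine gap is the big-$\epsilon$ case, at exactly the step you flag, and neither of your fixes closes it.
The ``cleaner route'' is not allowed there, for two reasons. First, the statement requires $\alpha_C\ge\epsilon/16$ for big $\epsilon$. The algorithm only tries $\widetilde{\alpha}\ge\widetilde{\epsilon}/16$ in that regime, and Lemma~\ref{lem:difference-large-arbitrary} uses $|\DD^{\text{pop}}_{\epsilon}|\le\alpha_C k/2^5$. Second, summing over the roughly $\log k$ levels between $1/(2k)$ and $1/2$ costs about $\log k\cdot\frac{\epsilon N}{16\log(1/\epsilon)}$. This is not $O(\epsilon N)$ when $k\gg 1/\epsilon$.
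The exchange idea fails as well. A few districts with large $h$ do not contradict the optimality of $\SM$, and applying Lemma~\ref{lem:district-lebel-lb} to $E$ itself says nothing about their cost.
In fact, the only facts your big-case computation uses are $H\ge\epsilon N$ and $h(d)\le n_d$. These are consistent with one very populous district having $h=\epsilon N$ and every other district of $S$ having $h=1$. That configuration violates the conclusion once $\epsilon N>16g(\epsilon)k$.

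What is missing is a second use of the MOV hypothesis, applied to a \emph{partially executed} strategy.
Under your contradiction hypothesis, the districts of $S$ outside the top class $T_{\epsilon/16}$ have total $h$ less than $\epsilon N/2$, by your own layer-cake bound. Execute $\SM$ only on those districts, obtaining $E'$.
Fewer than $\epsilon N/2$ votes were changed. So $W$ is still the unambiguous winner of $E'$, and $\MOV(E')\ge\epsilon N/2$. Lemma~\ref{lem:district-lebel-lb} applied to $E'$ then forces $W$ to win in at least $\epsilon k/6$ more districts than $C$ in $E'$.
On the other hand, executing all of $\SM$ brings this lead to at most $0$. Each of the fewer than $\epsilon k/16$ unaltered top districts restores at most $2$ (the $2$ coming from $\DD_2$). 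So the lead in $E'$ is below $\epsilon k/8<\epsilon k/6$, a contradiction.

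This is the paper's argument for big $\epsilon$. The paper phrases it with a cut at index $\lceil\epsilon k/8\rceil$ of the sorted $h$-values. Because of the factor $2$, a cut near $\epsilon k/16$, as in your setup, is what actually beats $\epsilon k/6$. This step forces $\Omega(\epsilon k)$ districts with large $h$ regardless of district populations, and it slots directly into your framework in place of the top-class discussion.
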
 
\begin{proof}
    We first show the existence of a value $\beta_C \in (0,1/2]$ such that there are sets $\DD'_i \subseteq \DD_i$, for $i\in [3]$ and satisfying $|\DD'_1\cup \DD'_2\cup \DD'_3|\ge \beta_{C}k$ and $h(d)>\frac{\epsilon N}{16\beta_{C}g(\epsilon)k}$, for all $d\in \DD'_1\cup \DD'_2\cup \DD'_3$; later we shall set $\alpha_C$ to be the largest power of $1/2$ not exceeding $\beta_C$. Let $\Delta:=|\DD_1\cup \DD_2\cup \DD_3|$. We sort the districts $d$ of $\DD_1\cup \DD_2\cup \DD_3$ in non-increasing order of $h(d)$, and let $M_1\ge M_2 \ge \ldots \ge M_{\Delta}$ be the values of $h(d)$ in these districts. Since $\MOV(E)\ge\epsilon N$, we have that $\sum_{i\in [\Delta]} M_i \ge \epsilon N$.
    
    We first consider the case when $\epsilon$ is tiny. We will show that there exists an $i\in [\Delta]$ such that $M_i > \frac{\epsilon N}{16i\log (2k)}$. This would imply our desired claim by taking $\beta_C=\frac{i}{k}$ and observing that $M_1\ge M_2\ge \ldots \ge M_i$. Towards a contradiction, assume that for all $i\in [\Delta]$, it holds that $M_i \le \frac{\epsilon N}{16i\log (2k)}$. Then 
    \[ \sum_{i\in [\Delta]} M_i \le \frac{\epsilon N}{16\log (2k)}\sum_{i\in [\Delta]} \frac{1}{i}\le \frac{\epsilon N}{16},\]
    where the last inequality follows since $\Delta \le k$, and we obtain a contradiction.

    Next, let $\epsilon$ be big. In this case, we show that there exists an integer $i\in [\lceil \epsilon k/8\rceil, \Delta]$ such that $M_i > \frac{\epsilon N}{16i\log (1/\epsilon)}$, from which the claim would again follow by taking $\beta_C = \frac{i}{k}$. Assuming to the contrary that for all $i \in [\lceil \epsilon k/8\rceil, \Delta]$, $M_i \le \frac{\epsilon N}{16i\log (1/\epsilon)}$, we then have
    \[ \sum_{i=\lceil \epsilon k/8\rceil}^{\Delta} M_i \le \frac{\epsilon N}{16\log(1/\epsilon)}\sum_{i=\lceil \epsilon k/8\rceil}^{\Delta} \frac{1}{i}\le \frac{\epsilon N}{16\log(1/\epsilon)} \left(\ln\left(\frac{\Delta}{\lceil \epsilon k/8\rceil-1}\right)+1\right)\le \frac{\epsilon N}{2}.\]
    This implies that by altering at most $\epsilon N/2$ votes, we can obtain another election $E'$ where $W$ wins in at most $\epsilon k/8$ districts more than $C$. However, since $\MOV(E)\ge\epsilon N$, $W$ must still be the winner in $E'$ and $\MOV(E')\ge \epsilon N - \frac{\epsilon N}{2}=\frac{\epsilon N}{2}$. Applying Lemma \ref{lem:district-lebel-lb} to $E'$, it follows that $W$ must win in at least $\epsilon k/6$ districts more than any other candidate, contradicting the fact that $W$ only wins in at most $\epsilon k/8$ districts more than $C$.

    Thus, in either case, we obtain a value $\beta_C$ such that $|\DD'_1\cup \DD'_2\cup \DD'_3|\ge \beta_{C}k$ and $h(d)>\frac{\epsilon N}{16\beta_{C}g(\epsilon)k}$, for all $d\in \DD'_1\cup \DD'_2\cup \DD'_3$. Let $\alpha_C := (1/2)^{\lceil \log(1/\beta_C)\rceil}$. Then $|\DD'_1\cup \DD'_2\cup \DD'_3|\ge \max\{\alpha_{C} k,1\}$ and since $\alpha_C >\beta_C/2$, we get $h(d) > \frac{\epsilon N}{2^5\alpha_{C}g(\epsilon) k}$, for all $d\in \DD'_1\cup \DD'_2\cup \DD'_3$. In order to obtain the desired lower bounds on $\alpha_C$, note that if $\epsilon$ is tiny, we have $\beta_C\ge 1/k$, and hence $\alpha_C\ge 1/(2k)$; otherwise if $\epsilon$ is big, we have $\beta_C\ge \epsilon/8$, and thus $\alpha_C \ge \epsilon/16$.
\end{proof} 


We remark that till now we have not made any assumptions on the populations of the districts, i.e., Lemma \ref{lem:mov-lemma} works for the setting of arbitrary district populations. In the remainder of the section, we derive our structural results assuming all districts have equal population $N/k$. We then leverage these observations to design our algorithm for the equal population case in Section \ref{sec:alg-equal-population}. 


We call a district $d\in \DD$ to be $W$-\emph{rich} if $W$ wins in $d$, and no other candidate can be made the winner in $d$ by changing at most $\frac{\epsilon}{2^8\alpha_{C}g(\epsilon)}$-fraction of votes in $d$. Also, we call $d$ to be $C$-\emph{poor} if $C$ is either the winner in $d$, or can be made the winner in $d$ by changing at most $\frac{\epsilon}{2^8\alpha_{C}g(\epsilon)}$-fraction of votes in $d$. 

\begin{lemma}
\label{lem:structure}
    The following statements hold.
    \begin{enumerate}[label=(\roman*)]
        \item All districts in $\DD'_3$ are $W$-rich.
        \item There does not exist any $C$-poor district in $(\DD_W\cup \DD'_1)\setminus ((\DD_2\setminus \DD'_2)\cup \DD_3)$.
    \end{enumerate}
\end{lemma}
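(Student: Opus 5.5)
The plan is to compare the cost threshold in the definitions of $W$-rich and $C$-poor with the lower bound on $h(d)$ from Lemma~\ref{lem:mov-lemma}, and then argue by contradiction with the optimality of the $C$-feasible strategy $\SM$. Since every district has population $N/k$, changing an $\frac{\epsilon}{2^8\alpha_C g(\epsilon)}$-fraction of the votes in a district means changing $\frac{\epsilon N}{2^8\alpha_C g(\epsilon)k}$ votes. This is strictly smaller than the bound $h(d)>\frac{\epsilon N}{2^5\alpha_C g(\epsilon)k}$ that holds for every $d\in\DD'_1\cup\DD'_2\cup\DD'_3$. Throughout I use that districts are independent: altering votes in one district affects only that district's winner. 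So $\SM$ can be modified district by district, and the only global effect to track is the change in $|\DD_W|-|\DD_C|$.

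For (i), fix $d\in\DD'_3$. Suppose some candidate $D\neq W$ could be made the winner of $d$ by changing at most $\frac{\epsilon N}{2^8\alpha_C g(\epsilon)k}<h(d)$ votes. Replace the alteration of $\SM$ in $d$ by this cheaper one. If $D\notin\{W,C\}$, the effect on the difference between $W$'s and $C$'s district counts is unchanged (it drops by $1$). If $D=C$, the difference drops by $2$ instead of $1$. Either way the new strategy is still $C$-feasible and uses fewer votes, which contradicts optimality. Hence $W$ wins $d$ and no other candidate can be made the winner cheaply, so $d$ is $W$-rich.

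For (ii), the set in question is the union of three parts: $\DD'_1$, $\DD'_2$, and the districts of $\DD_W$ untouched by $\SM$, namely $\DD_W\setminus(\DD_2\cup\DD_3)$. In none of these districts is $C$ the current winner: in $\DD_1\subseteq\DD\setminus(\DD_W\cup\DD_C)$ the winner is neither $W$ nor $C$, and elsewhere $W$ wins. So a $C$-poor district here would be one where $C$ can be made the winner with at most $\frac{\epsilon N}{2^8\alpha_C g(\epsilon)k}$ changes.
\begin{itemize}
\item If $d\in\DD'_1\cup\DD'_2$, strategy $\SM$ makes $C$ the winner of $d$ using $h(d)$ votes, which exceeds this threshold. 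Replacing that alteration by the cheaper one gives the same effect on the district counts with fewer votes, contradicting optimality.
\item If $d\in\DD_W\setminus(\DD_2\cup\DD_3)$, I use a swap. Lemma~\ref{lem:mov-lemma} guarantees $|\DD'_1\cup\DD'_2\cup\DD'_3|\ge1$, so pick some $d'$ in it. Modify $\SM$ by undoing all its changes in $d'$ and instead making $C$ the winner of $d$ cheaply. Undoing $d'$ raises $|\DD_W|-|\DD_C|$ by at most $2$ (by $1$ if $d'\in\DD_1\cup\DD_3$, by $2$ if $d'\in\DD_2$). Flipping $d$ from $W$ to $C$ lowers it by exactly $2$. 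So the modified strategy remains $C$-feasible, and it costs $h(d')$ minus something strictly positive. This is again a contradiction.
\end{itemize}

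The main obstacle I expect is justifying these replacements rigorously, in particular that optimality of $\SM$ really makes $h(d)$ the cheapest way to achieve a change of at least the same strength in $d$. I also need to handle tie-breaking, so that "can be made the winner" and $C$-feasibility (at least as many districts, ties allowed) interact correctly. I would resolve this by phrasing everything in terms of the quantity $|\DD_W|-|\DD_C|$ after alteration, and noting that each modification weakly decreases it while strictly decreasing the total number of altered votes.
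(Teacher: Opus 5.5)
Your proof is correct and follows essentially the same route as the paper. For (i), and for districts of $\DD'_1\cup\DD'_2$ in (ii), you replace $\SM$'s costly alteration in $d$ by a cheaper one; for $d\in\DD_W\setminus(\DD_2\cup\DD_3)$ you use the same swap with some $d'\in\DD'_1\cup\DD'_2\cup\DD'_3$, and you also spell out details the paper leaves implicit (the case where the cheap alteration in a $\DD'_3$ district makes $C$ the winner, and why $\DD'_1\cup\DD'_2\cup\DD'_3\neq\emptyset$).
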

\begin{proof}
    \begin{enumerate}[label=(\roman*)]
        \item Recall that in the strategy $\SM$, the winner is changed from $W$ to some candidate in $\CC \setminus \{W,C\}$ in the districts of $\DD_3$. Also by Lemma \ref{lem:mov-lemma}, we have $h(d)>\frac{\epsilon N}{2^5\alpha_{C}g(\epsilon)k}$ for all $d\in \DD'_3$. It follows from the optimality of the strategy $\SM$ that changing the winner requires altering at least $\frac{\epsilon}{2^5\alpha_{C}g(\epsilon)}$-fraction of votes in any district of $\DD'_3$, and thus all such districts are $W$-rich.
        \item First note that since the winner was changed to $C$ in the districts of $\DD_1 \cup \DD_2$ in the strategy $\SM$ and $h(d)>\frac{\epsilon N}{2^5\alpha_{C}g(\epsilon)k}$ for all $d\in \DD'_1\cup \DD'_2$, it follows from the optimality of the strategy $\SM$ that no district in $\DD'_1 \cup \DD'_2$ is $C$-poor. Suppose there exists a $C$-poor district $d\in \DD_W\setminus (\DD_2\cup \DD_3)$. Let $d' \in \DD'_1 \cup \DD'_2\cup \DD'_3$. Consider an alternate strategy $\SM'$ that alters the same set of votes as $\SM$ in the districts of $(\DD_1\cup \DD_2\cup \DD_3) \setminus \{d'\}$ and changes the winner from $W$ to $C$ in the district $d$. Clearly $\SM'$ is $C$-feasible and alters strictly fewer votes than $\SM$, contradicting the optimality of $\SM$.
    \end{enumerate}
\end{proof}


Let $\DD'_W := \DD_W \setminus (\DD_2\cup (\DD_3\setminus \DD'_3))$ and $\DD'_C:= \DD_C \cup (\DD_1\setminus \DD'_1)\cup (\DD_2\setminus \DD'_2)$ (see Figure \ref{fig:district-decomposition}). We now show the following result about $\DD'_W$ and $\DD'_C$.

\begin{lemma}
\label{lem:district-structure}
    The following statements hold.
    \begin{enumerate}[label=(\roman*)]
        \item If $|\DD_1|+2|\DD_2|+|\DD_3|=\Delta_C+1$, then all districts in $\DD'_W$ are $W$-rich, and no district in $\DD \setminus \DD'_C$ is $C$-poor.        
        \item If $|\DD_1|+2|\DD_2|+|\DD_3|=\Delta_C$, then the sum of the numbers of districts in $\DD'_W$ that are not $W$-rich and districts in $\DD \setminus \DD'_C$ that are $C$-poor is at most $1$.
    \end{enumerate}
\end{lemma}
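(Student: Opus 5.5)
The plan is to prove both parts with a single exchange argument against the optimality of $\SM$. The quantitative fact it rests on is the gap between two thresholds. A district that is not $W$-rich, or is $C$-poor, can be modified cheaply: at most $\frac{\epsilon N}{2^8\alpha_C g(\epsilon)k}$ votes, since all populations equal $N/k$. Every $d'\in\DD'_1\cup\DD'_2\cup\DD'_3$ has $h(d')>\frac{\epsilon N}{2^5\alpha_C g(\epsilon)k}$ by Lemma~\ref{lem:mov-lemma}. So dropping one such $d'$ from $\SM$ and adding up to $2^3$ cheap modifications strictly lowers the cost. Such a $d'$ exists because $|\DD'_1\cup\DD'_2\cup\DD'_3|\ge 1$. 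The only bookkeeping is the \emph{progress} $|\DD_1|+2|\DD_2|+|\DD_3|$ of a strategy. Any strategy with progress at least $\Delta_C$ is $C$-feasible. Dropping $d'$ loses $1$ if $d'\in\DD'_1\cup\DD'_3$ and $2$ if $d'\in\DD'_2$.

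First I will write both sets explicitly. We have $\DD'_W=(\DD_W\setminus(\DD_2\cup\DD_3))\cup\DD'_3$. A routine check gives $\DD\setminus\DD'_C=\bigl(\DD\setminus(\DD_C\cup\DD_1\cup\DD_2)\bigr)\cup\DD'_1\cup\DD'_2$. Lemma~\ref{lem:structure} disposes of several pieces. $\DD'_3$ is $W$-rich, no district of $\DD'_1\cup\DD'_2$ is $C$-poor, and no district of $\DD_W\setminus(\DD_2\cup\DD_3)$ is $C$-poor. The remaining bad districts fall into three types, each giving a cheap gain of at least $1$ in progress. (a) $d\in\DD_W\setminus(\DD_2\cup\DD_3)$ is not $W$-rich. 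Then some $X\neq C$ (by Lemma~\ref{lem:structure}(ii)) can be made the winner cheaply, giving a new $\DD_3$-type district with gain $+1$. (b) $d\in\DD_3$ is $C$-poor. Replace its modification by a cheap switch to $C$, which turns it into a $\DD_2$-type district with gain $+1$, and the cost changes by at most the threshold. If $d\in\DD'_3$, this is directly a contradiction, since making $C$ win there is cheaper than $h(d)$ and yields more progress. (c) $d$ lies outside $\DD_W\cup\DD_C\cup\DD_1$ and is $C$-poor. Add it as a $\DD_1$-type district, with gain $+1$. A district that is both non-$W$-rich and $C$-poor in $\DD_W$ already gives gain $+2$ alone by a switch to $C$.

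For part (i) the progress is $\Delta_C+1$. Take one bad district $d$ and any $d'\in\DD'_1\cup\DD'_2\cup\DD'_3$ with $d'\neq d$. Apply the gain at $d$ and drop $d'$. The new progress is at least $\Delta_C+1+1-2=\Delta_C$, so the strategy is $C$-feasible. The cost falls by at least $h(d')-\frac{\epsilon N}{2^8\alpha_C g(\epsilon)k}>0$, contradicting optimality. Hence there are no bad districts. For part (ii) the progress is exactly $\Delta_C$. Suppose two bad districts are counted, possibly the same district counted in both categories, which already gives gain $+2$. Their combined gain is at least $2$. Dropping one $d'$ loses at most $2$, so progress stays at least $\Delta_C$. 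The cost rises by at most $2\cdot\frac{\epsilon N}{2^8\alpha_C g(\epsilon)k}<h(d')$, again a contradiction. So the total count is at most $1$.

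I expect the main obstacle to be not the arithmetic but the case analysis ensuring each modification is legitimate. The chosen $d'$ must differ from the bad districts, which needs care when $|\DD'_1\cup\DD'_2\cup\DD'_3|=1$ and that district is itself bad. Modifying one district must not disturb the winner counts elsewhere, which holds because districts are independent. The identity of the new winner must also produce the claimed change in $|\DD_W|-|\DD_C|$. I would handle the coincidence $d=d'$ separately as in case (b). There the bad district is in $\DD'_3$, and optimality of $\SM$ within that single district gives the contradiction directly.
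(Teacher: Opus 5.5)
Your proposal is correct and is essentially the paper's argument. It is an exchange against the optimality of $\SM$: one district $d'\in\DD'_1\cup\DD'_2\cup\DD'_3$, with $h(d')>\frac{\epsilon N}{2^5\alpha_C g(\epsilon)k}$, is traded for at most two modifications costing at most $\frac{\epsilon N}{2^8\alpha_C g(\epsilon)k}$ each.

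Your uniform progress bookkeeping compares a loss of at most $2$ from dropping $d'$ against a gain of at least $1$ plus the slack in (i), or a gain of at least $2$ in (ii). This lets you skip the paper's preliminary steps showing $\DD_1\cup\DD_3=\emptyset$ in (i) and $\DD'_1\cup\DD'_3=\emptyset$ in (ii). It also covers the case of two bad districts of the same kind in (ii). The paper's proof only treats one non-$W$-rich $d_W$ together with one $C$-poor $d_C$, so it leaves that case implicit.
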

\begin{proof}
    \begin{enumerate}[label=(\roman*)]
        \item First, observe that in this case, $\DD_1\cup \DD_3 = \emptyset$, otherwise we can consider an alternate strategy $\SM'$ that alters the same set of votes as $\SM$ except in one district of $\DD_1\cup \DD_3$. $\SM'$ is $C$-feasible since in the resulting election, $C$ and $W$ win in equal number of districts, thus contradicting the optimality of $\SM$. Suppose there exists a district $d \in \DD'_W$ that is not $W$-rich. Then in $d$, the winner can be changed from $W$ to some other candidate by changing only $\frac{\epsilon N}{2^8\alpha_{C}g(\epsilon)k}$ votes. Let $d' \in \DD'_2$. By Lemma \ref{lem:mov-lemma}, at least $\frac{\epsilon N}{2^5\alpha_{C}g(\epsilon)k}$ votes were altered in $d'$ in the strategy $\SM$. Consider a strategy $\overline{\SM}$ that alters the same set of votes as $\SM$ in the districts of $\DD_2 \setminus \{d'\}$ and changes the winner from $W$ to some other candidate in the district $d$ (which requires altering only $\frac{\epsilon N}{2^8\alpha_{C}g(\epsilon)k}$ votes). Observe that $\overline{\SM}$ is $C$-feasible since in the resulting election, $C$ and $W$ win in equal number of districts. But then $\overline{\SM}$ alters strictly less number of votes than $\SM$, a contradiction. Similarly if there exists a $d \in \DD \setminus \DD'_C$ that is $C$-poor, then by Lemma \ref{lem:structure}(ii) since no district in $\DD'_2$ is $C$-poor, it must be the case that $d\in \DD\setminus (\DD_W \cup \DD_C)$. Again, we could consider the strategy $\overline{\SM}$ that alters the same set of votes as $\SM$ in the districts of $\DD_2\setminus \{d'\}$ and changes the winner to $C$ in the district $d$. Clearly $\overline{\SM}$ is $C$-feasible and alters strictly less votes than $\SM$, a contradiction.
        
        \item Suppose there exists a district $d_W \in \DD'_W$ that is not $W$-rich and a district $d_C \in \DD \setminus \DD'_C$ that is $C$-poor. By Lemma \ref{lem:structure}(i), we must have $d_W\in \DD_W\setminus (\DD_2\cup \DD_3)$ and by Lemma \ref{lem:structure}(ii), we have $d_C\in (\DD \setminus (\DD_W\cup \DD_C \cup \DD_1))\cup \DD_3$. We first show that $\DD'_1 \cup \DD'_3 = \emptyset$ in this case. Assume to the contrary that there exists $d'\in \DD'_1 \cup \DD'_3$. Then we can consider an alternate $C$-feasible strategy $\SM'$ that alters the same set of votes as $\SM$ in the districts of $(\DD_1\cup \DD_2\cup \DD_3) \setminus \{d'\}$ and changes the winner from $W$ to some other candidate in the district $d_W$ (which requires altering at most $\frac{\epsilon N}{2^8\alpha_{C}g(\epsilon)k}$ votes). Clearly $\SM'$ alters fewer votes than $\SM$, a contradiction.       
        Next consider any $d' \in \DD'_2$. If $d\in \DD\setminus (\DD_W \cup \DD_C \cup \DD_1)$, we let $\overline{\SM}$ be the strategy that alters the same set of votes as $\SM$ in the districts of $(\DD_1\cup \DD_2\cup \DD_3) \setminus \{d'\}$, changes the winner from $W$ to some other candidate in $d_W$, and changes the winner to $C$ in the district $d_C$. Otherwise if $d\in \DD_3$, let $\overline{\SM}$ be the strategy that alters the same set of votes as $\SM$ in the districts of $(\DD_1\cup \DD_2\cup \DD_3) \setminus \{d,d'\}$, changes the winner from $W$ to some other candidate in $d_W$, and changes the winner to $C$ in the district $d_C$.
        Since by Lemma \ref{lem:mov-lemma}, we have $h(d') > \frac{\epsilon N}{2^5\alpha_{C}g(\epsilon) k}$, and the total number of votes altered in the districts $d_W$ and $d_C$ is only at most $\frac{\epsilon N}{2^7\alpha_{C}g(\epsilon) k}$, $\overline{\SM}$ alters strictly fewer votes than $\SM$, contradicting the optimality of $\SM$.       
    \end{enumerate}
\end{proof}

Let $\DD_W^{\text{rich}}$ be the set of $W$-rich districts and $\DD_C^{\text{poor}}$ be the set of $C$-poor districts.



\begin{lemma}
\label{lem:difference-large}
    It holds that $|\DD_W^{\text{rich}}|-|\DD_C^{\text{poor}}|\ge \max\{\alpha_{C}k-1,0\}$.
\end{lemma}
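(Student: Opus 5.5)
The plan is to compare $\DD_W^{\text{rich}}$ with $\DD'_W$ and $\DD_C^{\text{poor}}$ with $\DD'_C$. We first compute $|\DD'_W|-|\DD'_C|$ exactly from the structure of the optimal strategy $\SM$. Lemma~\ref{lem:district-structure} then controls how far the rich/poor sets can deviate from $\DD'_W$ and $\DD'_C$.

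\textbf{Step 1 (counting).} All the sets involved are pairwise disjoint where needed: $\DD_1\subseteq \DD\setminus(\DD_W\cup\DD_C)$, while $\DD_2,\DD_3\subseteq\DD_W$ are disjoint, and $\DD'_i\subseteq\DD_i$. Hence
\[|\DD'_W| = |\DD_W|-|\DD_2|-|\DD_3|+|\DD'_3|,\qquad |\DD'_C| = |\DD_C|+|\DD_1|-|\DD'_1|+|\DD_2|-|\DD'_2|.\]
Subtracting, and using $|\DD'_1\cup\DD'_2\cup\DD'_3|=|\DD'_1|+|\DD'_2|+|\DD'_3|$ (by disjointness), gives
\[|\DD'_W|-|\DD'_C| = \Delta_C-\bigl(|\DD_1|+2|\DD_2|+|\DD_3|\bigr)+|\DD'_1\cup\DD'_2\cup\DD'_3|.\]
By Lemma~\ref{lem:mov-lemma}, the last term is at least $\max\{\alpha_C k,1\}$.

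\textbf{Step 2 (transfer to rich/poor).} We have
\[|\DD_W^{\text{rich}}|\ge|\DD'_W|-a,\qquad a:=\#\{d\in\DD'_W : d \text{ not } W\text{-rich}\},\]
\[|\DD_C^{\text{poor}}|\le|\DD'_C|+b,\qquad b:=\#\{d\in\DD\setminus\DD'_C : d\ C\text{-poor}\}.\]
Therefore $|\DD_W^{\text{rich}}|-|\DD_C^{\text{poor}}|\ge|\DD'_W|-|\DD'_C|-(a+b)$.

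\textbf{Step 3 (the two cases).} Recall that $|\DD_1|+2|\DD_2|+|\DD_3|\in\{\Delta_C,\Delta_C+1\}$.
\begin{itemize}
\item If it equals $\Delta_C+1$, Lemma~\ref{lem:district-structure}(i) gives $a=b=0$. The difference is then at least $\max\{\alpha_Ck,1\}-1=\max\{\alpha_Ck-1,0\}$.
\item If it equals $\Delta_C$, Lemma~\ref{lem:district-structure}(ii) gives $a+b\le1$. The difference is then at least $\max\{\alpha_Ck,1\}-1$, which is the same bound.
\end{itemize}

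There is no real obstacle, since the hard structural work is already done in Lemmas~\ref{lem:structure} and~\ref{lem:district-structure}. The only point needing care is the disjointness bookkeeping in Step 1, in particular that $\DD'_3\subseteq\DD_3$ is added back into $\DD'_W$ while $\DD_1\setminus\DD'_1$ lies outside $\DD_W\cup\DD_C$.
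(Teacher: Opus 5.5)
Your proposal is correct and matches the paper's proof essentially line for line. The paper also uses $|\DD_W^{\text{rich}}|\ge|\DD'_W|-x_W$ and $|\DD_C^{\text{poor}}|\le|\DD'_C|+x_C$, expands both through the disjoint decomposition, and splits into the cases $|\DD_1|+2|\DD_2|+|\DD_3|\in\{\Delta_C+1,\Delta_C\}$ via Lemma~\ref{lem:district-structure}(i),(ii) together with $|\DD'_1\cup\DD'_2\cup\DD'_3|\ge\max\{\alpha_C k,1\}$; your explicit disjointness check just makes visible what the paper leaves implicit.
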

\begin{proof}
    Let $x_W$ be the number of districts in $\DD'_W$ that are not $W$-rich, and $x_C$ be the number of districts in $\DD \setminus \DD'_C$ that are $C$-poor. We divide the proof into two cases.


    \textbf{Case 1:} $|\DD_1|+2|\DD_2|+|\DD_3|=\Delta_C+1$ - In this case, we have $x_W=x_C=0$ from Lemma \ref{lem:district-structure}(i). Therefore
    \[|\DD_W^{\text{rich}}| \ge |\DD'_W|= |\DD_W|-|\DD_2|-(|\DD_3|-|\DD'_3|).\]
    Similarly,
    \[ |\DD_C^{\text{poor}}| \le |\DD'_C|= |\DD_C|+(|\DD_1|-|\DD'_1|)+(|\DD_2|-|\DD'_2|).\]
    Therefore, 
    \[|\DD_W^{\text{rich}}|-|\DD_C^{\text{poor}}|\ge (|\DD_W|-|\DD_C|)-(|\DD_1|+2|\DD_2|+|\DD_3|)+(|\DD'_1|+|\DD'_2|+|\DD'_3|)\ge \max\{\alpha_{C}k-1,0\}.\]

    \textbf{Case 2:} $|\DD_1|+2|\DD_2|+|\DD_3|=\Delta_C$ - In this case, $x_W+x_C\le 1$ from Lemma \ref{lem:district-structure}(ii). Similar to the previous case, we have 
    \[|\DD_W^{\text{rich}}| \ge |\DD'_W|-x_W = |\DD_W|-|\DD_2|-(|\DD_3|-|\DD'_3|)-x_W,\]
    and
    \[ |\DD_C^{\text{poor}}| \le |\DD'_C|+x_C= |\DD_C|+(|\DD_1|-|\DD'_1|)+(|\DD_2|-|\DD'_2|)+x_C.\]
    Hence
    \[|\DD_W^{\text{rich}}|-|\DD_C^{\text{poor}}|\ge (|\DD_W|-|\DD_C|)-(|\DD_1|+2|\DD_2|+|\DD_3|)+(|\DD'_1|+|\DD'_2|+|\DD'_3|)-(x_W+x_C)\ge \max\{\alpha_{C}k-1,0\},\]
    and we are done.
\end{proof}


\section{Algorithm for equal population}
\label{sec:alg-equal-population}
As a consequence of Lemma \ref{lem:difference-large}, if we are able to predict $W$ as the winner in the districts of $\DD_W^{\text{rich}}$, and not predict $C$ as the winner in any district outside $\DD_C^{\text{poor}}$, we would still have a gap of $\max\{\alpha_{C}k-1,0\}$ districts between these sets. Also observe that $\widetilde{O}(\frac{\alpha_C^2}{\epsilon^2})$ samples inside a district suffice to decide whether it belongs to $\DD_W^{\text{rich}}$ or $\DD_C^{\text{poor}}$ (or neither). This is the main intuition behind our algorithm, which is summarised in Algorithm \ref{alg:WP_knowndist}.


\begin{algorithm}[H]
\caption{}
\label{alg:WP_knowndist}
\begin{algorithmic}[1]
\Require $\delta \in (0,1)$.

\State $\widetilde{\epsilon} \gets \frac{1}{2}$.
\While{$\widetilde{\epsilon} > \max\{\frac{2^{15}}{k}\log \frac{\log(1/\epstl)}{\delta\epstl},\frac{1}{\sqrt{N}}\}$}
\State $\widetilde{\alpha} \gets \frac{1}{2}$. 
 \While{$\widetilde{\alpha} \geq \widetilde{\epsilon}/16$} 
        \LongState{
            Sample a set $\widetilde{\DD}_{ \widetilde{\epsilon}, \widetilde{\alpha}}$ of 
             $q(\frac{\widetilde{\alpha}}{2^7}, \frac{\delta\epstl}{16m\log(1/\widetilde{\epsilon})})$ 
            districts uniformly at random and with replacement.
            }

        \LongState{
            From each district in $\widetilde{\DD}_{ \widetilde{\epsilon}, \widetilde{\alpha}}$, 
            sample and query $\ell_{\widetilde{\epsilon}, \widetilde{\alpha}}:=q(\frac{\widetilde{\epsilon}}{2^{11} \widetilde{\alpha}\log(1/\widetilde{\epsilon})}, \frac{\widetilde{\epsilon}}{2^{14}})$ 
            votes uniformly at random and with replacement.
            }\label{line:samples}
        
        \LongState{
            For each $C \in \CC$, let 
            $\widehat{\DD}_{\widetilde{\epsilon}, \widetilde{\alpha}, C}\subseteq
            \widetilde{\DD}_{\widetilde{\epsilon},  \widetilde{\alpha}}$ be the districts where 
            the fraction of sampled votes received by $C$ exceeds the fraction 
            received by any $C'\neq C$ by strictly more than $\frac{\widetilde{\varepsilon}}{2^9\widetilde{\alpha} \log(1/\widetilde{\epsilon})}$ and let $\veehat{\DD}_{\widetilde{\epsilon}, \widetilde{
            \alpha}, C} \subseteq
            \widetilde{\DD}_{\widetilde{\epsilon}, \widetilde{\alpha}}$ be the districts in which $C$ is either the winner or can be made the winner by altering at most $\frac{\widetilde{\varepsilon}}{2^9\widetilde{\alpha} \log(1/\widetilde{\epsilon})}$-fraction of sampled votes.}

        \LongState{
        For each $C\in \CC$, let 
        $\widehat{f}_{\widetilde{\epsilon}, \widetilde{\alpha}}(C):= \frac{|\widehat{\DD}_{\widetilde{\epsilon}, \widetilde{\alpha}, C}|}{|\DD_{\widetilde{\epsilon}, \widetilde{\alpha}}|}k$ and $\veehat{f}_{\widetilde{\epsilon}, \widetilde{\alpha}}(C):=  \frac{|\veehat{\DD}_{\widetilde{\epsilon}, \widetilde{\alpha}, C}|}{|\DD_{\widetilde{\epsilon}, \widetilde{\alpha}}|}k$
        }
     \State $\widetilde{\alpha} \gets \frac{\widetilde{\alpha}}{2}$. 
   \EndWhile

\algstore{algo-equal-pop}
\end{algorithmic}
\end{algorithm}
\begin{algorithm}[H]
\begin{algorithmic}[1]
\algrestore{algo-equal-pop}
   
    \If{$\exists C\in\CC$ such that $\forall C'\neq C, \exists \widetilde{\alpha}_{C'}$ such that
        $\widehat{f}_{\widetilde{\epsilon},\widetilde{\alpha}_{C'}}(C)-\veehat{f}_{\widetilde{\epsilon},\widetilde{\alpha}_{C'}}(C') \ge \frac{\widetilde{\alpha}_{C'}k}{2}$}
            \State \textbf{return} $C$ as the winner. \label{line:big-epsilon}
    \EndIf
    \State $\widetilde{\epsilon} \gets \frac{\widetilde{\epsilon}}{2}$.
\EndWhile

\While{$\widetilde{\epsilon} > \frac{1}{\sqrt{N}}$}
\State $\widetilde{\alpha} \gets \frac{1}{2}$. 
    \While{$\widetilde{\alpha} \ge \frac{1}{2k}$} 
        \If{$\widetilde{\alpha} > \frac{8}{k}$}

            \LongState{
                Sample a set $\widetilde{\DD}_{ \widetilde{\epsilon}, \widetilde{\alpha}}$ of 
                 $q(\frac{\widetilde{\alpha}}{2^7}, \frac{\delta\epstl}{16m\log (2k)})$ 
                districts uniformly at random and with replacement.
                }
        \Else
            \State $\widetilde{\DD}_{ \widetilde{\epsilon}, \widetilde{\alpha}} \gets \DD$.
        \EndIf
            
        \LongState{
            From each district in $\widetilde{\DD}_{ \widetilde{\epsilon}, \widetilde{\alpha}}$, 
            sample and query $\ell_{\widetilde{\epsilon}, \widetilde{\alpha}}:=q(\frac{\widetilde{\epsilon}}{2^{11} \widetilde{\alpha}\log (2k)}, \frac{\delta\epstl}{2k\log (2k)})$ 
            votes uniformly at random and with replacement.
            }    

        \LongState{
            For each $C \in \CC$, let 
            $\widehat{\DD}_{\widetilde{\epsilon}, \widetilde{\alpha}, C}\subseteq
            \widetilde{\DD}_{\widetilde{\epsilon},  \widetilde{\alpha}}$ be the districts where 
            the fraction of sampled votes received by $C$ exceeds the fraction 
            received by any $C'\neq C$ by strictly more than $\frac{\widetilde{\varepsilon}}{2^9\widetilde{\alpha} \log (2k)}$ and let $\veehat{\DD}_{\widetilde{\epsilon}, \widetilde{
            \alpha}, C} \subseteq
            \widetilde{\DD}_{\widetilde{\epsilon}, \widetilde{\alpha}}$ be the districts in which $C$ is either the winner or can be made the winner by altering at most $\frac{\widetilde{\varepsilon}}{2^9\widetilde{\alpha} \log (2k)}$-fraction of sampled votes.}

        \LongState{
        For each $C\in \CC$, let 
        $\widehat{f}_{\widetilde{\epsilon}, \widetilde{\alpha}}(C):= \frac{|\widehat{\DD}_{\widetilde{\epsilon}, \widetilde{\alpha}, C}|}{|\DD_{\widetilde{\epsilon}, \widetilde{\alpha}}|}k$ and $\veehat{f}_{\widetilde{\epsilon}, \widetilde{\alpha}}(C):=  \frac{|\veehat{\DD}_{\widetilde{\epsilon}, \widetilde{\alpha}, C}|}{|\DD_{\widetilde{\epsilon}, \widetilde{\alpha}}|}k$
        }
     \State $\widetilde{\alpha} \gets \frac{\widetilde{\alpha}}{2}$. 

    \EndWhile
    \If{$\exists C\in\CC$ such that $\forall C'\neq C, \exists \widetilde{\alpha}_{C'}$ such that
        $\widehat{f}_{\widetilde{\epsilon},\widetilde{\alpha}_{C'}}(C)-\veehat{f}_{\widetilde{\epsilon},\widetilde{\alpha}_{C'}}(C') \ge \frac{\widetilde{\alpha}_{C'}k}{2}$ if $\widetilde{\alpha}_{C'}> \frac{8}{k}$, and $\widehat{f}_{\widetilde{\epsilon},\widetilde{\alpha}_{C'}}(C)-\veehat{f}_{\widetilde{\epsilon},\widetilde{\alpha}_{C'}}(C') \ge 0$ otherwise} 
            \State \textbf{return} $C$ as the winner.\label{line:tiny-epsilon}
    \EndIf
\EndWhile

\State Query all votes and return the winner.\label{line:return}
\end{algorithmic}
\end{algorithm}

In order to prove the correctness of our algorithm, we need to establish that whenever the algorithm returns some candidate as the winner for a particular pair of values $(\widetilde{\epsilon},\widetilde{\alpha})$, that candidate must be $W$ with high probability. Further, to bound the query complexity, we shall show that when the estimate $\widetilde{\epsilon}$ reaches $\epsilon$, the algorithm halts with high probability.


To this end, we call a district $d\in \widetilde{D}_{\widetilde{\epsilon},\widetilde{\alpha}}$ to be \emph{good} if for every $C\in \CC$, the difference between the fractions of actual and sampled votes received by $C$ in $d$ is at most $\frac{\widetilde{\epsilon}}{2^{11} \widetilde{\alpha}g(\widetilde{\epsilon})}$, and \emph{bad} otherwise.

\begin{lemma}
\label{lem:good-districts}
    The following statements hold.
    \begin{enumerate}[label=(\roman*)]
        \item For each big $\widetilde{\epsilon}$, with probability at least $1-\frac{\delta\epstl}{2}$, it holds that for every $\widetilde{\alpha}$, at least $(1-\frac{\widetilde{\alpha}}{2^6})$-fraction of districts of $\widetilde{D}_{\widetilde{\epsilon},\widetilde{\alpha}}$ are good.
        \item For each tiny $\widetilde{\epsilon}$, with probability at least $1-\frac{\delta\epstl}{2}$, it holds that for every $\widetilde{\alpha}$, all districts of $\widetilde{D}_{\widetilde{\epsilon},\widetilde{\alpha}}$ are good.       
    \end{enumerate}
\end{lemma}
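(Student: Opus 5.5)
The plan is to apply Lemma~\ref{lem:dbmain} inside each sampled district, then a Chernoff bound over the sampled districts, then a union bound over the values of $\widetilde{\alpha}$.

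\textbf{Per-district guarantee.} Fix $\widetilde{\epsilon}$ and $\widetilde{\alpha}$. Within a sampled district $d$, the $\ell_{\widetilde{\epsilon},\widetilde{\alpha}}$ votes are drawn uniformly with replacement from the single-district election defined by $d$. So Lemma~\ref{lem:dbmain} applies with accuracy $\gamma=\frac{\widetilde{\epsilon}}{2^{11}\widetilde{\alpha}g(\widetilde{\epsilon})}$; for big $\widetilde{\epsilon}$ we have $g(\widetilde{\epsilon})=\log(1/\widetilde{\epsilon})$, matching line~\ref{line:samples}. It follows that each sampled district is bad with probability at most the failure parameter used. The sampling of districts and the sampling inside distinct copies are independent, and districts are drawn with replacement, so the events ``copy $j$ is bad'' are independent across the $s:=q(\frac{\widetilde{\alpha}}{2^7},\cdot)$ copies.

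\textbf{Tiny case (ii).} The failure parameter per district is $\frac{\delta\widetilde{\epsilon}}{2k\log(2k)}$. Each $\widetilde{\DD}_{\widetilde{\epsilon},\widetilde{\alpha}}$ has at most $O(k)$ districts: when $\widetilde{\alpha}\le 8/k$ it is exactly $\DD$. When $\widetilde{\alpha}>8/k$ its size is $q(\widetilde{\alpha}/2^7,\cdot)$; I need this to be at most about $k$. If it is larger, I will instead rely on the per-call failure parameters being summable as set in the algorithm. There are at most $\log(2k)$ values of $\widetilde{\alpha}$. A union bound over all districts in all these sets gives total failure probability at most about $\log(2k)\cdot k\cdot\frac{\delta\widetilde{\epsilon}}{2k\log(2k)}=\frac{\delta\widetilde{\epsilon}}{2}$. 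This part is routine, up to the constant bookkeeping on the sizes of the sampled sets.

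\textbf{Big case (i).} Here each copy is bad independently with probability at most $p=\widetilde{\epsilon}/2^{14}$. Since $\widetilde{\alpha}\ge\widetilde{\epsilon}/16$, we get $p\le \widetilde{\alpha}/2^{10}$, which is at most a $1/16$ fraction of the target $\widetilde{\alpha}/2^6$. The number of bad copies is a sum of $s$ independent Bernoullis with mean at most $ps\le \frac{\widetilde{\alpha}}{2^{10}}s$. By a multiplicative Chernoff bound (the form $\Pr[X\ge t]\le 2^{-t}$ for $t\ge 6\mathbb{E}X$), the probability that more than $\frac{\widetilde{\alpha}}{2^6}s$ copies are bad is at most $2^{-\widetilde{\alpha}s/2^6}$. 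Now $s=\Theta\bigl(\frac{2^{14}}{\widetilde{\alpha}^2}\log\frac{16m\log(1/\widetilde{\epsilon})}{\delta\widetilde{\epsilon}}\bigr)$, so $\widetilde{\alpha}s/2^6\gtrsim \frac{1}{\widetilde{\alpha}}\log\frac{\log(1/\widetilde{\epsilon})}{\delta\widetilde{\epsilon}}$. Hence this probability is at most $\frac{\delta\widetilde{\epsilon}}{2\log(1/\widetilde{\epsilon})}$, using $\widetilde{\alpha}\le 1/2$ and the hidden constant in $q$. There are at most $\log(16/\widetilde{\epsilon})=O(\log(1/\widetilde{\epsilon}))$ values of $\widetilde{\alpha}$, and a union bound over them gives failure probability at most $\delta\widetilde{\epsilon}/2$, after adjusting constants.

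\textbf{Main obstacle.} The hardest step is making the Chernoff tail small enough when $\widetilde{\alpha}$ is near $1/2$. There the deviation threshold $\widetilde{\alpha}s/2^6$ is only about $\frac{1}{\widetilde{\alpha}}\log(\cdot)$, not $\frac{1}{\widetilde{\alpha}^2}\log(\cdot)$. I expect it still suffices because the $1/\widetilde{\alpha}^2$ in $s$ leaves one factor $1/\widetilde{\alpha}\ge 2$ to spare. The key ingredient is the ratio between $p$ and $\widetilde{\alpha}/2^6$, which is what justifies the strong-deviation form of the Chernoff bound.
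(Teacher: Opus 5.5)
\textbf{There is a genuine gap.} Parts (i) and (ii) rely on two incompatible models of the vote samples for districts that are drawn more than once.

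In (i) you treat the badness indicators of the $s$ copies in $\widetilde{\DD}_{\widetilde{\epsilon},\widetilde{\alpha}}$ as independent. That requires every copy to get its own fresh sample of votes. Under that model your union bound in (ii) fails, because it must run over all $s$ copies. For tiny $\widetilde{\epsilon}$, $k>16$ and $\widetilde{\alpha}\in(8/k,16/k]$, the number $s=q(\widetilde{\alpha}/2^7,\cdot)$ is of order $k^2$ up to logarithmic factors. With per-copy failure probability $\frac{\delta\widetilde{\epsilon}}{2k\log(2k)}$, the union bound then exceeds $\delta\widetilde{\epsilon}/2$ by a factor of order $k$. That parameter is calibrated for a union over $k$ events only, and your fallback about ``summable per-call failure parameters'' does not supply an argument.

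The idea you are missing for (ii) is the paper's coupling. Draw one sample of $\ell_{\widetilde{\epsilon},\widetilde{\alpha}}$ votes in advance for every district of $\DD$, and let every copy of a district observe that same sample. Then all districts of $\widetilde{\DD}_{\widetilde{\epsilon},\widetilde{\alpha}}$ are good as soon as all $k$ districts of $\DD$ are. That is a union bound over $k$ events regardless of $|\widetilde{\DD}_{\widetilde{\epsilon},\widetilde{\alpha}}|$, followed by a union bound over the at most $\log(2k)$ values of $\widetilde{\alpha}$.

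Under this coupling, though, your argument for (i) breaks. Copies of the same district share their badness, so the indicators are positively correlated; when $s\gg k$ a single bad district contributes about $s/k$ bad copies. Your Chernoff bound over the $s$ copies, with tail $2^{-\widetilde{\alpha}s/2^6}$, is therefore unjustified. The paper proves (i) in two steps.
\begin{enumerate}[label=(\alph*)]
\item The number of bad districts among all $k$ districts of $\DD$ is a sum of $k$ independent indicators with mean at most $\widetilde{\epsilon}k/2^{14}$. By Lemma~\ref{lem:chernoff-large-dev} it exceeds $\widetilde{\epsilon}k/2^{11}$ with probability at most $2^{-\widetilde{\epsilon}k/2^{11}}$. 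This is small precisely because $\widetilde{\epsilon}$ is big, i.e.\ $k$ is large. Your argument never uses bigness, which is a symptom of the modelling mismatch.
\item The district sample is independent of the votes, so Lemma~\ref{lem:dbmain} applies with accuracy $\widetilde{\alpha}/2^7$ and failure probability $\frac{\delta\widetilde{\epsilon}}{16m\log(1/\widetilde{\epsilon})}$. It shows the bad fraction in $\widetilde{\DD}_{\widetilde{\epsilon},\widetilde{\alpha}}$ is at most $\widetilde{\epsilon}/2^{11}+\widetilde{\alpha}/2^7\le\widetilde{\alpha}/2^6$, using $\widetilde{\alpha}\ge\widetilde{\epsilon}/16$. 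A union bound over the $\log(16/\widetilde{\epsilon})$ values of $\widetilde{\alpha}$ finishes.
\end{enumerate}

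In short:
\begin{itemize}
\item With fresh samples per copy, your (i) is fine but your union bound in (ii) fails.
\item With the paper's shared samples, your (ii) works once you count distinct districts, but (i) needs the two-step argument above.
\end{itemize}
The paper's route also uses Lemma~\ref{lem:dbmain} only as a black-box accuracy guarantee. Your Chernoff step additionally needs a lower bound on $q$ with a suitable constant.
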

\begin{proof}
\begin{enumerate}[label=(\roman*)]
    \item We fix a value of $\widetilde{\alpha}$ and consider the collection of sampled districts and votes obtained at the end of line \ref{line:samples}. An alternate way of obtaining this collection is the following -- we imagine we have a sample of $\ell_{\widetilde{\epsilon},\widetilde{\alpha}}$ votes uniformly at random and with replacement from each district in $\DD$, and then we sample $q(\frac{\widetilde{\alpha}}{2^7},\frac{\delta\epstl}{16m\log(1/\widetilde{\epsilon})})$ districts uniformly at random and with replacement and observe the sampled votes in those districts. By Lemma \ref{lem:dbmain}, the probability that a district is good is at least $1-\frac{\widetilde{\epsilon}}{2^{14}}$, and thus the expected number of bad districts is at most $\frac{\widetilde{\epsilon}k}{2^{14}}$. Therefore, by Lemma \ref{lem:chernoff-large-dev}, the probability that the number of bad districts exceeds $\frac{\widetilde{\epsilon} k}{2^{11}}$ is at most $2^{-\widetilde{\epsilon}k/2^{11}}\le \frac{\delta\epstl}{4\log(16/\widetilde{\epsilon})}$, where the last inequality follows since $\widetilde{\epsilon}$ is big. Conditioned on this, by Lemma \ref{lem:dbmain}, with probability at least $1-\frac{\delta\epstl}{16m\log(1/\widetilde{\epsilon})}\ge 1-\frac{\delta\epstl}{4\log (16/\widetilde{\epsilon})}$, the fraction of bad districts in $\widetilde{D}_{\widetilde{\epsilon},\widetilde{\alpha}}$ is at most $\frac{\widetilde{\epsilon}}{2^{11}}+\frac{\widetilde{\alpha}}{2^7}\le \frac{\widetilde{\alpha}}{2^6}$, since $\widetilde{\alpha}\ge \widetilde{\epsilon}/16$. Therefore, for every $\widetilde{\alpha}$, with probability at least $1-\frac{\delta\epstl}{2\log(16/\widetilde{\epsilon})}$, it holds that at least $(1-\frac{\widetilde{\alpha}}{2^6})$-fraction of districts of $\widetilde{D}_{\widetilde{\epsilon},\widetilde{\alpha}}$ are good. The claim then follows by taking a union bound over all the $\log(16/\widetilde{\epsilon})$ values of $\widetilde{\alpha}$.
    \item Once again, we consider the alternate view of obtaining the collection of sampled districts and votes -- we sample $\ell_{\widetilde{\epsilon},\widetilde{\alpha}}$ votes uniformly at random and with replacement from each district in $\DD$; if $\widetilde{\alpha}>8/k$, we sample $q(\frac{\widetilde{\alpha}}{2^7}, \frac{\delta\epstl}{16m\log (2k)})$ districts uniformly at random and with replacement, and observe the votes sampled in that district, otherwise if $\widetilde{\alpha}\le 8/k$, we observe the sampled votes in all districts. In this case, the probability that a district is good is at least $1-\frac{\delta\epstl}{2k\log (2k)}$ due to Lemma \ref{lem:dbmain}, and thus by a union bound over all districts, we get that the probability that all districts are good is at least $1-\frac{\delta\epstl}{2\log (2k)}$. Since the number of possible values of $\widetilde{\alpha}$ is at most $\log (2k)$, a union bound gives us that with probability at least $1-\frac{\delta\epstl}{2}$, for each value of $\widetilde{\alpha}$, all districts in $\widetilde{D}_{\widetilde{\epsilon},\widetilde{\alpha}}$ are good.
\end{enumerate}     
\end{proof}

Next, for any $p>0$ and $C\in \CC$, we let $\DD_{\widetilde{\epsilon}, \widetilde{\alpha},C,p}^{\text{rich}} \subseteq \DD$ be the districts where the fraction of votes received by $C$ exceeds the fraction received by any $C'\neq C$ by strictly more than $\frac{\widetilde{\varepsilon}}{p\widetilde{\alpha} g(\widetilde{\epsilon})}$. Also, let $\DD_{\widetilde{\epsilon}, \widetilde{\alpha},C,p}^{\text{poor}} \subseteq \DD$ be the districts where $C$ is either the winner or can be made the winner by altering at most $\frac{\widetilde{\varepsilon}}{p\widetilde{\alpha} g(\widetilde{\epsilon})}$-fraction of votes.

\begin{lemma}
\label{lem:large-small-districts}
    For any $\widetilde{\epsilon}$, with probability at least $1-\frac{\delta\epstl}{2}$, both the following events hold for each $\widetilde{\alpha}$ and each candidate $C\in \CC$.
    \begin{enumerate}[label=(\roman*)]
        \item The fraction of districts in $\widetilde{\DD}_{\widetilde{\epsilon}, \widetilde{\alpha}}$ that belong to $\DD_{\widetilde{\epsilon}, \widetilde{\alpha},C,p}^{\text{rich}}$ lies in $(\frac{|\DD_{\widetilde{\epsilon}, \widetilde{\alpha},C,p}^{\text{rich}}|}{k}-\frac{\widetilde{\alpha}}{2^7},\frac{|\DD_{\widetilde{\epsilon}, \widetilde{\alpha},C,p}^{\text{rich}}|}{k}+\frac{\widetilde{\alpha}}{2^7})$.
        \item The fraction of districts in $\widetilde{\DD}_{\widetilde{\epsilon}, \widetilde{\alpha}}$ that belong to $\DD_{\widetilde{\epsilon}, \widetilde{\alpha},C,p}^{\text{poor}}$ lies in $(\frac{|\DD_{\widetilde{\epsilon}, \widetilde{\alpha},C,p}^{\text{poor}}|}{k}-\frac{\widetilde{\alpha}}{2^7},\frac{|\DD_{\widetilde{\epsilon}, \widetilde{\alpha},C,p}^{\text{poor}}|}{k}+\frac{\widetilde{\alpha}}{2^7})$. 
    \end{enumerate}
\end{lemma}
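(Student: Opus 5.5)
The plan is to apply Lemma~\ref{lem:dbmain} at the level of districts rather than votes. Fix $\widetilde{\epsilon}$, $\widetilde{\alpha}$, $p$ and a candidate $C$. Consider the single ``meta-election'' whose voters are the $k$ districts. For the rich event, each district casts the vote ``in $\DD_{\widetilde{\epsilon}, \widetilde{\alpha},C,p}^{\text{rich}}$'' or ``not in it''; the poor event uses the analogous binary labelling. Sampling $\widetilde{\DD}_{\widetilde{\epsilon}, \widetilde{\alpha}}$ uniformly at random with replacement is then exactly the non-adaptive sampler of Lemma~\ref{lem:dbmain} run on this meta-election. The empirical fraction of sampled districts carrying a given label is the estimate $\hat\theta$, and the true fraction $|\DD_{\widetilde{\epsilon}, \widetilde{\alpha},C,p}^{\text{rich}}|/k$ (resp.\ poor) is $\theta$. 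The crucial observation is that the membership labels depend only on the true vote counts in each district, not on the sampled votes. So the labelling is a fixed function of the election, independent of the within-district vote samples, and Lemma~\ref{lem:dbmain} applies directly.

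When $\widetilde{\alpha}>8/k$, or $\widetilde{\epsilon}$ is big, the number of sampled districts is $q(\frac{\widetilde{\alpha}}{2^7}, \delta')$. Here $\delta' = \frac{\delta\epstl}{16m\log(1/\widetilde{\epsilon})}$ in the big case and $\delta'=\frac{\delta\epstl}{16m\log(2k)}$ in the tiny case. By Lemma~\ref{lem:dbmain}, for fixed $(\widetilde{\alpha}, C)$ and a fixed one of the two events, the empirical fraction lies within $\widetilde{\alpha}/2^7$ of the truth with probability at least $1-\delta'$. To get the strict open interval, I would either invoke the lemma with a marginally smaller $\gamma$ or absorb the boundary case into the constants. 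When $\widetilde{\alpha}\le 8/k$ in the tiny regime, $\widetilde{\DD}_{\widetilde{\epsilon}, \widetilde{\alpha}}=\DD$. The empirical fraction then equals the true fraction exactly, so the event holds deterministically.

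Finally, take a union bound over the two events (i) and (ii), the $m$ candidates, and the values of $\widetilde{\alpha}$. In the big regime there are $\log(16/\widetilde{\epsilon})\le 4\log(1/\widetilde{\epsilon})$ values of $\widetilde{\alpha}$, for $\widetilde{\epsilon}\le 1/2$; the bound $\log(16/\widetilde{\epsilon})\le 4\log(1/\widetilde{\epsilon})$ needs $\widetilde{\epsilon}\le 1/2$, which holds. In the tiny regime there are at most $\log(4k)\le 2\log(2k)$ values. The total failure probability is therefore at most $2\cdot m\cdot 4\log(1/\widetilde{\epsilon})\cdot\frac{\delta\epstl}{16m\log(1/\widetilde{\epsilon})}=\frac{\delta\epstl}{2}$, and similarly $\le \frac{\delta\epstl}{4}$ in the tiny case.

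The only delicate point is the independence justification and how the claim is quantified over $p$. Lemma~\ref{lem:dbmain} outputs estimates for all ``candidates'' simultaneously, but the rich and poor sets for different $C$ can overlap, so they do not form a single partition. This is why I apply the lemma separately for each $C$ and each event and pay the factor $2m$ in the union bound, which the $\frac{1}{16m}$ in the district sample size was evidently designed to absorb. The statement is presumably meant for a fixed $p$, or the few constant values of $p$ used later, such as $2^8,2^{10}$. If several $p$ values are needed simultaneously, the constants have enough slack only for a small fixed number of them, so I would state the lemma per fixed $p$.
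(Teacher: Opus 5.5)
Your proposal is correct and is essentially the paper's proof. For each fixed $\widetilde{\alpha}$ and $C$, apply Lemma~\ref{lem:dbmain} to the uniform district sample under the binary membership labelling, which gives failure probability $\frac{\delta\widetilde{\epsilon}}{16mg(\widetilde{\epsilon})}$ per pair. Then union bound over candidates and values of $\widetilde{\alpha}$. Your treatment of the deterministic case $\widetilde{\DD}_{\widetilde{\epsilon},\widetilde{\alpha}}=\DD$, the factor $2$ for events (i) and (ii), and the fixed $p$ is bookkeeping the paper leaves implicit.

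The one slip is that $\log(16/\widetilde{\epsilon})\le 4\log(1/\widetilde{\epsilon})$ fails at $\widetilde{\epsilon}=1/2$, where it reads $5\le 4$. There your bound is $\frac{5}{8}\delta\widetilde{\epsilon}$ rather than $\frac{\delta\widetilde{\epsilon}}{2}$. This is a harmless constant-factor issue (running with $\delta/2$, for example, absorbs it), and the paper's own union bound is no tighter.
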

\begin{proof}
    We describe the proof for $\DD_{\widetilde{\epsilon}, \widetilde{\alpha},C,p}^{\text{rich}}$; the proof for $\DD_{\widetilde{\epsilon}, \widetilde{\alpha},C,p}^{\text{poor}}$ is analogous. For a fixed $\widetilde{\alpha}$ and candidate $C\in \CC$, by Lemma \ref{lem:dbmain}, the probability that the fraction of districts of $\DD_{\widetilde{\epsilon}, \widetilde{\alpha},C,p}^{\text{rich}}$ is preserved up to an error of $\frac{\widetilde{\alpha}}{2^7}$ is at least $1-\frac{\delta\epstl}{16mg(\widetilde{\epsilon})}$. The desired claim follows by taking a union bound over all candidates and all the at most $8g(\widetilde{\epsilon})$ possible values of $\widetilde{\alpha}$.
\end{proof}

For each $\epstl$, let $\mathcal{E}_{\epstl}$ be the event that both the events mentioned in Lemma \ref{lem:large-small-districts} and either of the events stated in Lemma \ref{lem:good-districts} depending on whether $\epstl$ is big or tiny, hold. Then from Lemmas \ref{lem:good-districts} and \ref{lem:large-small-districts}, we have $\text{Pr}(\mathcal{E}_{\epstl})\ge 1-\delta\epstl$. Define $\mathcal{E}:= \cap_{\epstl}\mathcal{E}_{\epstl}$. Taking a union bound over all possible values of $\epstl$, we get that $\text{Pr}(\mathcal{E}) \ge 1-\delta\sum_{i\in \mathbb{N}}\frac{1}{2^i}=1-\delta$.

Let $\epsilon^*$ be the value of $\epstl$ when Algorithm \ref{alg:WP_knowndist} halts. We are now ready to show that our algorithm returns the true winner $W$ with high probability.

\begin{lemma}
\label{lem:correct-when-halt}
    Conditioned on the event $\mathcal{E}$, Algorithm \ref{alg:WP_knowndist} returns $W$ as the winner. 
\end{lemma}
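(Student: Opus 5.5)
The proof will show that, on $\mathcal{E}$, whenever the return condition at line~\ref{line:big-epsilon} or line~\ref{line:tiny-epsilon} holds for a candidate $C$, then $|\DD_C|>|\DD_{C'}|$ for every $C'\neq C$. Since the winner is unambiguous, this forces $C=W$. If the algorithm instead reaches line~\ref{line:return}, it queries every vote, so it is trivially correct. So fix $\epstl$ and a candidate $C$ returned there, fix $C'\neq C$ with its witness $\alphatl=\alphatl_{C'}$, and aim to prove $|\DD_C|>|\DD_{C'}|$.

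The key step is a sandwich relating sampled and true classifications through good districts. Take a good district $d$, so every candidate's sampled share is within $\eta:=\frac{\epstl}{2^{11}\alphatl g(\epstl)}$ of its true share. (In the big case $g(\epstl)=\log(1/\epstl)$, which matches the thresholds in the algorithm; in the tiny case it is $\log(2k)$.) Pairwise differences of shares then move by at most $2\eta$. The sampled threshold is $\tau=\frac{\epstl}{2^9\alphatl g(\epstl)}=4\eta$, so two containments follow.
\begin{itemize}
\item If $d\in\widehat{\DD}_{\epstl,\alphatl,C}$, then $C$ leads every other candidate in $d$ by more than $4\eta-2\eta>0$, hence $d\in\DD_C$. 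In fact $d\in\DD^{\text{rich}}_{\epstl,\alphatl,C,2^{10}}\subseteq\DD_C$.
\item If $d\in\DD_{C'}$, then $d\in\veehat{\DD}_{\epstl,\alphatl,C'}$, since $C'$ already wins the sample up to a $2\eta$ deficit that is fixable within budget $\tau$. More precisely, $\DD_{C'}\subseteq\DD^{\text{poor}}_{\epstl,\alphatl,C',p}$ for the appropriate $p$, and on good districts poor-for-$C'$ implies sampled-poor.
\end{itemize}
This step needs care: "can be made the winner by altering an $x$-fraction" must be translated into vote-share differences, and one has to check that the constants $2^9$ versus $2^{11}$ leave enough slack, with tie-breaking handled.

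Next comes counting. In the tiny case with $\alphatl\le 8/k$, every district is sampled and all are good, so $\widehat f(C)\le|\DD_C|$ and $\veehat f(C')\ge|\DD_{C'}|$. The condition $\widehat f(C)-\veehat f(C')\ge 0$ then gives $|\DD_C|\ge|\DD_{C'}|$. I would need to make this strict, perhaps through the tie-breaking convention, or by arguing that a tied district shows up in $\veehat{\DD}_{C'}$ but not in $\widehat{\DD}_{C}$ unless the counts are genuinely ordered. I expect this boundary/strictness issue to be the main obstacle, and I would first try to resolve it via the strict inequality in the definition of $\widehat{\DD}$.

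Otherwise a sample is used, and Lemma~\ref{lem:large-small-districts} applies to the true sets $\DD^{\text{rich}}_{\cdot,C,p}\subseteq\DD_C$ and $\DD^{\text{poor}}_{\cdot,C',p}\supseteq\DD_{C'}$, each preserved up to $\alphatl k/2^7$. Lemma~\ref{lem:good-districts} adds a bad-district error of at most $\alphatl k/2^6$ in the big case and zero in the tiny case. This gives
\[
|\DD_C|-|\DD_{C'}|\ \ge\ \widehat f(C)-\veehat f(C')-2\cdot\tfrac{\alphatl k}{2^7}-2\cdot\tfrac{\alphatl k}{2^6}\ \ge\ \tfrac{\alphatl k}{2}-\tfrac{3\alphatl k}{2^5}>0.
\]
The remaining work is routine bookkeeping of these constants.
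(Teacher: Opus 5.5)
Your plan follows the paper's proof essentially step for step. It uses the same three cases, and the same sandwich through good districts at level $p=2^{10}$: a sampled lead above $4\eta$ forces a true lead above $2\eta$, and being truly poor at $2\eta$ forces being sampled-poor at $4\eta$. It also uses the same error accounting via Lemma~\ref{lem:large-small-districts} and Lemma~\ref{lem:good-districts}. In the two sampled cases your bookkeeping is sound and gives a strict inequality. Your constant is off, since $2\cdot 2^{-7}+2\cdot 2^{-6}=3\cdot 2^{-6}$, but $3\cdot 2^{-5}$ is merely conservative.

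The gap is the case $\epsilon^*$ tiny, $\alphatl_{C'}\le 8/k$, which you leave open as the ``main obstacle''. The fixes you suggest cannot work, because there you try to get $|\DD_C|>|\DD_{C'}|$ from the test for the single pair $(C,C')$, and that is false in general. Take two candidates other than $W$ that win exactly the same number of districts, every district decided by a wide margin. Then every inequality in $|\DD_C|\ge\widehat{f}(C)\ge\veehat{f}(C')\ge|\DD_{C'}|$ is tight. Neither the tie-breaking rule nor the strict threshold in the definition of $\widehat{\DD}_{\epstl,\alphatl,C}$ changes this.

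The missing observation is that strictness is not needed. The return condition holds for \emph{every} $C'\neq C$, so you get $|\DD_C|\ge|\DD_{C'}|$ for all $C'$. If $C\neq W$, take $C'=W$: this gives $|\DD_C|\ge|\DD_W|$, contradicting $|\DD_W|>|\DD_C|$. This is exactly how the paper closes the case. So your target should be the non-strict inequality for all $C'$ (or just $|\DD_C|\ge|\DD_W|$), not a strict inequality against every $C'$.
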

\begin{proof}
    Let $C^*$ be the candidate returned by Algorithm \ref{alg:WP_knowndist}.
    Let $C'\neq C^*$ be any other candidate. First consider the case when $\epsilon^*$ is big, and so Algorithm \ref{alg:WP_knowndist} halts at line \ref{line:big-epsilon}. Let $\widetilde{\DD}_{\epsilon^*,\widetilde{\alpha}_{C'}}^{\text{bad}} \subseteq \widetilde{\DD}_{\epsilon^*,\widetilde{\alpha}_{C'}}$ be the set of sampled districts that are bad, i.e., for some candidate $C$, the absolute difference between the fractions of actual and sampled votes received by $C$ exceeds $\frac{\widetilde{\epsilon}}{2^{11} \widetilde{\alpha}_{C'}\log(1/\widetilde{\epsilon})}$. Then from Lemma \ref{lem:good-districts}(i), $\frac{|\widetilde{\DD}_{\epsilon^*,\widetilde{\alpha}_{C'}}^{\text{bad}}|}{|\widetilde{\DD}_{\epsilon^*,\widetilde{\alpha}_{C'}}|}\le \frac{\widetilde{\alpha}_{C'}}{2^6}$. In all districts of $\widehat{\DD}_{\epsilon^*, \widetilde{\alpha}_{C'}, C^*}\setminus \widetilde{\DD}_{\epsilon^*,\widetilde{\alpha}_{C'}}^{\text{bad}}$, the true fraction of votes received by $C^*$ exceeds that received by any other candidate by strictly more than $\frac{\epsilon^*}{2^9\widetilde{\alpha}_{C'}\log(1/\epsilon^*)}-\frac{\epsilon^*}{2^{11}\widetilde{\alpha}_{C'}\log(1/\epsilon^*)}>  \frac{\varepsilon^*}{2^{10}\widetilde{\alpha}_{C'} \log(1/\epsilon^*)}$. Applying Lemma \ref{lem:large-small-districts}(i) for $p=2^{10}$, it follows that the true fraction of districts where the fraction of votes received by $C^*$ exceeds the fraction received by any other candidate by strictly more than $\frac{\widetilde{\varepsilon}}{2^{10}\widetilde{\alpha}_{C'} \log(1/\widetilde{\epsilon})}$, is at least $\frac{|\widehat{\DD}_{\epsilon^*, \widetilde{\alpha}_{C'}, C^*}\setminus \widetilde{\DD}_{\epsilon^*,\widetilde{\alpha}_{C'}}^{\text{bad}}|}{|\widetilde{\DD}_{\epsilon^*,\widetilde{\alpha}_{C'}}|}-\frac{\widehat{\alpha}_{C'}}{2^7}$. In particular, $C^*$ wins in all such districts. Therefore $|\DD_{C^*}|\ge \widehat{f}_{\epsilon^*, \widetilde{\alpha}_{C'}}(C^*)-\frac{\widehat{\alpha}_{C'}k}{2^6}-\frac{\widehat{\alpha}_{C'}k}{2^7}\ge \widehat{f}_{\epsilon^*, \widetilde{\alpha}_{C'}}(C^*) - \frac{\widehat{\alpha}_{C'}k}{2^5}$. 

    Next, by Lemma \ref{lem:good-districts}(i), we have that in all districts of $\widetilde{\DD}_{\epsilon^*,\widetilde{\alpha}_{C'}}\setminus (\veehat{\DD}_{\epsilon^*, \widetilde{
            \alpha}_{C'}, C'}\cup \widetilde{\DD}_{\epsilon^*,\widetilde{\alpha}_{C'}}^{\text{bad}})$, at least $\frac{\epsilon^*}{2^9\widetilde{\alpha}_{C'}\log(1/\epsilon^*)}-\frac{\epsilon^*}{2^{11}\widetilde{\alpha}_{C'}\log(1/\epsilon^*)}>  \frac{\epsilon^*}{2^{10}\widetilde{\alpha}_{C'}\log(1/\epsilon^*)}$-fraction of votes need to be changed in order to make $C'$ as the winner. Therefore, by Lemma \ref{lem:large-small-districts}(ii), the true fraction of districts where $C'$ is either the winner or can be made the winner by altering at most $\frac{\epsilon^*}{2^{10} \widetilde{\alpha}_{C'}\log(1/\epsilon^*)}$-fraction of votes, is at most $\frac{|\veehat{\DD}_{\epsilon^*, \widetilde{
            \alpha}_{C'}, C'}\cup \widetilde{\DD}_{\epsilon^*,\widetilde{\alpha}_{C'}}^{\text{bad}}|}{|\widetilde{\DD}_{\epsilon^*,\widetilde{\alpha}_{C'}}|}+\frac{\widehat{\alpha}_{C'}}{2^7}$. In particular, the set $\DD_{C'}$ of districts where $C'$ wins satisfies $|\DD_{C'}|\le \veehat{f}_{\epsilon^*, \widetilde{\alpha}_{C'}}(C') + \frac{\widehat{\alpha}_{C'}k}{2^6}+\frac{\widehat{\alpha}_{C'}k}{2^7} \le \veehat{f}_{\epsilon^*, \widetilde{\alpha}_{C'}}(C')+\frac{\widehat{\alpha}_{C'}k}{2^5}$. Since $\widehat{f}_{\epsilon^*, \widetilde{\alpha}_{C'}}(C^*) - \veehat{f}_{\epsilon^*, \widetilde{\alpha}_{C'}}(C') \ge \frac{\widetilde{\alpha}_{C'}k}{2}$, we have $|\DD_{C^*}|-|\DD_{C'}|\ge \frac{\widetilde{\alpha}_{C'}k}{2}-\frac{\widetilde{\alpha}_{C'}k}{2^4}>0$, and therefore it must be the case that $C^*=W$.

    Next, suppose that $\epsilon^*$ is tiny, and $\widetilde{\alpha}_{C'}> \frac{8}{k}$. In this case, by Lemma \ref{lem:good-districts}(ii), we have $\widetilde{\DD}_{\epsilon^*,\widetilde{\alpha}_{C'}}^{\text{bad}} = \emptyset$. By a similar argument as above (with $\log (1/\epsilon^*)$ replaced by $\log (2k)$), we have $|\DD_{C^*}|\ge \widehat{f}_{\epsilon^*, \widetilde{\alpha}_{C'}}(C^*) - \frac{\widehat{\alpha}_{C'}k}{2^7}$. A similar argument for $C'$ yields $|\DD_{C'}| \le \veehat{f}_{\epsilon^*, \widetilde{\alpha}_{C'}}(C') + \frac{\widehat{\alpha}_{C'}k}{2^7}$. Since $\widehat{f}_{\epsilon^*, \widetilde{\alpha}_{C'}}(C^*) - \veehat{f}_{\epsilon^*, \widetilde{\alpha}_{C'}}(C') \ge \frac{\widetilde{\alpha}_{C'}k}{2}$, we again have $f(C^*)>f(C')$, and therefore $C^*=W$.

    Finally, consider the case when $\epsilon^*$ is tiny and $\widetilde{\alpha}_{C'} \le \frac{8}{k}$. Note that, in this case we have $\widetilde{\DD}_{\epsilon^*,\widetilde{\alpha}_{C'}}=\DD$. Again, by a similar argument as above, we conclude that $C^*$ must win in all districts of $\widehat{\DD}_{\epsilon^*, \widetilde{\alpha}_{C'}, C^*}$, and $C'$ is not the winner in any district of $\DD\setminus \veehat{\DD}_{\epsilon^*, \widetilde{
            \alpha}_{C'}, C'}$. Therefore, we have $|\DD_{C^*}|\ge \widehat{f}_{\epsilon^*, \widetilde{\alpha}_{C'}}(C^*)$ and $|\DD_{C'}|\le \veehat{f}_{\epsilon^*, \widetilde{\alpha}_{C'}}(C')$. Since $\widehat{f}_{\epsilon^*, \widetilde{\alpha}_{C'}}(C^*) - \veehat{f}_{\epsilon^*, \widetilde{\alpha}_{C'}}(C') \ge 0$, this implies that $|\DD_{C^*}|\ge |\DD_{C'}|$. Since the choice of $C'$ was arbitrary, by our assumption that the election $E$ has an unambiguous winner, it must be the case that $C^*=W$, and we are done.    
\end{proof}


Next, in order to bound the query complexity of our algorithm, we show that in the iteration when $\widetilde{\epsilon}=\epsilon$, Algorithm \ref{alg:WP_knowndist} must return the true winner $W$ with high probability.

\begin{lemma}
\label{lem:halt-surely}
    Conditioned on the event $\mathcal{E}$, it holds that $\epsilon^*\ge \epsilon$.
\end{lemma}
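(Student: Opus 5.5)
We must show that, conditioned on $\mathcal{E}$, the algorithm halts no later than the iteration $\epstl=\epsilon$. It suffices to show that if the algorithm reaches $\epstl=\epsilon$, then the halting condition holds with $C=W$. For each $C'\neq W$ we take $\widetilde{\alpha}_{C'}:=\alpha_{C'}$ from Lemma~\ref{lem:mov-lemma}. This value lies in the range scanned by the inner loop: Lemma~\ref{lem:mov-lemma} gives $\alpha_{C'}\ge \epsilon/16$ when $\epsilon$ is big, and $\alpha_{C'}\ge 1/(2k)$ when $\epsilon$ is tiny, and it is a power of $1/2$. We also note that $\epsilon>1/\sqrt{N}$ may be assumed, since a single vote change has to be possible at least. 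Moreover, $\epstl=\epsilon$ falls in the first outer loop exactly when $\epsilon$ is big, so the two regimes match the cases of the algorithm.

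The core step compares the sampled thresholds with the rich/poor thresholds. With $\epstl=\epsilon$ and $\widetilde{\alpha}=\alpha_{C'}$, the sample threshold is $\tau:=\frac{\epsilon}{2^9\alpha_{C'}g(\epsilon)}$, while $W$-rich and $C'$-poor are defined with threshold $2\tau$ (the factor $2^8$ in those definitions). In a good district the sampled fractions are within $\tau/4$ of the truth. Hence a good $W$-rich district, in which $W$ leads every candidate by more than $2\tau$ (up to the factor-two relation between ``margin'' and ``votes to change'', which we absorb into the constants), lands in $\widehat{\DD}_{\epsilon,\alpha_{C'},W}$. Likewise, a good district that is not $C'$-poor needs more than $\tau$ sampled changes to make $C'$ win, so it is not in $\veehat{\DD}_{\epsilon,\alpha_{C'},C'}$. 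Contrapositively, any good district in $\veehat{\DD}$ lies in $\DD^{\text{poor}}_{\epsilon,\alpha_{C'},C',p}$ for a suitable $p$ (roughly $2^{8}$ or $2^7$). We then apply Lemma~\ref{lem:large-small-districts} with $p$ chosen so that $\DD^{\text{rich}}_{\cdot,W,p}\supseteq\DD_W^{\text{rich}}$ and $\DD^{\text{poor}}_{\cdot,C',p}$ equals (or is contained in) the $C'$-poor districts for the relevant threshold. This gives
\[
\widehat{f}(W)\ge |\DD_W^{\text{rich}}|-\tfrac{\alpha_{C'}k}{2^7}-\tfrac{\alpha_{C'}k}{2^6},\qquad \veehat{f}(C')\le |\DD_{C'}^{\text{poor}}|+\tfrac{\alpha_{C'}k}{2^7}+\tfrac{\alpha_{C'}k}{2^6},
\]
where the $2^{-6}$ terms account for bad districts via Lemma~\ref{lem:good-districts}(i). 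Lemma~\ref{lem:difference-large} then yields $\widehat{f}(W)-\veehat{f}(C')\ge \alpha_{C'}k-1-\frac{3\alpha_{C'}k}{2^5}$, which is at least $\alpha_{C'}k/2$ once $\alpha_{C'}k\ge 4$. In the big case this holds because $\alpha_{C'}\ge\epsilon/16$ and $\epsilon k$ is large by bigness.

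In the tiny case the argument splits by the size of $\alpha_{C'}$. If $\alpha_{C'}>8/k$, there are no bad districts by Lemma~\ref{lem:good-districts}(ii), and the same computation gives a gap of at least $\alpha_{C'}k-1-\alpha_{C'}k/2^6\ge \alpha_{C'}k/2$. If $\alpha_{C'}\le 8/k$, all districts are sampled and all are good, so the estimates are exact containments, and the gap is at least $\max\{\alpha_{C'}k-1,0\}\ge 0$, as required.

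The main obstacle is the threshold bookkeeping in the core step. The issue is that ``can be made the winner by altering an $x$-fraction'' and ``leads by more than $x$'' differ by a factor of about two, so we must check that the constants $2^8,2^9,2^{11}$ line up and pick the right $p$ when invoking Lemma~\ref{lem:large-small-districts}. If the constants are off by a factor, we would instead use Lemma~\ref{lem:structure}-type margins directly with $h(d)>\frac{\epsilon N}{2^5\alpha g k}$, which leaves ample slack.
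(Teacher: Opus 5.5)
Your skeleton is the paper's: take $\widetilde{\alpha}_{C'}=\alpha_{C'}$ from Lemma~\ref{lem:mov-lemma}, invoke Lemma~\ref{lem:difference-large}, pass to sampled quantities via Lemmas~\ref{lem:good-districts} and~\ref{lem:large-small-districts} with $p=2^8$, and split into three cases. However, the step you flagged is a genuine gap, not a matter of constants.

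Lemma~\ref{lem:difference-large} counts $W$-rich and $C'$-poor districts, which are defined by how many \emph{votes} must be changed. The halting test instead looks at \emph{sampled leads}. You therefore need two inclusions:
(a) every good $W$-rich district lies in $\widehat{\DD}_{\epsilon,\alpha_{C'},W}$;
(b) every good district of $\veehat{\DD}_{\epsilon,\alpha_{C'},C'}$ is $C'$-poor.

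Both can fail when the vote threshold $\frac{\epsilon n_d}{2^8\alpha_{C'}g(\epsilon)}$ is below $1$. Then ``$W$-rich'' just means ``$W$ wins $d$''. So a district that $W$ wins only by the tie-breaking rule is $W$-rich and not $C'$-poor. If it is good, its sampled lead is at most $2\gamma$, where $\gamma=\frac{\epsilon}{2^{11}\alpha_{C'}g(\epsilon)}$. This is below the threshold $4\gamma$ defining $\widehat{\DD}$, and $C'$ can be made its sampled winner by altering about a $\gamma$-fraction of sampled votes.

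A concrete counterexample: two candidates, $k$ districts of two voters each, ties broken in favour of $W$. Let $3k/4$ districts split one--one and $k/4$ districts give both votes to $C$.
\begin{itemize}
\item[] Then $\MOV(E)=k/4=N/8$, so $\epsilon=1/8$, which is big once $k$ is large.
\item[] Under $\mathcal{E}$, in every round and for every $\widetilde{\alpha}$, only bad districts enter $\widehat{\DD}_{\epstl,\widetilde{\alpha},W}$, while every good district enters $\veehat{\DD}_{\epstl,\widetilde{\alpha},C}$.
\item[] Hence $\widehat{f}_{\epstl,\widetilde{\alpha}}(W)-\veehat{f}_{\epstl,\widetilde{\alpha}}(C)<0$, and $C$ fails its test as well.
\end{itemize}
So the algorithm never halts while $\epstl>1/\sqrt{N}$ and ends by querying all votes. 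The statement is false for this $E$. Your fallback via $h(d)>\frac{\epsilon N}{2^5\alpha_{C'} g(\epsilon) k}$ does not rescue it: here that bound is below one vote, and the offending districts are the unflipped ties in $\DD_W\setminus(\DD_2\cup\DD_3)$.

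The paper's proof has exactly the same hole. It asserts $|\DD^{\text{rich}}_{\epsilon,\alpha_C,W,2^8}|-|\DD^{\text{poor}}_{\epsilon,\alpha_C,C,2^8}|\ge\max\{\alpha_Ck-1,0\}$ ``by Lemma~\ref{lem:difference-large}''. In doing so it silently replaces the vote-count notion of $W$-rich by the lead-based set $\DD^{\text{rich}}_{\epsilon,\alpha_C,W,2^8}$. So your proposal faithfully reproduces the paper's argument, but neither is complete. The lemma needs an extra hypothesis excluding this regime, or a different halting test.

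When $\frac{\epsilon N}{2^5\alpha_{C'}g(\epsilon)k}$ is at least a suitable constant number of votes, your fallback is the right repair, but it must be carried out rather than asserted. Two facts are needed:
\begin{itemize}
\item[] The exchange arguments behind Lemmas~\ref{lem:structure} and~\ref{lem:district-structure} give vote costs exceeding the richness threshold by a constant factor. This absorbs both the factor of two between lead and cost and the integer rounding.
\item[] In a good district, the fraction of votes needed to make $C'$ win moves by at most twice the $\ell_\infty$ sampling error, up to rounding.
\end{itemize}
Also, only $p=2^8$ works in (b): $p=2^7$ gives a superset of the $C'$-poor districts.
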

\begin{proof}
    It suffices to show that conditioned on $\mathcal{E}$, in the iteration when $\epstl=\epsilon$, Algorithm \ref{alg:WP_knowndist} must return $W$ as the winner.
    Let $C\neq W$ be any other candidate. Let $\alpha_C$ be the value guaranteed by Lemma \ref{lem:mov-lemma}. Recall that by Lemma \ref{lem:difference-large}, we have $|\DD_{\epsilon,\alpha_C,W,2^8}^{\text{rich}}|-|\DD^{\text{poor}}_{\epsilon,\alpha_C,C,2^8}|\ge \max\{\alpha_C k-1,0\}$.
    
    Consider first the case when $\epsilon$ is big. Let $\widetilde{\DD}^{\text{rich}}_{\epsilon,\alpha_C,W,2^8} = \DD_{\epsilon,\alpha_C,W,2^8}^{\text{rich}} \cap \widetilde{\DD}_{\epsilon,\alpha_C}$. Applying Lemma \ref{lem:large-small-districts}(i) with $p=2^8$, we have $\frac{|\widetilde{\DD}^{\text{rich}}_{\epsilon,\alpha_C,W,2^8}|}{|\widetilde{\DD}_{\epsilon,\alpha_C}|} \ge \frac{|\DD_{\epsilon,\alpha_C,W,2^8}^{\text{rich}}|}{k}-\frac{\alpha_C}{2^7}$. Let $\widetilde{\DD}_{\epsilon,\alpha_C}^{\text{bad}}\subseteq \DD_{\epsilon,\alpha_C}$ be the set of sampled districts that are bad. 
    By Lemma \ref{lem:good-districts}(i), $\frac{|\widetilde{\DD}_{\epsilon,\alpha_C}^{\text{bad}}|}{|\DD_{\epsilon,\alpha_C}|}\le \frac{\alpha_C}{2^6}$. In each district of $\widetilde{\DD}^{\text{rich}}_{\epsilon,\alpha_C,W,2^8} \setminus \widetilde{\DD}_{\epsilon,\alpha_C}^{\text{bad}}$, the fraction of sampled votes received by $W$ will exceed the fraction received by any other candidate by strictly more than $\frac{\epsilon}{2^8\alpha_C\log(1/\epsilon)} - \frac{\epsilon}{2^{11}\alpha_C\log (1/\epsilon)}> \frac{\epsilon}{2^9\alpha_C\log (1/\epsilon)}$. Therefore, $\widehat{f}_{\epsilon,\alpha_{C}}(W) \ge \frac{|\widetilde{\DD}^{\text{rich}}_{\epsilon,\alpha_C,W,2^8} \setminus \widetilde{\DD}_{\epsilon,\alpha_C}^{\text{bad}}|}{|\widetilde{\DD}_{\epsilon,\alpha_C}|}k\ge |\DD_{\epsilon,\alpha_C,W,2^8}^{\text{rich}}|-\frac{\alpha_C k}{2^7}-\frac{\alpha_C k}{2^6}\ge |\DD_{\epsilon,\alpha_C,W,2^8}^{\text{rich}}|-\frac{\alpha_C k}{2^5}$.
    
     Similarly, let $\widetilde{\DD}_{\epsilon,\alpha_C,C,2^8}^{\text{poor}}=\DD^{\text{poor}}_{\epsilon,\alpha_C,C,2^8} \cap \widetilde{\DD}_{\epsilon,\alpha_C}$, so that $\frac{|\widetilde{\DD}^{\text{poor}}_{\epsilon,\alpha_C,C,2^8}|}{|\widetilde{\DD}_{\epsilon,\alpha_C}|} \le \frac{|\DD_{\epsilon,\alpha_C,C,2^8}^{\text{poor}}|}{k}+\frac{\alpha_C}{2^7}$. Then in each district of $\widetilde{\DD}_{\epsilon,\alpha_C}\setminus (\widetilde{\DD}^{\text{poor}}_{\epsilon,\alpha_C,C,2^8} \cup \widetilde{\DD}_{\epsilon,\alpha_C}^{\text{bad}})$, the fraction of sampled votes that need to be changed to make $C$ the winner is at least $\frac{\epsilon}{2^8\alpha_C\log (1/\epsilon)}-\frac{\epsilon}{2^{11}\alpha_C\log (1/\epsilon)} > \frac{\epsilon}{2^9\alpha_C\log (1/\epsilon)}$. Thus $\veehat{f}_{\epsilon,\alpha_{C}}(C) \le |\DD_{\epsilon,\alpha_C,C,2^8}^{\text{poor}}|+\frac{\alpha_C k}{2^7}+\frac{\alpha_C k}{2^6}\le |\DD_{\epsilon,\alpha_C,C,2^8}^{\text{poor}}|+\frac{\alpha_C k}{2^5}$. Hence $\widehat{f}_{\epsilon,\alpha_{C}}(W) - \veehat{f}_{\epsilon,\alpha_{C}}(C) \ge |\DD_{\epsilon,\alpha_C,W,2^8}^{\text{rich}}|-|\DD_{\epsilon,\alpha_C,C,2^8}^{\text{poor}}|-\frac{\alpha_{C}k}{2^4}\ge \frac{15\alpha_{C}k}{16}-1>\frac{\alpha_{C}k}{2}$, where the last inequality holds since we are in the case when $k>\frac{2^{15}}{\epsilon}\ge \frac{2^{11}}{\alpha_C}$. Thus, Algorithm \ref{alg:WP_knowndist} will return $W$ as the winner in line \ref{line:big-epsilon}.

     Next, consider the case when $\epsilon$ is small and $\alpha_C>\frac{8}{k}$. In this case, by Lemma \ref{lem:good-districts}(ii), we have $\widetilde{\DD}_{\epsilon,\alpha_C}^{\text{bad}}=\emptyset$. Therefore, an analogous argument gives $\widehat{f}_{\epsilon,\alpha_{C}}(W) \ge |\DD_{\epsilon,\alpha_C,W,2^8}^{\text{rich}}|-\frac{\alpha_C k}{2^7}$, and $\veehat{f}_{\epsilon,\alpha_{C}}(C) \le |\DD_{\epsilon,\alpha_C,C,2^8}^{\text{poor}}|+\frac{\alpha_C k}{2^7}$, and hence $\widehat{f}_{\epsilon,\alpha_{C}}(W) - \veehat{f}_{\epsilon,\alpha_{C}}(C) \ge |\DD_{\epsilon,\alpha_C,W,2^8}^{\text{rich}}| - |\DD_{\epsilon,\alpha_C,C,2^8}^{\text{poor}}| - \frac{\alpha_{C}k}{2^6}\ge \frac{63\alpha_{C}k}{64}-1>\frac{\alpha_{C}k}{2}$, where the final inequality holds by our assumption that $k>\frac{8}{\alpha_C}$. Therefore, the condition of line \ref{line:tiny-epsilon} is satisfied and $W$ will be returned as the winner.

     Finally, we consider the case when $\epsilon$ is small and $\alpha_C\le \frac{8}{k}$. For this case, $\widetilde{\DD}_{\epsilon,\alpha_C}=\DD$, and thus $\widetilde{\DD}^{\text{rich}}_{\epsilon,\alpha_C,W,2^8} = \DD_{\epsilon,\alpha_C,W,2^8}^{\text{rich}}$. Therefore $\widehat{f}_{\epsilon,\alpha_{C}}(W) \ge |\DD_{\epsilon,\alpha_C,W,2^8}^{\text{rich}}|$ and $\veehat{f}_{\epsilon,\alpha_{C}}(C) \le |\DD_{\epsilon,\alpha_C,C,2^8}^{\text{poor}}|$. By Lemma \ref{lem:difference-large}, we get $\widehat{f}_{\epsilon,\alpha_{C}}(W)-\veehat{f}_{\epsilon,\alpha_{C}}(C) \ge 0$, and hence the condition in line \ref{line:tiny-epsilon} is satisfied. Thus, Algorithm \ref{alg:WP_knowndist} returns $W$ as the winner.
\end{proof}

We are now ready to prove the final result.

\begin{lemma}
\label{lem:sample-complexity-bound}
    With probability at least $1-\delta$, Algorithm \ref{alg:WP_knowndist} returns $W$ as the winner and queries at most $\widetilde{O}(\frac{1}{\epsilon^2}\log\frac{m}{\delta}\log\frac{1}{\delta})$ votes.
\end{lemma}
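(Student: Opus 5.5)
We condition on the event $\mathcal{E}$, which has probability at least $1-\delta$. Under $\mathcal{E}$, Lemma~\ref{lem:correct-when-halt} shows that whenever the algorithm halts at line~\ref{line:big-epsilon} or line~\ref{line:tiny-epsilon}, it returns $W$. The fallback at line~\ref{line:return} queries every vote and so is trivially correct. Correctness is therefore immediate, and the remaining work is the query bound.

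Lemma~\ref{lem:halt-surely} gives $\epsilon^*\ge\epsilon$ under $\mathcal{E}$. Hence the outer loops only run for $\epstl\in\{1/2,1/4,\ldots\}$ with $\epstl\ge\epsilon$, which is $O(\log(1/\epsilon))$ values. The fallback is never reached, except possibly when $\epsilon<1/\sqrt{N}$. In that case $N<1/\epsilon^2$, so querying all $N$ votes is already within the bound.

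I would then bound the cost of a single pair $(\epstl,\alphatl)$, splitting into cases.

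\emph{Big $\epstl$.} The algorithm samples $q(\alphatl/2^7,\delta\epstl/(16m\log(1/\epstl))) = O\bigl(\frac{1}{\alphatl^2}\log\frac{m\log(1/\epstl)}{\delta\epstl}\bigr)$ districts. In each it queries $q\bigl(\frac{\epstl}{2^{11}\alphatl\log(1/\epstl)},\frac{\epstl}{2^{14}}\bigr) = O\bigl(\frac{\alphatl^2\log^2(1/\epstl)}{\epstl^2}\log\frac{1}{\epstl}\bigr)$ votes. The $\alphatl^2$ factors cancel, leaving $\widetilde{O}\bigl(\frac{1}{\epstl^2}\log\frac{m}{\delta}\bigr)$ per pair. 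The factors $\log(1/\epstl)$ and $\log\log(1/\epstl)$ are absorbed into $\widetilde{O}$, and $\log\frac{m\log(1/\epstl)}{\delta\epstl}=O(\log\frac m\delta+\log\frac1\epstl)$. Summing over the $O(\log(1/\epstl))$ values of $\alphatl$ and the geometric sequence of $\epstl\ge\epsilon$ gives $\widetilde{O}(\frac{1}{\epsilon^2}\log\frac m\delta)$. In this regime the $\log\frac1\delta$ factor is not even needed.

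\emph{Tiny $\epstl$, $\alphatl>8/k$.} The district count is as above, but now the per-district count is $q\bigl(\frac{\epstl}{2^{11}\alphatl\log(2k)},\frac{\delta\epstl}{2k\log(2k)}\bigr) = O\bigl(\frac{\alphatl^2\log^2(2k)}{\epstl^2}\log\frac{k}{\delta\epstl}\bigr)$. We must show $\log k$ is polylogarithmic in $1/\epsilon$ and $\log(1/\delta)$. Tinyness means $\epstl\le\frac{2^{15}}{k}\log\frac{\log(1/\epstl)}{\delta\epstl}$, so $k=O\bigl(\frac{1}{\epstl}\log\frac{1}{\delta\epstl}\bigr)$. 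Hence $\log k=O(\log\frac1\epsilon+\log\log\frac1\delta)$, which is absorbed into $\widetilde{O}$. The product is then $\widetilde{O}\bigl(\frac{1}{\epstl^2}\log\frac m\delta\log\frac1\delta\bigr)$: one $\log\frac1\delta$ comes from $\log\frac{k}{\delta\epstl}$ in the per-district count and one $\log\frac m\delta$ from the district count. This pair of factors is the source of the product $\log\frac m\delta\log\frac1\delta$ in the statement.

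\emph{Tiny $\epstl$, $\alphatl\le 8/k$.} All $k$ districts are queried, with $O\bigl(\frac{\alphatl^2\log^2(2k)}{\epstl^2}\log\frac{k}{\delta\epstl}\bigr)$ votes each. Since $k\alphatl^2\le 64/k$, the total is $O\bigl(\frac{1}{k\epstl^2}\bigr)$ up to the same log factors, which is dominated by the previous case.

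Summing over the $O(\log k)$ values of $\alphatl$ and the geometrically decreasing sequence $\epstl\ge\epsilon$, whose sum is dominated by its last term $\frac{1}{\epsilon^2}$, yields the claimed bound. The main obstacle is the tiny case, specifically converting the $\log k$ and $\log\frac{k}{\delta\epstl}$ factors into quantities hidden by $\widetilde{O}$ via the tinyness inequality. One must also check that $\epsilon^*\ge\epsilon$ does not push the last outer iteration below the range where this inequality holds.
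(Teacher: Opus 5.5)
Your proposal is correct and follows essentially the same route as the paper. You condition on $\mathcal{E}$, use Lemmas~\ref{lem:correct-when-halt} and~\ref{lem:halt-surely} to restrict to iterations with $\widetilde{\epsilon}\ge\epsilon$, bound the cost of each pair $(\widetilde{\epsilon},\widetilde{\alpha})$ separately in the big and tiny regimes (using the tinyness inequality to get $\log k=\widetilde{O}(1)$), and finish with a geometric sum; your handling of the fallback at line~\ref{line:return} when $\epsilon\le 1/\sqrt{N}$ (so $N\le 1/\epsilon^2$) is a small detail the paper leaves implicit.
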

\begin{proof}
    Again, it suffices to show that the mentioned claims hold conditioned on the event $\mathcal{E}$. From Lemma \ref{lem:correct-when-halt}, we know that Algorithm \ref{alg:WP_knowndist} returns $W$ as the winner. Also owing to Lemma \ref{lem:halt-surely},     
    it suffices to bound the query complexity till the iteration when $\widetilde{\epsilon}=\epsilon$. Let $p(\widetilde{\epsilon})$ be the number of queried votes in a fixed iteration of Algorithm \ref{alg:WP_knowndist} with parameter $\widetilde{\epsilon}$. Consider first the case when $\widetilde{\epsilon}$ is big. Then we have $|\widetilde{\DD}_{\widetilde{\epsilon},\widetilde{\alpha}}|=O(\frac{1}{\widetilde{\alpha}^2}\log\frac{m\log(1/\widetilde{\epsilon})}{\delta\epstl})=\widetilde{O}(\frac{1}{\widetilde{\alpha}^2}\log\frac{m}{\delta})$, and $\ell_{\widetilde{\epsilon},\widetilde{\alpha}}=O(\frac{\widetilde{\alpha}^2\log^2(1/\widetilde{\epsilon})}{\widetilde{\epsilon}^2}\log\frac{1}{\widetilde{\epsilon}})=\widetilde{O}(\frac{\widetilde{\alpha}^2}{\widetilde{\epsilon}^2})$. Thus for a fixed $\widetilde{\alpha}$, the number of queried votes is at most $|\widetilde{\DD}_{\widetilde{\epsilon},\widetilde{\alpha}}|\cdot\ell_{\widetilde{\epsilon},\widetilde{\alpha}}=\widetilde{O}(\frac{1}{\widetilde{\epsilon}^2}\log\frac{m}{\delta})$. Since there are only $\log\frac{16}{\widetilde{\epsilon}}$ values of $\widetilde{\alpha}$, we have $p(\widetilde{\epsilon})=\widetilde{O}(\frac{1}{\widetilde{\epsilon}^2}\log\frac{m}{\delta})$.

    Next, suppose $\widetilde{\epsilon}$ is tiny. Then $k\le \frac{2^{15}}{\widetilde{\epsilon}}\log\frac{\log(1/\widetilde{\epsilon})}{\delta\epstl}$, and so $\log k = \widetilde{O}(1)$. Thus we have $\ell_{\widetilde{\epsilon},\widetilde{\alpha}}=O(\frac{\widetilde{\alpha}^2\log^2k}{\widetilde{\epsilon}^2}\log\frac{k\log k}{\delta})=\widetilde{O}(\frac{\widetilde{\alpha}^2}{\widetilde{\epsilon}^2}\log\frac{1}{\delta})$. For each $\widetilde{\alpha}>8/k$, we have $|\widetilde{\DD}_{\widetilde{\epsilon},\widetilde{\alpha}}|=O(\frac{1}{\widetilde{\alpha}^2}\log\frac{m\log k}{\delta\epstl})=\widetilde{O}(\frac{1}{\widetilde{\alpha}^2}\log\frac{m}{\delta})$, and hence $|\widetilde{\DD}_{\widetilde{\epsilon},\widetilde{\alpha}}|\cdot\ell_{\widetilde{\epsilon},\widetilde{\alpha}}=\widetilde{O}(\frac{1}{\widetilde{\epsilon}^2}\log\frac{m}{\delta}\log\frac{1}{\delta})$. Again, for each $\widetilde{\alpha}\le 8/k$, we have $|\DD_{\widetilde{\epsilon},\widetilde{\alpha}}|=k=O(\frac{1}{\widetilde{\alpha}})$ and therefore $|\widetilde{\DD}_{\widetilde{\epsilon},\widetilde{\alpha}}|\cdot\ell_{\widetilde{\epsilon},\widetilde{\alpha}}=\widetilde{O}(\frac{1}{\widetilde{\epsilon}^2}\log\frac{1}{\delta})$. Summing over all the $\log (2k)=\widetilde{O}(1)$ values of $\widetilde{\alpha}$, we therefore obtain $p(\widetilde{\epsilon})=\widetilde{O}(\frac{1}{\widetilde{\epsilon}^2}\log\frac{m}{\delta}\log\frac{1}{\delta})$.

    Hence, in either case, we have $p(\widetilde{\epsilon})=\widetilde{O}(\frac{1}{\widetilde{\epsilon}^2}\log\frac{m}{\delta}\log\frac{1}{\delta})$. By a geometric sum argument, the total number of votes queried by our algorithm up to the iteration when $\widetilde{\epsilon}=\epsilon$ is bounded by $\widetilde{O}(\frac{1}{\epsilon^2}\log\frac{m}{\delta}\log\frac{1}{\delta})$, and we are done.
\end{proof}

\section{Structural results for arbitrary population}
\label{sec:structure-arbitrary}

We begin by noting that in the proof of Lemma \ref{lem:mov-lemma}, we never used the fact that all districts have equal population. Therefore, the guarantee of Lemma \ref{lem:mov-lemma} continues to hold even when the population of districts are arbitrary. The only difference is that in the equal population setting, $h(d)>\frac{\epsilon N}{2^5\alpha_{C}g(\epsilon)k}$ implies that at least $\frac{\epsilon}{2^5\alpha_{C}g(\epsilon)}$-fraction of votes were changed in the district $d$ in the optimal $C$-feasible strategy; however, such a claim does not hold now since the population of $d$ can be arbitrarily larger than $\frac{N}{k}$.

We now introduce some notations. For any $d\in \DD$, we shall let $n_d$ denote the population of district $d$. Let $\DD^{1/2}:= \{d\in \DD \mid n_d \in (0,\frac{N}{k}]\}$, and for any $r\ge 1$ that is a power of $2$, let $\DD^r := \{d\in \DD \mid n_d \in (\frac{rN}{k},\frac{2rN}{k}]\}$. Define $\DD_{\widetilde{\epsilon}}^{\text{pop}}:= \{d\in \DD \mid n_d >\frac{2^9N}{\widetilde{\epsilon}k}\}$, if $\widetilde{\epsilon}$ is big, and $\DD_{\widetilde{\epsilon}}^{\text{pop}} = \emptyset$ otherwise. Clearly $|\DD_{\widetilde{\epsilon}}^{\text{pop}}|\le \frac{\widetilde{\epsilon}k}{2^9}$.

Similar as in Section \ref{sec:structural-result}, let $W$ be the winner of $E$, and for some candidate $C\in \CC$, we consider the optimal $C$-feasible strategy that alters votes in three sets of districts $\DD_1,\DD_2,\DD_3$. Also let $\DD'_i \subseteq \DD_i$ be the districts guaranteed by Lemma \ref{lem:mov-lemma} so that $h(d)>\frac{\epsilon N}{2^5\alpha_{C}g(\epsilon)k}$, for all $d\in \DD'_1\cup \DD'_2\cup \DD'_3$.
We now generalise our notions of $W$-rich and $C$-poor districts as defined in Section \ref{sec:structural-result}, to the setting of arbitrary population.
For any $r\le \frac{2^8}{\epsilon}$ (so that $\DD^r\cap \DD_{\epsilon}^{\text{pop}}=\emptyset$), we call a district $d\in \DD^r$ to be $W$-\emph{rich} if $W$ wins in $d$, and no other candidate can be made the winner in $d$ by changing at most $\frac{\epsilon}{2^8\alpha_{C}rg(\epsilon)}$-fraction of votes in $d$.  Similarly we call $d$ to be $C$-\emph{poor} if $C$ is either the winner in $d$ or can be made the winner by changing at most $\frac{\epsilon}{2^8\alpha_{C}rg(\epsilon)}$-fraction of votes in $d$.
Let $\DD'_W := \DD_W \setminus (\DD_2\cup (\DD_3\setminus \DD'_3))$ and $\DD'_C:= \DD_C \cup (\DD_1\setminus \DD'_1)\cup (\DD_2\setminus \DD'_2)$ (see Figure \ref{fig:district-decomposition}).


The following lemma is an analogue of Lemma \ref{lem:district-structure} in the setting of arbitrary population. Essentially the same guarantees continue to hold, except in the districts of $\DD_{\epsilon}^{\text{pop}}$.

\begin{lemma}
\label{lem:district-structure-arbitrary}
    The following statements hold.
    \begin{enumerate}[label=(\roman*)]
        \item If $|\DD_1|+2|\DD_2|+|\DD_3|=\Delta_C+1$, then all districts in $\DD'_W\setminus \DD_{\epsilon}^{\text{pop}}$ are $W$-rich, and no district in $\DD \setminus (\DD'_C\cup \DD_{\epsilon}^{\text{pop}})$ is $C$-poor.
        \item If $|\DD_1|+2|\DD_2|+|\DD_3|=\Delta_C$, then the sum of the numbers of districts in $\DD'_W\setminus \DD_{\epsilon}^{\text{pop}}$ that are not $W$-rich and districts in $\DD \setminus (\DD'_C\cup \DD_{\epsilon}^{\text{pop}})$ that are $C$-poor is at most $1$.
    \end{enumerate}
\end{lemma}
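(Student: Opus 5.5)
The plan is to replay the exchange arguments of Lemma \ref{lem:structure} and Lemma \ref{lem:district-structure} almost verbatim. The one change is a new conversion between \emph{absolute} numbers of altered votes and \emph{fractions} of a district's population, which is where the exclusion of $\DD_{\epsilon}^{\text{pop}}$ enters. Every exchange argument compares $h(d')>\frac{\epsilon N}{2^5\alpha_C g(\epsilon)k}$ for some $d'\in\DD'_1\cup\DD'_2\cup\DD'_3$ (Lemma \ref{lem:mov-lemma}, valid for arbitrary populations) with the number of votes needed to flip the winner in a ``bad'' district $d$.

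\textbf{The conversion.} Suppose $d\notin\DD_{\epsilon}^{\text{pop}}$ and $d\in\DD^r$ (or $d\in\DD^{1/2}$). If $d$ fails to be $W$-rich, or is $C$-poor, then the relevant winner change costs at most
\[
\frac{\epsilon}{2^8\alpha_C r g(\epsilon)}\, n_d \le \frac{\epsilon}{2^8\alpha_C r g(\epsilon)}\cdot\frac{2rN}{k}=\frac{\epsilon N}{2^7\alpha_C g(\epsilon)k}
\]
votes. For $\DD^{1/2}$ the same bound holds, with $r=1/2$ and $n_d\le N/k$. In the tiny case $\DD_{\epsilon}^{\text{pop}}=\emptyset$, so we must check that every district lies in some class with $r\le 2^8/\epsilon$, or else that the definition of rich/poor is interpreted with the $r$ of the district's class. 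I would note that the definitions apply to whichever class $d$ belongs to. The key point is that the absolute cost is always at most $\frac{\epsilon N}{2^7\alpha_C g(\epsilon)k}$, which is strictly less than half of $h(d')$.

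\textbf{Structural analogue.} Next I would prove the analogue of Lemma \ref{lem:structure} with $\DD_{\epsilon}^{\text{pop}}$ removed. Statement (i) says that districts of $\DD'_3\setminus\DD_{\epsilon}^{\text{pop}}$ are $W$-rich. By optimality of $\SM$, changing the winner of $d\in\DD'_3$ costs $h(d)>\frac{\epsilon N}{2^5\alpha_C g(\epsilon)k}$ votes. This exceeds the absolute cost bound above, so $d$ is $W$-rich. Statement (ii) says that no district of $(\DD_W\cup\DD'_1)\setminus((\DD_2\setminus\DD'_2)\cup\DD_3\cup\DD_{\epsilon}^{\text{pop}})$ is $C$-poor. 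For $\DD'_1\cup\DD'_2$ this is the same optimality argument. For $d\in\DD_W\setminus(\DD_2\cup\DD_3)$, I swap out some $d'\in\DD'_1\cup\DD'_2\cup\DD'_3$ (nonempty by Lemma \ref{lem:mov-lemma}) for flipping $d$ to $C$. This saves votes because $\frac{\epsilon N}{2^7\alpha_C g(\epsilon) k}<h(d')$.

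\textbf{The two cases.} Then I would repeat the proof of Lemma \ref{lem:district-structure}.
\begin{itemize}
\item In case (i), $\DD_1\cup\DD_3=\emptyset$ by the same argument. Any non-$W$-rich $d\in\DD'_W\setminus\DD_{\epsilon}^{\text{pop}}$, or $C$-poor $d\in\DD\setminus(\DD'_C\cup\DD_{\epsilon}^{\text{pop}})$, is swapped against some $d'\in\DD'_2$. This gives a cheaper $C$-feasible strategy, a contradiction.
\item In case (ii), suppose there were two offending districts $d_W,d_C$. First, $\DD'_1\cup\DD'_3=\emptyset$, by swapping any $d'$ there for $d_W$. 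Then swap $d'\in\DD'_2$ (together with $d_C$ itself if $d_C\in\DD_3$) for the pair $d_W,d_C$. The combined cost of the pair is at most $2\cdot\frac{\epsilon N}{2^7\alpha_C g(\epsilon)k}<h(d')$, which contradicts optimality.
\end{itemize}

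The main obstacle is the bookkeeping in case (ii), when $d_C\in\DD_3$. Replacing $\SM$'s action on $d_C$ with flipping $d_C$ to $C$ must not increase cost beyond the claimed bound, and this holds since we simply pay at most the cost-to-$C$ of $d_C$ rather than $h(d_C)$. The other delicate point is ensuring that the population-class factor $r$ cancels properly, that is, using $n_d\le 2rN/k$, so that the constant $2^7$ still beats $2^5$ with room for two districts.
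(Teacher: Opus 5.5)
Your route is the one the paper intends. The paper gives no separate proof of this lemma and only asserts that the argument of Lemma~\ref{lem:district-structure} carries over outside $\DD_{\epsilon}^{\text{pop}}$. Your conversion $n_d\le 2rN/k$ (and $n_d\le N/k$ for $r=1/2$) caps the absolute flipping cost of a non-rich or poor district by $\frac{\epsilon N}{2^7\alpha_C g(\epsilon)k}$, and that is exactly what makes the replay valid. Reading the thresholds with the district's own class $r$ is also the right choice for tiny $\epsilon$, where classes with $r>2^8/\epsilon$ can be nonempty.

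Part (ii) has one incomplete step, which you inherit from the paper's proof of Lemma~\ref{lem:district-structure}(ii). You only refute the simultaneous existence of one non-$W$-rich district $d_W$ and one $C$-poor district $d_C$. That shows the two counts are not both positive, but the statement bounds their \emph{sum} by $1$. So you must also exclude two non-$W$-rich districts, and two $C$-poor districts.

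The same exchange idea closes this uniformly. Each offending district offers one extra unit of decrease of $|\DD_W|-|\DD_C|$ at extra cost at most $\frac{\epsilon N}{2^7\alpha_C g(\epsilon)k}$:
\begin{itemize}
\item dethrone $W$ in a non-rich $d\in\DD_W\setminus(\DD_2\cup\DD_3)$;
\item flip a poor $d\notin\DD_W\cup\DD_C\cup\DD_1$ to $C$;
\item redirect $\SM$'s flip in a poor $d\in\DD_3$ to $C$.
\end{itemize}
No offending district lies in $\DD'_1\cup\DD'_2\cup\DD'_3$. This follows from your structural analogue, plus the fact that the $W$-rich districts of $\DD'_3\setminus\DD_{\epsilon}^{\text{pop}}$ are not $C$-poor.

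Hence any two distinct offending districts pay for dropping an arbitrary $d'\in\DD'_1\cup\DD'_2\cup\DD'_3$. Dropping $d'$ loses at most $2$ units but saves $h(d')>\frac{\epsilon N}{2^5\alpha_C g(\epsilon)k}>2\cdot\frac{\epsilon N}{2^7\alpha_C g(\epsilon)k}$. The resulting strategy is $C$-feasible and strictly cheaper than $\SM$, a contradiction. This also covers the mixed case without first proving $\DD'_1\cup\DD'_3=\emptyset$.
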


Let $\DD_{r,W}^{\text{rich}}$ be the districts of $\DD^r$ that are $W$-rich, and $\DD_{r,C}^{\text{poor}}$ be the districts of $\DD^r$ that are $C$-poor. Let $\DD_{W}^{\text{rich}} = \cup_r \DD_{r,W}^{\text{rich}}$, and $\DD_{C}^{\text{poor}} \cup_r \DD_{r,C}^{\text{poor}}$.

\begin{lemma}
\label{lem:difference-large-arbitrary}
    If $\epsilon$ is big, it holds that $|\DD_W^{\text{rich}}|-|\DD_C^{\text{poor}}|\ge \frac{3\alpha_{C}k}{4}$; otherwise, for tiny $\epsilon$, it holds that $|\DD_W^{\text{rich}}|-|\DD_C^{\text{poor}}|\ge \max\{\alpha_{C}k-1,0\}$.
\end{lemma}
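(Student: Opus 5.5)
We follow the proof of Lemma~\ref{lem:difference-large}, using Lemma~\ref{lem:district-structure-arbitrary} in place of Lemma~\ref{lem:district-structure}, and pay separately for the districts of $\DD_{\epsilon}^{\text{pop}}$. Let $x_W$ be the number of districts in $\DD'_W\setminus\DD_{\epsilon}^{\text{pop}}$ that are not $W$-rich, and $x_C$ the number of districts in $\DD\setminus(\DD'_C\cup\DD_{\epsilon}^{\text{pop}})$ that are $C$-poor. By Lemma~\ref{lem:district-structure-arbitrary}, $x_W+x_C=0$ when $|\DD_1|+2|\DD_2|+|\DD_3|=\Delta_C+1$, and $x_W+x_C\le 1$ when it equals $\Delta_C$.

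Since rich and poor districts are only defined outside $\DD_{\epsilon}^{\text{pop}}$, we have
\[|\DD_W^{\text{rich}}|\ge |\DD'_W|-|\DD'_W\cap\DD_{\epsilon}^{\text{pop}}|-x_W\quad\text{and}\quad |\DD_C^{\text{poor}}|\le |\DD'_C|+x_C.\]
Exactly as in Lemma~\ref{lem:difference-large},
\[|\DD'_W|-|\DD'_C|=\Delta_C-(|\DD_1|+2|\DD_2|+|\DD_3|)+|\DD'_1\cup\DD'_2\cup\DD'_3|\ge \max\{\alpha_Ck,1\}-1,\]
using Lemma~\ref{lem:mov-lemma} together with the case analysis. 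Combining these gives
\[|\DD_W^{\text{rich}}|-|\DD_C^{\text{poor}}|\ge \max\{\alpha_Ck-1,0\}-|\DD_{\epsilon}^{\text{pop}}|,\]
where the $x_W+x_C$ term is absorbed exactly as in Lemma~\ref{lem:difference-large}.

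For tiny $\epsilon$ we have $\DD_{\epsilon}^{\text{pop}}=\emptyset$, so the second claim follows at once.

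For big $\epsilon$ we use $|\DD_{\epsilon}^{\text{pop}}|\le \epsilon k/2^9$ and $\alpha_C\ge\epsilon/16$, so $|\DD_{\epsilon}^{\text{pop}}|\le \alpha_Ck/2^5$. Bigness also gives $k>2^{15}/\epsilon\ge 2^{11}/\alpha_C$, so $1\le \alpha_Ck/2^{11}$. Hence the difference is at least
\[\alpha_Ck\left(1-2^{-11}-2^{-5}\right)\ge \tfrac{3}{4}\alpha_Ck.\]

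The main obstacle is that Lemma~\ref{lem:district-structure-arbitrary} is stated but not proved at this point, and the argument above relies on it. If its proof is needed, one reruns the exchange arguments of Lemmas~\ref{lem:structure} and~\ref{lem:district-structure} with thresholds measured as fractions of each district's own population. For $d\in\DD^r$ with $r\le 2^8/\epsilon$, a fraction $\frac{\epsilon}{2^8\alpha_Crg(\epsilon)}$ of $n_d\le 2rN/k$ is at most $\frac{\epsilon N}{2^7\alpha_Cg(\epsilon)k}$ votes. This is still smaller than the $h(d')>\frac{\epsilon N}{2^5\alpha_Cg(\epsilon)k}$ votes saved by dropping a district $d'\in\DD'_1\cup\DD'_2\cup\DD'_3$, and even two such swaps cost at most $\frac{\epsilon N}{2^6\alpha_C g(\epsilon)k}$, so every contradiction goes through.

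The remaining direction to check is the converse: that districts $d'$ with large $h(d')$ lying outside $\DD_{\epsilon}^{\text{pop}}$ are rich or non-poor. For $d'\in\DD^r$ we have $n_{d'}\le 2rN/k$, so $h(d')>\frac{\epsilon N}{2^5\alpha_Cg(\epsilon)k}$ is a fraction of $n_{d'}$ exceeding $\frac{\epsilon}{2^6\alpha_Crg(\epsilon)}$, which is above the threshold. Districts in $\DD^{1/2}$ are handled the same way, with $n_{d'}\le N/k$.
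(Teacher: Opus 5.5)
Your proposal is correct and follows essentially the paper's proof. You rerun the counting of Lemma~\ref{lem:difference-large} with Lemma~\ref{lem:district-structure-arbitrary}, note $\DD_{\epsilon}^{\text{pop}}=\emptyset$ for tiny $\epsilon$, and for big $\epsilon$ absorb $|\DD_{\epsilon}^{\text{pop}}|\le \alpha_C k/2^5$ using $k>2^{11}/\alpha_C$. The only difference is that you charge $\DD_{\epsilon}^{\text{pop}}$ once, on the rich side, whereas the paper charges it on both sides; this is valid because poor districts are defined only outside $\DD_{\epsilon}^{\text{pop}}$, and it gives a slightly better constant.
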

\begin{proof}
    Note that if $\epsilon$ is tiny, $\DD_{\epsilon}^{\text{pop}}=\emptyset$, and by the same proof as in Lemma \ref{lem:difference-large}, we obtain $|\DD_W^{\text{rich}}|-|\DD_C^{\text{poor}}|\ge \max\{\alpha_{C}k-1,0\}$. Therefore, assume that $\epsilon$ is big. Then by Lemma \ref{lem:mov-lemma}, it holds that $\alpha_C\ge \epsilon/16$. Therefore $|\DD_{\epsilon}^{\text{pop}}|\le \frac{\epsilon k}{2^9}\le \frac{\alpha_C k}{2^5}$.
    As in the proof of Lemma \ref{lem:difference-large}, we let $x_W$ denote the number of districts in $\DD'_W\setminus \DD_{\epsilon}^{\text{pop}}$ that are not $W$-rich, and $x_C$ denote the number of districts in $\DD \setminus (\DD'_C\cup \DD_{\epsilon}^{\text{pop}})$ that are $C$-poor. Then from Lemma \ref{lem:district-structure-arbitrary}, we have 
    \[|\DD_W^{\text{rich}}|\ge |\DD_W|-|\DD_2|-(|\DD_3|-|\DD'_3|)-x_W-|\DD_{\epsilon}^{\text{pop}}|,\]
    and 
    \[|\DD_C^{\text{poor}}|\le |\DD_C|+(|\DD_1|-|\DD'_1|)+(|\DD_2|-|\DD'_2|)+x_C+|\DD_{\epsilon}^{\text{pop}}|.\]
    Thus $|\DD_W^{\text{rich}}|-|\DD_C^{\text{poor}}|\ge (\alpha_{C}k-1)-\frac{\alpha_{C}k}{2^4}\ge \frac{3\alpha_{C}k}{4}$, where the last inequality follows since $k>\frac{2^{15}}{\epsilon}\ge \frac{2^{11}}{\alpha_C}$.
\end{proof}

With the above lemmas in hand, we are now ready to present our algorithm for the case when districts have arbitrary populations.

\section{Algorithm for arbitrary population}
\label{sec:alg-arbitrary-population}

Our algorithm is summarised in Algorithm \ref{alg:WP-arbitrary-pop}. Since our definitions of rich and poor districts now depend on $r$, the number of votes sampled inside the districts of $\DD^r$ is a function of $r$. Specifically, for a particular value of $\epstl$ and $\alphatl$, we now need to query a random sample of $\widetilde{\Theta}(\frac{\alphatl^2r^2}{\epstl^2})$ votes inside a district of $\DD^r$ in order to decide whether it is rich or poor for any candidate. Also, we can tolerate an error of $\widetilde{O}(\alphatl k)$ in estimating the fraction of districts of $\DD^r$ that are rich or poor for each candidate. Since $|\DD^r|\le k/r$, this translates to a fractional error of $\widetilde{O}(\widetilde{\alpha}r)$. This is the main intuition behind our algorithm, which is described in Algorithm \ref{alg:WP-arbitrary-pop}.

\begin{algorithm}[H]
\caption{}
\label{alg:WP-arbitrary-pop}
\begin{algorithmic}[1]
\Require $\delta \in (0,1)$.

\State $\widetilde{\epsilon} \gets \frac{1}{2}$.
\While{$\widetilde{\epsilon} > \max\{\frac{2^{15}}{k}\log \frac{\log(1/\epstl)}{\delta\epstl},\frac{1}{\sqrt{N}}\}$}
\State $\widetilde{\alpha} \gets \frac{1}{2}$. 
        
 \While{$\widetilde{\alpha} \geq \widetilde{\epsilon}/16$} \label{line:alpha-loop}
        \State $r\gets \frac{1}{2}$.
            
        \While{$r\le \min\{\frac{2^{12}\log(1/\widetilde{\epsilon})}{\widetilde{\alpha}},\frac{2^8}{\widetilde{\epsilon}}\}$} \label{line:r-loop}
            
            \LongState{
                Sample a set $\widetilde{\DD}_{ \widetilde{\epsilon}, \widetilde{\alpha},r}$ of 
                 $q(\frac{\widetilde{\alpha}r}{2^{11}\log (1/\widetilde{\epsilon})}, \frac{\delta\epstl}{2^8m\log^2(1/\widetilde{\epsilon})})$ 
                districts uniformly at random and with replacement from $\DD^r$.
                }

            \If{$|\DD^r|\in (0,\frac{2^{19}}{\widetilde{\epsilon}r}\log\frac{1}{\widetilde{\epsilon}}\log\frac{\log(1/\widetilde{\epsilon})}{\delta\epstl}]$}
                    \LongState{
                    From each district in $\widetilde{\DD}_{ \widetilde{\epsilon}, \widetilde{\alpha},r}$, 
                sample and query $\ell_{\widetilde{\epsilon}, \widetilde{\alpha},r}:=q(\frac{\widetilde{\epsilon}}{2^{11} \widetilde{\alpha}r\log(1/\widetilde{\epsilon})}, \frac{\delta\epstl}{2^7|\DD^r|\log^2(1/\widetilde{\epsilon})})$ 
                votes uniformly at random and with replacement.
                    }  \label{line:small-Dr}

            \Else                   

                    \LongState{
                From each district in $\widetilde{\DD}_{ \widetilde{\epsilon}, \widetilde{\alpha},r}$, 
                sample and query $\ell_{\widetilde{\epsilon}, \widetilde{\alpha},r}:=q(\frac{\widetilde{\epsilon}}{2^{11} \widetilde{\alpha}r\log(1/\widetilde{\epsilon})}, \frac{\widetilde{\epsilon}r}{2^{18}\log(1/\widetilde{\epsilon})})$ 
                votes uniformly at random and with replacement.
                } \label{line:large-Dr}
            \EndIf

\algstore{algo-arbit-pop-1}
\end{algorithmic}
\end{algorithm}
\begin{algorithm}[H]
\begin{algorithmic}[1]
\algrestore{algo-arbit-pop-1}

            \LongState{
                For each $C \in \CC$, let 
                $\widehat{\DD}_{\widetilde{\epsilon}, \widetilde{\alpha},r, C}\subseteq
                \widetilde{\DD}_{\widetilde{\epsilon},  \widetilde{\alpha},r}$ be the districts where 
                the fraction of sampled votes received by $C$ exceeds the fraction 
                received by any $C'\neq C$ by strictly more than $\frac{\widetilde{\varepsilon}}{2^9\widetilde{\alpha}r \log(1/\widetilde{\epsilon})}$ and let $\veehat{\DD}_{\widetilde{\epsilon}, \widetilde{
                \alpha},r, C} \subseteq
                \widetilde{\DD}_{\widetilde{\epsilon}, \widetilde{\alpha},r}$ be the districts in which $C$ is either the winner or can be made the winner by altering at most $\frac{\widetilde{\varepsilon}}{2^9\widetilde{\alpha}r \log(1/\widetilde{\epsilon})}$-fraction of sampled votes.}
            
            \State $r\gets 2r$.

        \EndWhile

        \LongState{
        For each $C\in \CC$, let 
        $\widehat{f}_{\widetilde{\epsilon}, \widetilde{\alpha}}(C):= \sum_r \frac{|\widehat{\DD}_{\widetilde{\epsilon}, \widetilde{\alpha},r, C}|}{|\DD_{\widetilde{\epsilon}, \widetilde{\alpha},r}|}|\DD^r|$ and $\veehat{f}_{\widetilde{\epsilon}, \widetilde{\alpha}}(C):= \sum_r \frac{|\veehat{\DD}_{\widetilde{\epsilon}, \widetilde{\alpha},r, C}|}{|\DD_{\widetilde{\epsilon}, \widetilde{\alpha},r}|}|\DD^r|$.
        }
     \State $\widetilde{\alpha} \gets \frac{\widetilde{\alpha}}{2}$. 
   \EndWhile
    \If{$\exists C\in\CC$ such that $\forall C'\neq C, \exists \widetilde{\alpha}_{C'}$ such that
        $\widehat{f}_{\widetilde{\epsilon},\widetilde{\alpha}_{C'}}(C)-\veehat{f}_{\widetilde{\epsilon},\widetilde{\alpha}_{C'}}(C') \ge \frac{\widetilde{\alpha}_{C'}k}{2}$}
            \State \textbf{return} $C$ as the winner. \label{line:big-epsilon-arbitrary}
    \EndIf
    \State $\widetilde{\epsilon} \gets \frac{\widetilde{\epsilon}}{2}$.
\EndWhile

\While{$\widetilde{\epsilon} > \frac{1}{\sqrt{N}}$}
\State $\widetilde{\alpha} \gets \frac{1}{2}$. 
    \While{$\widetilde{\alpha} > \frac{1}{2k}$} 
        \State $r \gets \frac{1}{2}$.
        \While{$r\le \min\{\frac{2^{12}\log(2k)}{\widetilde{\alpha}},k\}$} \label{line:r-loop-tiny}
            \If{$\widetilde{\alpha} > \frac{8}{k}$}
    
                \LongState{
                    Sample a set $\widetilde{\DD}_{ \widetilde{\epsilon}, \widetilde{\alpha},r}$ of 
                     $q(\frac{\widetilde{\alpha}r}{2^{11}\log(2k)}, \frac{\delta\epstl}{2^8m\log^2 (2k)})$ 
                    districts uniformly at random and with replacement.
                    }
            \Else
                \State $\widetilde{\DD}_{ \widetilde{\epsilon}, \widetilde{\alpha},r} \gets \DD^r$.
            \EndIf
            
            \LongState{
                From each district in $\widetilde{\DD}_{ \widetilde{\epsilon}, \widetilde{\alpha},r}$, 
                sample and query $\ell_{\widetilde{\epsilon}, \widetilde{\alpha},r}:=q(\frac{\widetilde{\epsilon}}{2^{11} \widetilde{\alpha}r\log (2k)}, \frac{\delta\epstl}{4k\log^2 (2k)})$ 
                votes uniformly at random and with replacement.
                }

            \LongState{
                For each $C \in \CC$, let 
                $\widehat{\DD}_{\widetilde{\epsilon}, \widetilde{\alpha},r, C}\subseteq
                \widetilde{\DD}_{\widetilde{\epsilon},  \widetilde{\alpha},r}$ be the districts where 
                the fraction of sampled votes received by $C$ exceeds the fraction 
                received by any $C'\neq C$ by strictly more than $\frac{\widetilde{\varepsilon}}{2^9\widetilde{\alpha}r \log (2k)}$ and let $\veehat{\DD}_{\widetilde{\epsilon}, \widetilde{
                \alpha},r, C} \subseteq
                \widetilde{\DD}_{\widetilde{\epsilon}, \widetilde{\alpha},r}$ be the districts in which $C$ is either the winner or can be made the winner by altering at most $\frac{\widetilde{\varepsilon}}{2^9\widetilde{\alpha}r \log (2k)}$-fraction of sampled votes.}

        
            \State $r\gets 2r$.    
        \EndWhile
        
        \LongState{
        For each $C\in \CC$, let 
        $\widehat{f}_{\widetilde{\epsilon}, \widetilde{\alpha}}(C):= \sum_r \frac{|\widehat{\DD}_{\widetilde{\epsilon}, \widetilde{\alpha},r, C}|}{|\DD_{\widetilde{\epsilon}, \widetilde{\alpha},r}|}|\DD^r|$ and $\veehat{f}_{\widetilde{\epsilon}, \widetilde{\alpha}}(C):= \sum_r \frac{|\veehat{\DD}_{\widetilde{\epsilon}, \widetilde{\alpha},r, C}|}{|\DD_{\widetilde{\epsilon}, \widetilde{\alpha},r}|}|\DD^r|$. 
        }
     \State $\widetilde{\alpha} \gets \frac{\widetilde{\alpha}}{2}$. 

    \EndWhile
    \If{$\exists C\in\CC$ such that $\forall C'\neq C, \exists \widetilde{\alpha}_{C'}$ such that
        $\widehat{f}_{\widetilde{\epsilon},\widetilde{\alpha}_{C'}}(C)-\veehat{f}_{\widetilde{\epsilon},\widetilde{\alpha}_{C'}}(C') \ge \frac{\widetilde{\alpha}_{C'}k}{2}$ if $\widetilde{\alpha}_{C'}> \frac{8}{k}$, and $\widehat{f}_{\widetilde{\epsilon},\widetilde{\alpha}_{C'}}(C)-\veehat{f}_{\widetilde{\epsilon},\widetilde{\alpha}_{C'}}(C') \ge 0$ otherwise} 
            \State \textbf{return} $C$ as the winner. \label{line:tiny-epsilon-arbitrary}
    \EndIf
\EndWhile

\State Query all votes and return the winner.
\end{algorithmic}
\end{algorithm}

We generalise our definitions of good and bad districts from Section \ref{sec:alg-equal-population}. We call a district $d\in \widetilde{D}_{\widetilde{\epsilon},\widetilde{\alpha},r}$ to be \emph{good} if for every $C\in \CC$, the difference between the fractions of actual and sampled votes received by $C$ in $d$ is at most $\frac{\widetilde{\epsilon}}{2^{11} \widetilde{\alpha}rg(\widetilde{\epsilon})}$; otherwise, $d$ is said to be \emph{bad}. The following lemma is an analogue of Lemma \ref{lem:good-districts}.

\begin{lemma}
\label{lem:good-districts-arbitrary}
    The following statements hold.
    \begin{enumerate}[label=(\roman*)]
        \item For each big $\widetilde{\epsilon}$, with probability at least $1-\frac{\delta\epstl}{2}$, it holds that for every $\widetilde{\alpha},r$, at least $(1-\frac{\widetilde{\alpha}r}{2^{10}\log(1/\widetilde{\epsilon})})$-fraction of districts of $\widetilde{D}_{\widetilde{\epsilon},\widetilde{\alpha},r}$ are good.
        \item For each tiny $\widetilde{\epsilon}$, with probability at least $1-\frac{\delta\epstl}{2}$, it holds that for every $\widetilde{\alpha},r$, all districts of $\widetilde{D}_{\widetilde{\epsilon},\widetilde{\alpha},r}$ are good.       
    \end{enumerate}
\end{lemma}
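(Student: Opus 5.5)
The plan is to mirror the proof of Lemma~\ref{lem:good-districts}, using the same ``alternate sampling view''. Fix $\widetilde{\epsilon}$, $\widetilde{\alpha}$ and $r$. Imagine that $\ell_{\widetilde{\epsilon},\widetilde{\alpha},r}$ votes are sampled uniformly with replacement from \emph{every} district of $\DD^r$. We then sample the districts of $\widetilde{\DD}_{\widetilde{\epsilon},\widetilde{\alpha},r}$, or take all of $\DD^r$ in the tiny case with $\widetilde{\alpha}\le 8/k$, and reveal only their samples. This produces exactly the algorithm's distribution. By Lemma~\ref{lem:dbmain}, the accuracy parameter $\frac{\widetilde{\epsilon}}{2^{11}\widetilde{\alpha}rg(\widetilde{\epsilon})}$ in $\ell_{\widetilde{\epsilon},\widetilde{\alpha},r}$ makes each district good with probability at least $1-\delta'$, where $\delta'$ is the failure parameter used in that line. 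We finish with a union bound over the $O(g(\widetilde{\epsilon}))$ values of $\widetilde{\alpha}$ and the $O(g(\widetilde{\epsilon}))$ values of $r$; this is why the error parameters carry $\log^2$ factors.

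\textbf{Part (ii), tiny $\widetilde{\epsilon}$.} Here $\delta'=\frac{\delta\epstl}{4k\log^2(2k)}$. A union bound over the at most $k$ districts of $\DD^r$ shows that all of them are good with probability at least $1-\frac{\delta\epstl}{4\log^2(2k)}$. There are at most $\log(2k)$ values of $\widetilde{\alpha}$ and at most $\log(2k)+1\le 2\log(2k)$ values of $r$ (since $r\le k$). The total failure probability is therefore at most $\frac{\delta\epstl}{2}$.

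\textbf{Part (i), big $\widetilde{\epsilon}$.} This splits according to line~\ref{line:small-Dr} versus line~\ref{line:large-Dr}.

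If $|\DD^r|$ is small, the per-district failure probability is $\frac{\delta\epstl}{2^7|\DD^r|\log^2(1/\widetilde{\epsilon})}$. A union bound over $\DD^r$ then makes all districts good with probability at least $1-\frac{\delta\epstl}{2^7\log^2(1/\widetilde{\epsilon})}$, which is stronger than needed.

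If $|\DD^r|>\frac{2^{19}}{\widetilde{\epsilon}r}\log\frac{1}{\widetilde{\epsilon}}\log\frac{\log(1/\widetilde{\epsilon})}{\delta\epstl}$, each district is bad with probability at most $\frac{\widetilde{\epsilon}r}{2^{18}\log(1/\widetilde{\epsilon})}$. The expected number of bad districts in $\DD^r$ is then at most $\mu=\frac{\widetilde{\epsilon}r|\DD^r|}{2^{18}\log(1/\widetilde{\epsilon})}$. Applying Lemma~\ref{lem:chernoff-large-dev} at threshold $8\mu$, the probability that more than $\frac{\widetilde{\epsilon}r}{2^{15}\log(1/\widetilde{\epsilon})}|\DD^r|$ districts are bad is at most $2^{-8\mu}$. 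Using the lower bound on $|\DD^r|$, we get $8\mu\ge 2^4\log\frac{\log(1/\widetilde{\epsilon})}{\delta\epstl}$, so this probability is at most $\frac{\delta\epstl}{2^6\log^2(1/\widetilde{\epsilon})}$.

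Conditioned on this, we apply Lemma~\ref{lem:dbmain} to the district sample $\widetilde{\DD}_{\widetilde{\epsilon},\widetilde{\alpha},r}$, which is drawn with accuracy $\frac{\widetilde{\alpha}r}{2^{11}\log(1/\widetilde{\epsilon})}$. With probability at least $1-\frac{\delta\epstl}{2^8m\log^2(1/\widetilde{\epsilon})}$, the bad fraction in the sample is at most
\[
\frac{\widetilde{\epsilon}r}{2^{15}\log(1/\widetilde{\epsilon})}+\frac{\widetilde{\alpha}r}{2^{11}\log(1/\widetilde{\epsilon})}\le\frac{\widetilde{\alpha}r}{2^{10}\log(1/\widetilde{\epsilon})},
\]
using $\widetilde{\alpha}\ge\widetilde{\epsilon}/16$.

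Finally, there are at most $\log(16/\widetilde{\epsilon})\le 5\log(1/\widetilde{\epsilon})$ values of $\widetilde{\alpha}$ and at most $\log(2^9/\widetilde{\epsilon})\le 10\log(1/\widetilde{\epsilon})$ values of $r$. Each pair fails with probability at most $\frac{\delta\epstl}{2^5\log^2(1/\widetilde{\epsilon})}$, so a union bound gives total failure at most $\frac{\delta\epstl}{2}$, possibly after adjusting constants.

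The main obstacle is this constant bookkeeping in the large-$|\DD^r|$ Chernoff step. We must check that the threshold $2^{19}/(\widetilde{\epsilon}r)\cdots$ is exactly what makes $2^{-8\mu}$ small enough to survive the $\log^2(1/\widetilde{\epsilon})$-fold union bound, and also handle $r=1/2$ and the range where $\log(1/\widetilde{\epsilon})$ is small (e.g.\ $\widetilde{\epsilon}=1/2$). If the constants do not close, the fallback is to use a threshold $c\mu$ with larger $c$ in Lemma~\ref{lem:chernoff-large-dev}, since the target fraction $\frac{\widetilde{\alpha}r}{2^{11}\log(1/\widetilde{\epsilon})}$ leaves room of a factor $2^{4}$ relative to $\mu/|\DD^r|$.
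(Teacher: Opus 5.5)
Your proposal follows the paper's proof step for step. In both, you take the alternate view with one vote sample per district of $\DD^r$. In the branch of line~\ref{line:small-Dr} you take a union bound over $\DD^r$. In the branch of line~\ref{line:large-Dr} you apply Lemma~\ref{lem:chernoff-large-dev} at threshold $8\mu=\frac{\epstl r}{2^{15}\log(1/\epstl)}|\DD^r|$, then Lemma~\ref{lem:dbmain} on the district sample with $\alphatl\ge\epstl/16$. You finish with a union bound over the pairs $(\alphatl,r)$. Part (ii) also matches, and you spell out the union bound over the at most $k$ districts of $\DD^r$, which the paper's one-line argument leaves implicit.

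The one step that fails as written is the closing union bound in part (i).

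\begin{itemize}
\item With your per-pair bound $\frac{\delta\epstl}{2^5\log^2(1/\epstl)}$ and $5\log(1/\epstl)\cdot 10\log(1/\epstl)$ pairs, you get $\frac{50}{32}\delta\epstl$, not $\frac{\delta\epstl}{2}$.
\item Your count of $r$-values is off by one. Since $r$ ranges over $2^{-1},2^0,\dots,2^8/\epstl$, there are $\log(2^{10}/\epstl)$ values, not $\log(2^9/\epstl)$.
\item Your fallback of raising the Chernoff threshold above $8\mu$ is not available. When $\alphatl=\epstl/16$, the bad fraction in $\DD^r$ you can tolerate (given the sampling error $\frac{\alphatl r}{2^{11}\log(1/\epstl)}$) is $\frac{\alphatl r}{2^{11}\log(1/\epstl)}=8\mu/|\DD^r|$ exactly. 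So the slack is a factor $2^3$, not $2^4$.
\end{itemize}

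No fallback is needed, because your tail estimate is very lossy. Write $L=\log(1/\epstl)\ge 1$ and $x=\frac{\delta\epstl}{L}\le\frac{\epstl}{L}$. The lower bound on $|\DD^r|$ gives $8\mu>16\log(1/x)$. Hence $2^{-8\mu}<x^{16}\le\frac{x}{2^8L}=\frac{\delta\epstl}{2^8L^2}$, where the middle inequality holds because $2^8x^{15}L\le 2^8\epstl^{15}\le 2^{-7}$. Each pair then fails with probability at most $\frac{\delta\epstl}{2^7L^2}$ in both branches. Summing over the $(L+4)(L+10)\le 55L^2$ pairs gives at most $\frac{55}{128}\delta\epstl<\frac{\delta\epstl}{2}$. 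This is exactly the paper's bookkeeping, and no parameter of the algorithm needs to change.
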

\begin{proof}
\begin{enumerate}[label=(\roman*)]
    \item The proof is analogous to the proof of Lemma \ref{lem:good-districts}. Once again we consider the alternate view of our algorithm wherein we first sample $\ell_{\widetilde{\epsilon},\widetilde{\alpha},r}$ votes from each district of $\DD^r$ and then sample $q(\frac{\widetilde{\alpha}r}{2^{11}\log (1/\widetilde{\epsilon})}, \frac{\delta\epstl}{2^8m\log^2(1/\widetilde{\epsilon})})$ districts from $\DD^r$. If $|\DD^r|\in (0,\frac{2^{19}}{\widetilde{\epsilon}r}\log\frac{1}{\widetilde{\epsilon}}\log\frac{\log(1/\widetilde{\epsilon})}{\delta\epstl}]$, by our choice of parameters, the probability that a district is good is at least $1-\frac{\delta\epstl}{2^7|\DD^r|\log^2(1/\widetilde{\epsilon})}$, and therefore all districts in $\DD^r$ are good with probability at least $1-\frac{\delta\epstl}{2^7\log^2(1/\widetilde{\epsilon})}$. Otherwise if $|\DD^r|>\frac{2^{19}}{\widetilde{\epsilon}r}\log\frac{1}{\widetilde{\epsilon}}\log\frac{\log(1/\widetilde{\epsilon})}{\delta\epstl}$, each district is good with probability at least $1-\frac{\widetilde{\epsilon}r}{2^{18}\log(1/\widetilde{\epsilon})}$, and thus the expected number of bad districts in $|\DD^r|$ is at most $\frac{\widetilde{\epsilon}r}{2^{18}\log(1/\widetilde{\epsilon})}|\DD^r|$. By Lemma \ref{lem:chernoff-large-dev}, the probability that the number of bad districts exceeds $\frac{\widetilde{\epsilon}r}{2^{15}\log(1/\widetilde{\epsilon})}|\DD^r|$ is at most $2^{-\frac{\widetilde{\epsilon}r}{2^{15}\log(1/\widetilde{\epsilon})}|\DD^r|}\le \frac{\delta\epstl}{2^8\log^2(1/\widetilde{\epsilon})}$. Conditioned on this, by Lemma \ref{lem:dbmain}, it holds with probability at least $1-\frac{\delta\epstl}{2^8m\log^2(1/\widetilde{\epsilon})}\ge 1-\frac{\delta\epstl}{2^8\log^2(1/\widetilde{\epsilon})}$ that the fraction of bad districts in $\widetilde{\DD}_{\widetilde{\epsilon},\widetilde{\alpha},r}$ is at most $\frac{\widetilde{\epsilon}r}{2^{15}\log(1/\widetilde{\epsilon})}+\frac{\widetilde{\alpha}r}{2^{11}\log(1/\widetilde{\epsilon})}\le \frac{\widetilde{\alpha}r}{2^{10}\log(1/\widetilde{\epsilon})}$, since $\widetilde{\alpha}\ge \widetilde{\epsilon}/16$. Therefore, with probability at least $1-\frac{\delta\epstl}{2^7\log^2(1/\widetilde{\epsilon})}$, for each $\widetilde{\alpha}$ and $r$, it holds that at least $1-\frac{\widetilde{\alpha}r}{2^{10}\log(1/\widetilde{\epsilon})}$-fraction of districts in $\widetilde{\DD}_{\widetilde{\epsilon},\widetilde{\alpha},r}$ are good. The claim then follows by taking a union bound over all the $\log\frac{2^{10}}{\widetilde{\epsilon}}$ values of $r$ and the $\log\frac{2^{4}}{\widetilde{\epsilon}}$ values of $\widetilde{\alpha}$.
    \item For each $\widetilde{\alpha},r$, the probability that a district is good is at least $1-\frac{\delta\epstl}{4\log^2(2k)}$. The claim follows by taking a union bound over all the at most $\log (2k)$ values of $\widetilde{\alpha}$ and at most $\log (4k)$ values of $r$.
\end{enumerate}    
\end{proof}

Let $p>0$ and $C\in \CC$. For any $\widetilde{\epsilon},\widetilde{\alpha},r$, let $\DD_{\widetilde{\epsilon}, \widetilde{\alpha},r,C,p}^{\text{rich}} \subseteq \DD^r$ be the districts where the fraction of votes received by $C$ exceeds the fraction received by any $C'\neq C$ by strictly more than $\frac{\widetilde{\varepsilon}}{p\widetilde{\alpha}r g(\widetilde{\epsilon})}$. Also, let $\DD_{\widetilde{\epsilon}, \widetilde{\alpha},r,C,p}^{\text{poor}} \subseteq \DD^r$ be the districts where $C$ is either the winner or can be made the winner by altering at most $\frac{\widetilde{\varepsilon}}{p\widetilde{\alpha}r g(\widetilde{\epsilon})}$-fraction of votes. Let $\DD_{\widetilde{\epsilon}, \widetilde{\alpha},C,p}^{\text{rich}}=\cup_r \DD_{\widetilde{\epsilon}, \widetilde{\alpha},r,C,p}^{\text{rich}}$, and $\DD_{\widetilde{\epsilon}, \widetilde{\alpha},C,p}^{\text{poor}} = \cup_r \DD_{\widetilde{\epsilon}, \widetilde{\alpha},r,C,p}^{\text{poor}}$. The following lemma is an analogue of Lemma \ref{lem:large-small-districts} in the setting of arbitrary population. Essentially, it states that the true fraction of districts of $\DD_{\widetilde{\epsilon}, \widetilde{\alpha},r,C,p}^{\text{rich}}$ (and $\DD_{\widetilde{\epsilon}, \widetilde{\alpha},C,p}^{\text{poor}}$) in $\DD^r$ is preserved in the sampled districts for each value of $r$.

\begin{lemma}
\label{lem:large-small-districts-arbitrary}
    For any $\widetilde{\epsilon}$, with probability at least $1-\frac{\delta\epstl}{2}$, both the following events hold for each $\widetilde{\alpha},r$ and each candidate $C\in \CC$.
    \begin{enumerate}
        \item The fraction of districts in $\widetilde{\DD}_{\widetilde{\epsilon}, \widetilde{\alpha},r}$ that belong to $\DD_{\widetilde{\epsilon}, \widetilde{\alpha},r,C,p}^{\text{rich}}$ lies in $(\frac{|\DD_{\widetilde{\epsilon}, \widetilde{\alpha},C,p,r}^{\text{rich}}|}{|\DD^r|}-\frac{\widetilde{\alpha}r}{2^{11}g(\widetilde{\epsilon})},\frac{|\DD_{\widetilde{\epsilon}, \widetilde{\alpha},C,p,r}^{\text{rich}}|}{|\DD^r|}+\frac{\widetilde{\alpha}r}{2^{11}g(\widetilde{\epsilon})})$.
        \item The fraction of districts in $\widetilde{\DD}_{\widetilde{\epsilon}, \widetilde{\alpha},r}$ that belong to $\DD_{\widetilde{\epsilon}, \widetilde{\alpha},r,C,p}^{\text{poor}}$ lies in $(\frac{|\DD_{\widetilde{\epsilon}, \widetilde{\alpha},r,C,p}^{\text{poor}}|}{|\DD^r|}-\frac{\widetilde{\alpha}r}{2^{11}g(\widetilde{\epsilon})},\frac{|\DD_{\widetilde{\epsilon}, \widetilde{\alpha},r,C,p}^{\text{poor}}|}{|\DD^r|}+\frac{\widetilde{\alpha}r}{2^{11}g(\widetilde{\epsilon})})$. 
    \end{enumerate}
\end{lemma}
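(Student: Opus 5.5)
The plan is to mirror the proof of Lemma \ref{lem:large-small-districts}, now working separately in each population class $\DD^r$. Fix $\widetilde{\epsilon}$, a value of $\widetilde{\alpha}$, a value of $r$ visited by the $r$-loop, and a candidate $C$. The key observation is that $\widetilde{\DD}_{\widetilde{\epsilon},\widetilde{\alpha},r}$ is a sample of districts drawn uniformly with replacement from $\DD^r$. Membership of a district in $\DD_{\widetilde{\epsilon},\widetilde{\alpha},r,C,p}^{\text{rich}}$ (or in the poor set) depends only on its true votes, not on the sampled votes. So the sample is exactly the query model of Lemma \ref{lem:dbmain}, applied to a two-``candidate'' single-district election. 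The ``voters'' are the districts of $\DD^r$, and each votes ``in'' or ``out'' according to membership in the set.

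With $q(\frac{\widetilde{\alpha}r}{2^{11}g(\widetilde{\epsilon})},\delta')$ sampled districts, Lemma \ref{lem:dbmain} therefore gives an estimate of the true fraction $|\DD_{\widetilde{\epsilon},\widetilde{\alpha},r,C,p}^{\text{rich}}|/|\DD^r|$ within additive error $\frac{\widetilde{\alpha}r}{2^{11}g(\widetilde{\epsilon})}$, except with probability $\delta'$. Here $\delta'=\frac{\delta\epstl}{2^8m\,g(\widetilde{\epsilon})^2}$, matching the parameters in Algorithm \ref{alg:WP-arbitrary-pop}. For the strict inequality in the statement, I will invoke the lemma with a marginally smaller accuracy, or note that the constants leave slack. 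The same reasoning applies verbatim to the poor set.

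Two degenerate situations need separate handling.
\begin{itemize}
\item When $\widetilde{\epsilon}$ is tiny and $\widetilde{\alpha}\le 8/k$, the algorithm sets $\widetilde{\DD}_{\widetilde{\epsilon},\widetilde{\alpha},r}=\DD^r$. The sampled fraction then equals the true fraction exactly, so the event holds deterministically.
\item When $\DD^r=\emptyset$, the statement is vacuous, since that class contributes nothing.
\end{itemize}
I should also check that accuracy parameters like $\frac{\widetilde{\alpha}r}{2^{11}g}$ may exceed $1$ for large $r$. In that case the claim is trivial, because fractions lie in $[0,1]$.

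Finally I take a union bound over all choices.
\begin{itemize}
\item There are $m$ candidates.
\item There are two events (rich and poor).
\item There are at most $O(g(\widetilde{\epsilon}))$ values of $\widetilde{\alpha}$: about $\log(16/\widetilde{\epsilon})$ when big, and $\log(2k)$ when tiny.
\item There are at most $O(g(\widetilde{\epsilon}))$ values of $r$: about $\log(2^{10}/\widetilde{\epsilon})$ when big, and $\log(4k)$ when tiny.
\end{itemize}
This gives total failure probability at most
\[2m\cdot O(g(\widetilde{\epsilon})^2)\cdot\frac{\delta\epstl}{2^8m\,g(\widetilde{\epsilon})^2}\le\frac{\delta\epstl}{2}.\]

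The main obstacle is this bookkeeping. I need to verify that the counts of $\widetilde{\alpha}$ and $r$ values, e.g.\ $\log(16/\widetilde{\epsilon})\cdot\log(2^{10}/\widetilde{\epsilon})$, are at most $2^6 g(\widetilde{\epsilon})^2$ or so, for $\widetilde{\epsilon}\le 1/2$ and, in the tiny case, for $k\ge1$. I also need to confirm that $g(\widetilde{\epsilon})$ matches the $\log(1/\widetilde{\epsilon})$ versus $\log(2k)$ used in each branch of the algorithm. If the constants are slightly off, I would absorb the discrepancy by noting that $q(\cdot,\cdot)$ has a hidden constant that can be enlarged.
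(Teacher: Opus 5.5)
Your proposal is correct and is essentially the paper's argument. The paper proves this lemma exactly as it proves Lemma~\ref{lem:large-small-districts}: it applies Lemma~\ref{lem:dbmain} to the uniform with-replacement sample of districts from $\DD^r$, using the algorithm's accuracy $\frac{\widetilde{\alpha}r}{2^{11}g(\widetilde{\epsilon})}$ and confidence $\frac{\delta\epstl}{2^8m\,g(\widetilde{\epsilon})^2}$, and then takes a union bound over candidates and over the $O(g(\widetilde{\epsilon}))$ values each of $\widetilde{\alpha}$ and $r$ (membership in the rich/poor sets depends only on true votes).

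The bookkeeping you flagged closes without enlarging constants. The big (resp.\ tiny) branch of Algorithm~\ref{alg:WP-arbitrary-pop} uses $\log(1/\widetilde{\epsilon})$ (resp.\ $\log(2k)$), matching $g(\widetilde{\epsilon})$. For big $\widetilde{\epsilon}=2^{-j}$ one has $(j+4)(j+10)\le 55j^2$, and for tiny $\widetilde{\epsilon}$ one has $\log(2k)\log(4k)\le 2\log^2(2k)$, so the total failure probability stays below $\frac{\delta\epstl}{2}$ in both cases.
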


Similar to Section \ref{sec:alg-equal-population}, for each $\epstl$, we let $\mathcal{E}_{\epstl}$ be the event that both the events mentioned in Lemma \ref{lem:large-small-districts-arbitrary} and either of the events stated in Lemma \ref{lem:good-districts-arbitrary} depending on whether $\epstl$ is big or tiny, hold. Then $\text{Pr}(\mathcal{E}_{\epstl})\ge 1-\delta\epstl$. Letting $\mathcal{E}= \cap_{\epstl}\mathcal{E}_{\epstl}$, we obtain $\text{Pr}(\mathcal{E}) \ge 1-\delta\sum_{i\in \mathbb{N}}\frac{1}{2^i}=1-\delta$.

Let $\epsilon^*$ be the value of $\epstl$ when Algorithm \ref{alg:WP-arbitrary-pop} halts.

\begin{lemma}
\label{lem:correct-when-halt-arbitrary}
    Conditioned on the event $\mathcal{E}$, Algorithm \ref{alg:WP-arbitrary-pop} returns $W$ as the winner.
\end{lemma}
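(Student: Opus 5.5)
The plan is to mirror the proof of Lemma~\ref{lem:correct-when-halt}, now doing the accounting separately in each population class $\DD^r$ and then summing over $r$. Let $C^*$ be the returned candidate and fix any $C'\neq C^*$ with its witness $\widetilde{\alpha}:=\widetilde{\alpha}_{C'}$. The goal is to show $|\DD_{C^*}|>|\DD_{C'}|$, or $|\DD_{C^*}|\ge|\DD_{C'}|$ in the tiny case with $\widetilde{\alpha}\le 8/k$. Since $C'$ is arbitrary and $E$ has an unambiguous winner, this forces $C^*=W$.

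\textbf{Case: $\epsilon^*$ big.} Fix a class $r$ that the loop at line~\ref{line:r-loop} visits. Consider a district of $\widehat{\DD}_{\epsilon^*,\widetilde{\alpha},r,C^*}$ that is good in the sense of Lemma~\ref{lem:good-districts-arbitrary}. In it, the true lead of $C^*$ over every other candidate exceeds
\[
\frac{\epsilon^*}{2^9\widetilde{\alpha}r\log(1/\epsilon^*)}-\frac{\epsilon^*}{2^{11}\widetilde{\alpha}r\log(1/\epsilon^*)}>\frac{\epsilon^*}{2^{10}\widetilde{\alpha}r\log(1/\epsilon^*)},
\]
so the district lies in $\DD^{\text{rich}}_{\epsilon^*,\widetilde{\alpha},r,C^*,2^{10}}$ and $C^*$ wins it.

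By Lemma~\ref{lem:good-districts-arbitrary}(i), the bad districts are at most a $\frac{\widetilde{\alpha}r}{2^{10}\log(1/\epsilon^*)}$ fraction of the sample. By Lemma~\ref{lem:large-small-districts-arbitrary}(1) with $p=2^{10}$, the sampled fraction of rich districts is within $\frac{\widetilde{\alpha}r}{2^{11}\log(1/\epsilon^*)}$ of the true fraction in $\DD^r$. Multiplying by $|\DD^r|\le k/r$ (which holds for $r\ge 1$; for $r=1/2$ use $|\DD^{1/2}|\le k$), each class loses only $O\!\left(\frac{\widetilde{\alpha}k}{\log(1/\epsilon^*)}\right)$ in the estimate $\widehat f$. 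Summing over the $O(\log(1/\epsilon^*))$ values of $r$ gives
\[
|\DD_{C^*}|\ \ge\ \widehat f_{\epsilon^*,\widetilde{\alpha}}(C^*)-c\,\widetilde{\alpha}k
\]
for a small constant $c$. The symmetric argument uses $\veehat{\DD}$ and Lemma~\ref{lem:large-small-districts-arbitrary}(2): a good sampled district outside $\veehat{\DD}_{\epsilon^*,\widetilde{\alpha},r,C'}$ needs more than a $\frac{\epsilon^*}{2^{10}\widetilde{\alpha}r\log(1/\epsilon^*)}$ fraction of changes to make $C'$ win, so $C'$ does not win there. This gives an upper bound on the number of districts $C'$ wins within the visited classes of $\veehat f(C')+c\,\widetilde{\alpha}k$.

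\textbf{Main obstacle: unvisited classes.} The loop caps $r$ at $\min\{2^{12}\log(1/\epsilon^*)/\widetilde{\alpha},\,2^8/\epsilon^*\}$, so districts with larger $r$ are unaccounted for. These can only add to $|\DD_{C'}|$, and their number must be bounded. Districts with population above $2\cdot 2^{12}\log(1/\epsilon^*)N/(\widetilde{\alpha}k)$ number at most $\frac{\widetilde{\alpha}k}{2^{13}\log(1/\epsilon^*)}$ by averaging. Districts with $r>2^8/\epsilon^*$ number at most $\frac{\epsilon^* k}{2^8}\le \frac{\widetilde{\alpha}k}{2^4}$, using $\widetilde{\alpha}\ge\epsilon^*/16$. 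This second bound is the loose spot: a $1/16$ loss must be absorbed. I would check the constants carefully, tightening via $|\DD_{\epsilon}^{\text{pop}}|\le \epsilon k/2^9$ as in Lemma~\ref{lem:difference-large-arbitrary}, so that all the losses total well under $\widetilde{\alpha}k/2$. Then the halting condition $\widehat f(C^*)-\veehat f(C')\ge \widetilde{\alpha}k/2$ gives $|\DD_{C^*}|-|\DD_{C'}|>0$.

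\textbf{Case: $\epsilon^*$ tiny.} By Lemma~\ref{lem:good-districts-arbitrary}(ii), there are no bad districts, and $r$ runs up to $k$, which covers every district since $n_d\le N$. If $\widetilde{\alpha}>8/k$, only the sampling errors of Lemma~\ref{lem:large-small-districts-arbitrary} remain, summed over $O(\log k)$ classes. These total a small fraction of $\widetilde{\alpha}k$, so the same inequality follows. If $\widetilde{\alpha}\le 8/k$, then $\widetilde{\DD}_{\epsilon^*,\widetilde{\alpha},r}=\DD^r$ exactly, so $|\DD_{C^*}|\ge\widehat f(C^*)\ge\veehat f(C')\ge|\DD_{C'}|$, and unambiguity gives $C^*=W$.
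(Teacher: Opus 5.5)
Your strategy is the paper's. You do per-class accounting in each $\DD^r$ via Lemmas~\ref{lem:good-districts-arbitrary} and~\ref{lem:large-small-districts-arbitrary}, with a per-class loss of $O(\widetilde{\alpha}r/\log(1/\epsilon^*))\cdot|\DD^r|=O(\widetilde{\alpha}k/\log(1/\epsilon^*))$ summed over $O(\log(1/\epsilon^*))$ classes. You charge the unvisited classes and $\DD^{\text{pop}}_{\epsilon^*}$ only against the upper bound on $|\DD_{C'}|$; the lower bound on $|\DD_{C^*}|$ can ignore them. The big case is fine, and your worry about the $1/16$ loss is unfounded. That term, plus the two per-candidate sampling losses (each at most $\widetilde{\alpha}k/2^5$) and the negligible cap term, stays far below the $\widetilde{\alpha}k/2$ slack. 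One small correction: districts beyond the logarithmic cap satisfy $n_d>r_{\max}N/k$ with $r_{\max}=2^{12}\log(1/\epsilon^*)/\widetilde{\alpha}$, not $n_d>2r_{\max}N/k$, since $r_{\max}$ need not be a power of $2$. This changes only a constant.

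The step that fails as written is in the tiny case with $\widetilde{\alpha}_{C'}>8/k$. You claim that $r$ runs up to $k$, so every district is covered. But the loop at line~\ref{line:r-loop-tiny} stops at $\min\{2^{12}\log(2k)/\widetilde{\alpha}_{C'},\,k\}$, and for large $\widetilde{\alpha}_{C'}$ the first term can be far below $k$. For example, $\widetilde{\alpha}_{C'}=1/2$ gives the cap $2^{13}\log(2k)$. Tininess of $\epsilon^*$ does not rule this out, since it only requires $k\le \frac{2^{15}}{\epsilon^*}\log\frac{\log(1/\epsilon^*)}{\delta\epsilon^*}$. Districts in the unvisited classes $R'_2$ may be won by $C'$ and are not counted in $\veehat{f}(C')$, so your bound $|\DD_{C'}|\le \veehat{f}(C')+c\,\widetilde{\alpha}_{C'}k$ does not follow.

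The repair is the averaging you already used in the big case, and it is what the paper does. Every district in an unvisited class has $n_d>\frac{2^{12}\log(2k)}{\widetilde{\alpha}_{C'}}\cdot\frac{N}{k}$, so there are fewer than $\frac{\widetilde{\alpha}_{C'}k}{2^{12}\log(2k)}$ of them, which the slack absorbs. Your claim that all classes are visited is correct only when $\widetilde{\alpha}_{C'}\le 8/k$, where $2^{12}\log(2k)/\widetilde{\alpha}_{C'}\ge 2^{9}k\log(2k)>k$.
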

\begin{proof}
    The proof is analogous to the proof of Lemma \ref{lem:correct-when-halt}. Let $C^*$ be the candidate returned by Algorithm \ref{alg:WP-arbitrary-pop}, and let $C'\neq C^*$ be any other candidate. Suppose that $\epsilon^*$ is big. Consider the iteration of the while loop in line \ref{line:alpha-loop} with $\widetilde{\alpha}=\widetilde{\alpha}_{C'}$. Let $R_1$ be the set of values of $r$ considered by the while loop in line \ref{line:r-loop}, and let $R_2$ be the set of remaining values of $r$, i.e., those that lie in the range $(\frac{2^{12}\log(1/\epsilon^*)}{\widetilde{\alpha}_{C'}},\frac{2^8}{\epsilon^*}]$ (if any). For each $r\in R_1$, let $\widetilde{\DD}_{\epsilon^*,\widetilde{\alpha}_{C'},r}^{\text{bad}} \subseteq \widetilde{\DD}_{\epsilon^*,\widetilde{\alpha}_{C'},r}$ be the collection of sampled districts that are bad. By Lemma \ref{lem:good-districts-arbitrary}(i), $\frac{|\widetilde{\DD}_{\epsilon^*,\widetilde{\alpha}_{C'},r}^{\text{bad}}|}{|\widetilde{\DD}_{\epsilon^*,\widetilde{\alpha}_{C'},r}|}\le \frac{\widetilde{\alpha}_{C'}r}{2^{10}\log(1/\epsilon^*)}$. In all districts of $\widehat{\DD}_{\epsilon^*, \widetilde{\alpha}_{C'},r, C^*}\setminus \widetilde{\DD}_{\epsilon^*,\widetilde{\alpha}_{C'},r}^{\text{bad}}$, the true fraction of votes received by $C^*$ exceeds that received by any other candidate by strictly more than $\frac{\epsilon^*}{2^9\widetilde{\alpha}_{C'}r\log(1/\epsilon^*)}-\frac{\epsilon^*}{2^{11}\widetilde{\alpha}_{C'}r\log(1/\epsilon^*)}>  \frac{\varepsilon^*}{2^{10}\widetilde{\alpha}_{C'}r \log(1/\epsilon^*)}$. Applying Lemma \ref{lem:large-small-districts-arbitrary}(i) for $p=2^{10}$, it follows that in the election $E$, the fraction of districts in $\DD^r$ where the fraction of votes received by $C^*$ exceeds the fraction received by any other candidate by strictly more than $\frac{\widetilde{\varepsilon}}{2^{10}\widetilde{\alpha}_{C'}r \log(1/\widetilde{\epsilon})}$, is at least $\frac{|\widehat{\DD}_{\epsilon^*, \widetilde{\alpha}_{C'},r, C^*}\setminus \widetilde{\DD}_{\epsilon^*,\widetilde{\alpha}_{C'},r}^{\text{bad}}|}{|\widetilde{\DD}_{\epsilon^*,\widetilde{\alpha}_{C'},r}|}-\frac{\widehat{\alpha}_{C'}r}{2^{11}\log(1/\epsilon^*)}> \frac{|\widehat{\DD}_{\epsilon^*, \widetilde{\alpha}_{C'},r, C^*}|}{|\widetilde{\DD}_{\epsilon^*,\widetilde{\alpha}_{C'},r}|}-\frac{\widehat{\alpha}_{C'}r}{2^{9}\log(1/\epsilon^*)}$. Summing over all $r\in R_1$, it follows that $|\DD_{C^*}|\ge \widehat{f}_{\epsilon^*, \widetilde{\alpha}_{C'}}(C^*) - \sum_{r\in R_1} (\frac{\widehat{\alpha}_{C'}r}{2^{9}\log(1/\epsilon^*)}|\DD^r|) 
    \ge \widehat{f}_{\epsilon^*, \widetilde{\alpha}_{C'}}(C^*) - \frac{\widehat{\alpha}_{C'}k}{2^5}$, where the final inequality follows since $|\DD^r|\le k/r$ and $|R_1|$ is trivially upper bounded by $\log(2^{10}/\epsilon^*)$.

    Similarly, by Lemma \ref{lem:good-districts-arbitrary}(i), for each $r\in R_1$, in all districts of $\widetilde{\DD}_{\epsilon^*,\widetilde{\alpha}_{C'},r}\setminus (\veehat{\DD}_{\epsilon^*, \widetilde{
            \alpha}_{C'},r, C'}\cup \widetilde{\DD}_{\epsilon^*,\widetilde{\alpha}_{C'},r}^{\text{bad}})$, at least $\frac{\epsilon^*}{2^9\widetilde{\alpha}_{C'}r\log(1/\epsilon^*)}-\frac{\epsilon^*}{2^{11}\widetilde{\alpha}_{C'}r\log(1/\epsilon^*)}>  \frac{\epsilon^*}{2^{10}\widetilde{\alpha}_{C'}r\log(1/\epsilon^*)}$-fraction of votes need to be changed in order to make $C'$ as the winner. Therefore, by Lemma \ref{lem:large-small-districts-arbitrary}(ii), the true fraction of districts in $\DD^r$ where $C'$ is either the winner or can be made the winner by altering at most $\frac{\epsilon^*}{2^{10} \widetilde{\alpha}_{C'}r\log(1/\epsilon^*)}$-fraction of votes, is at most $\frac{|\veehat{\DD}_{\epsilon^*, \widetilde{
            \alpha}_{C'},r, C'}\cup \widetilde{\DD}_{\epsilon^*,\widetilde{\alpha}_{C'},r}^{\text{bad}}|}{|\widetilde{\DD}_{\epsilon^*,\widetilde{\alpha}_{C'},r}|}+\frac{\widehat{\alpha}_{C'}r}{2^{11}\log(1/\epsilon^*)} \le \frac{|\veehat{\DD}_{\epsilon^*, \widetilde{
            \alpha}_{C'},r, C'}|}{|\widetilde{\DD}_{\epsilon^*,\widetilde{\alpha}_{C'},r}|}+\frac{\widetilde{\alpha}_{C'}r}{2^9\log(1/\epsilon^*)}$. Now, it is important to note that in the election $E$, $C'$ might also have won in the districts of $\cup_{r\in R_2} \DD^r$, and also in the districts of $\DD_{\epsilon^*}^{\text{pop}}$. Observe that for each $r\in R_2$, $|\DD^r|\le \frac{k}{r}\le \frac{\widetilde{\alpha}_{C'}k}{2^{12}\log(1/\epsilon^*)}$. Also $|\DD_{\epsilon^*}^{\text{pop}}| \le \frac{\epsilon^*k}{2^9}\le \frac{\widetilde{\alpha}_{C'}k}{2^5}$.             
            Therefore, the set $\DD_{C'}$ of districts where $C'$ wins satisfies $|\DD_{C'}|\le \veehat{f}_{\epsilon^*, \widetilde{\alpha}_{C'}}(C') + \sum_{r\in R_1} (\frac{\widehat{\alpha}_{C'}r}{2^{9}\log(1/\epsilon^*)}|\DD^r|) + \sum_{r\in R_2} |\DD^r|+ |\DD_{\epsilon^*}^{\text{pop}}| \le \veehat{f}_{\epsilon^*, \widetilde{\alpha}_{C'}}(C') + \frac{\widetilde{\alpha}_{C'}k}{2^5}+\frac{\widetilde{\alpha}_{C'}k}{2^8}+\frac{\widetilde{\alpha}_{C'}k}{2^5}\le \veehat{f}_{\epsilon^*, \widetilde{\alpha}_{C'}}(C') + \frac{\widetilde{\alpha}_{C'}k}{2^3}$. Since $\widehat{f}_{\epsilon^*, \widetilde{\alpha}_{C'}}(C^*) - \veehat{f}_{\epsilon^*, \widetilde{\alpha}_{C'}}(C') \ge \frac{\widetilde{\alpha}_{C'}k}{2}$, we have $|\DD_{C^*}|-|\DD_{C'}|\ge \frac{\widetilde{\alpha}_{C'}k}{2}-\frac{\widetilde{\alpha}_{C'}k}{2^5}-\frac{\widetilde{\alpha}_{C'}k}{2^3}>0$, and therefore it must be the case that $C^*=W$.

            Next, suppose that $\epsilon^*$ is tiny, and $\widetilde{\alpha}_{C'}> \frac{8}{k}$. Again, let $R'_1$ be the set of values of $r$ considered in the while loop in line \ref{line:r-loop-tiny}, and $R'_2$ be the set of remaining values of $r$. In this case, by Lemma \ref{lem:good-districts-arbitrary}(ii), we have $\widetilde{\DD}_{\epsilon^*,\widetilde{\alpha}_{C'},r}^{\text{bad}} = \emptyset$ for all $r\in R'_1$. A similar argument as before (with $\log (1/\epsilon^*)$ replaced by $\log (2k)$) gives $|\DD_{C^*}|\ge \widehat{f}_{\epsilon^*, \widetilde{\alpha}_{C'}}(C^*) - \sum_{r\in R_1} (\frac{\widehat{\alpha}_{C'}r}{2^{11}\log(2k)}|\DD^r|)\ge \frac{\widehat{\alpha}_{C'}k}{2^{10}}$. A similar argument for $C'$ yields $|\DD_{C'}| \le \veehat{f}_{\epsilon^*, \widetilde{\alpha}_{C'}}(C') + \frac{\widehat{\alpha}_{C'}k}{2^{10}}+\sum_{r\in R'_2} |\DD^r|\le \veehat{f}_{\epsilon^*, \widetilde{\alpha}_{C'}}(C') + \frac{\widehat{\alpha}_{C'}k}{2^{9}}$, since $|\DD^r|\le \frac{k}{r}\le \frac{\widetilde{\alpha}_{C'}k}{2^{12}\log(2k)}$ (recall that in this case $\DD_{\epsilon^*}^{\text{pop}}=\emptyset$). Since $\widehat{f}_{\epsilon^*, \widetilde{\alpha}_{C'}}(C^*) - \veehat{f}_{\epsilon^*, \widetilde{\alpha}_{C'}}(C') \ge \frac{\widetilde{\alpha}_{C'}k}{2}$, we again have $f(C^*)>f(C')$, and therefore $C^*=W$.

            Finally, consider the case when $\epsilon^*$ is tiny and $\widetilde{\alpha}_{C'} \le \frac{8}{k}$. In this case, $R'_2 = \emptyset$, since $\frac{2^{12}\log(2k)}{\widetilde{\alpha}_{C'}}>k$. We have $\widetilde{\DD}_{\epsilon^*,\widetilde{\alpha}_{C'},r}=\DD^r$ for all $r$. By an identical argument, $C^*$ must win in all districts of $\cup_r \widehat{\DD}_{\epsilon^*, \widetilde{\alpha}_{C'},r, C^*}$, and $C'$ is not the winner in any district of $\DD\setminus (\cup_r\veehat{\DD}_{\epsilon^*, \widetilde{
            \alpha}_{C'},r, C'})$. Therefore, we have $|\DD_{C^*}|\ge \widehat{f}_{\epsilon^*, \widetilde{\alpha}_{C'}}(C^*)$ and $|\DD_{C'}|\le \veehat{f}_{\epsilon^*, \widetilde{\alpha}_{C'}}(C')$. Since $\widehat{f}_{\epsilon^*, \widetilde{\alpha}_{C'}}(C^*) - \veehat{f}_{\epsilon^*, \widetilde{\alpha}_{C'}}(C') \ge 0$, this implies that $|\DD_{C^*}|\ge |\DD_{C'}|$, and by our assumption that the election $E$ has an unambiguous winner, it must be the case that $C^*=W$, and we are done. 
\end{proof}

\begin{lemma}
\label{lem:halt-surely-arbitrary}
    Conditioned on the event $\mathcal{E}$, it holds that $\epsilon^*\ge \epsilon$.
\end{lemma}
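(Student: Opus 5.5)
The plan mirrors the proof of Lemma~\ref{lem:halt-surely}. It suffices to show that, conditioned on $\mathcal{E}$, in the iteration with $\epstl=\epsilon$, Algorithm~\ref{alg:WP-arbitrary-pop} returns $W$. For this I would show that, for each $C\neq W$, the choice $\widetilde{\alpha}_C=\alpha_C$ from Lemma~\ref{lem:mov-lemma} satisfies the test in line~\ref{line:big-epsilon-arbitrary} or line~\ref{line:tiny-epsilon-arbitrary}. The starting point is Lemma~\ref{lem:difference-large-arbitrary}. For big $\epsilon$ it gives $|\DD_W^{\text{rich}}|-|\DD_C^{\text{poor}}|\ge \frac{3\alpha_C k}{4}$, and for tiny $\epsilon$ it gives $\max\{\alpha_Ck-1,0\}$. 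By definition, $\DD_W^{\text{rich}}\cap\DD^r=\DD^{\text{rich}}_{\epsilon,\alpha_C,r,W,2^8}$ and $\DD_C^{\text{poor}}\cap \DD^r=\DD^{\text{poor}}_{\epsilon,\alpha_C,r,C,2^8}$ for each $r\le 2^8/\epsilon$.

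\textbf{Case 1: $\epsilon$ big.} Let $R_1$ be the values of $r$ visited by the loop in line~\ref{line:r-loop} with $\widetilde{\alpha}=\alpha_C$, and let $R_2$ be the remaining $r\le 2^8/\epsilon$. Fix $r\in R_1$.
\begin{itemize}
\item Lemma~\ref{lem:large-small-districts-arbitrary}(i) with $p=2^8$ shows that the sampled fraction of $W$-rich districts in $\DD^r$ is within $\frac{\alpha_C r}{2^{11}\log(1/\epsilon)}$ of the true fraction.
\item Lemma~\ref{lem:good-districts-arbitrary}(i) shows that at most a $\frac{\alpha_C r}{2^{10}\log(1/\epsilon)}$ fraction of sampled districts are bad.
\item In a good $W$-rich district, the sampled lead of $W$ exceeds $\frac{\epsilon}{2^8\alpha_C r\log(1/\epsilon)}-\frac{\epsilon}{2^{11}\alpha_C r\log(1/\epsilon)}>\frac{\epsilon}{2^9\alpha_C r\log(1/\epsilon)}$, so the district lies in $\widehat{\DD}_{\epsilon,\alpha_C,r,W}$.
\end{itemize}
Multiplying by $|\DD^r|\le k/r$ and summing over the at most $\log(2^{10}/\epsilon)$ values in $R_1$ gives
\[\widehat{f}_{\epsilon,\alpha_C}(W)\ge \sum_{r\in R_1}|\DD^{\text{rich}}_{r,W}|-\tfrac{\alpha_C k}{2^5}.\]
Symmetrically, a good district that is not $C$-poor is not in $\veehat{\DD}_{\epsilon,\alpha_C,r,C}$, which yields
\[\veehat{f}_{\epsilon,\alpha_C}(C)\le \sum_{r\in R_1}|\DD^{\text{poor}}_{r,C}|+\tfrac{\alpha_C k}{2^5}.\]

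The rich districts lying in $\DD^r$ for $r\in R_2$ are lost from the first sum. Each such $\DD^r$ has size at most $k/r<\frac{\alpha_C k}{2^{12}\log(1/\epsilon)}$, and there are at most $\log(2^{10}/\epsilon)$ of them, so together they cost at most about $\alpha_C k/2^{10}$. Hence
\[\widehat{f}(W)-\veehat{f}(C)\ge \tfrac{3\alpha_C k}{4}-\tfrac{\alpha_Ck}{2^4}-\tfrac{\alpha_C k}{2^{10}}\ge\tfrac{\alpha_C k}{2},\]
and line~\ref{line:big-epsilon-arbitrary} fires with $C=W$. One subtlety needs checking: the loop bound is $\min\{\frac{2^{12}\log(1/\epsilon)}{\alpha_C},\frac{2^8}{\epsilon}\}$, so $R_2$ may be empty or may be exactly the range above.

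\textbf{Case 2: $\epsilon$ tiny and $\alpha_C>8/k$.} Now Lemma~\ref{lem:good-districts-arbitrary}(ii) says there are no bad districts, and $\DD^{\text{pop}}_\epsilon=\emptyset$. Repeating the estimates with $\log(2k)$ in place of $\log(1/\epsilon)$ gives
\[\widehat f(W)-\veehat f(C)\ge \alpha_Ck-1-O\!\left(\tfrac{\alpha_C k}{2^{9}}\right)>\tfrac{\alpha_C k}{2},\]
using $k>8/\alpha_C$.

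\textbf{Case 3: $\epsilon$ tiny and $\alpha_C\le 8/k$.} Every $\DD^r$ is fully enumerated, and $R_2'=\emptyset$ because $2^{12}\log(2k)/\alpha_C>k$. Since all districts are good, $\widehat f(W)\ge|\DD_W^{\text{rich}}|$ and $\veehat f(C)\le |\DD_C^{\text{poor}}|$. Their difference is therefore $\ge 0$, which satisfies line~\ref{line:tiny-epsilon-arbitrary}.

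\textbf{Expected obstacle.} The main difficulty is the bookkeeping of additive errors across population classes. The per-class errors are weighted by $|\DD^r|$, and the truncation of $r$ in both loops must be checked against the slack in Lemma~\ref{lem:difference-large-arbitrary}. In particular, the loss from $R_2$ and from $\DD^{\text{pop}}_\epsilon$ is already absorbed in the $\frac{3}{4}$ of that lemma, so only the $R_2$ term needs fresh accounting. The remaining concern is that the constants leave at least $\alpha_Ck/2$. A secondary point is that $\alpha_C\ge 1/(2k)$ lies in the range of $\widetilde{\alpha}$ visited by the loop: in the tiny case the loop condition is strict, $\widetilde\alpha>1/(2k)$, so I would need to verify that the endpoint case either cannot occur or is handled by an adjacent power of $1/2$.
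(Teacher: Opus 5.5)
Your proposal is correct and follows essentially the same route as the paper. It uses the same three cases and the same split of population classes into the visited range $R_1$ and the truncated range $R_2$, with the $R_2$ loss bounded via $|\DD^r|\le k/r$ (only the $\DD^{\text{pop}}_\epsilon$ loss, not the $R_2$ loss, is absorbed in Lemma~\ref{lem:difference-large-arbitrary}, as your computation in fact reflects), and the same use of Lemmas~\ref{lem:good-districts-arbitrary} and~\ref{lem:large-small-districts-arbitrary}. Your endpoint concern is harmless: the proof of Lemma~\ref{lem:mov-lemma} gives $\alpha_C>\beta_C/2\ge 1/(2k)$ strictly in the tiny case, so $\widetilde{\alpha}=\alpha_C$ is visited by the loop with the strict condition.
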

\begin{proof}
    We show that in the iteration with $\epstl=\epsilon$, Algorithm \ref{alg:WP-arbitrary-pop} must return $W$ as the winner assuming the event $\mathcal{E}$ holds.
    
    First assume that $\epsilon$ is big. Let $C\neq W$ be any other candidate, and $\alpha_C$ be the value guaranteed by Lemma \ref{lem:mov-lemma}. Consider the iteration of the while loop in line \ref{line:alpha-loop} with $\widetilde{\alpha}=\alpha_C$. Let $R_1$ be the set of values of $r$ considered in the while loop of line \ref{line:r-loop}, i.e., the values of $r$ that are at most $\min\{\frac{2^{12}\log(1/\widetilde{\epsilon})}{\alpha_C},\frac{2^8}{\widetilde{\epsilon}}\}$. Let $R_2$ be the (possibly) remaining values of $r$, i.e., those that lie in the range $(\frac{2^{12}\log(1/\widetilde{\epsilon})}{\alpha_C},\frac{2^8}{\widetilde{\epsilon}}]$.
    
    For each $r \in R_1$, let $\widetilde{\DD}^{\text{rich}}_{\epsilon,\alpha_C,r,W,2^8} = \DD_{\epsilon,\alpha_C,r,W,2^8}^{\text{rich}} \cap \widetilde{\DD}_{\epsilon,\alpha_C,r}$. Applying Lemma \ref{lem:large-small-districts-arbitrary}(i) with $p=2^8$, we have $\frac{|\widetilde{\DD}^{\text{rich}}_{\epsilon,\alpha_C,r,W,2^8}|}{|\widetilde{\DD}_{\epsilon,\alpha_C,r}|} \ge \frac{|\DD_{\epsilon,\alpha_C,r,W,2^8}^{\text{rich}}|}{|\DD^r|}-\frac{\alpha_{C}r}{2^{11}\log(1/\epsilon)}$. Let $\widetilde{\DD}_{\epsilon,\alpha_C,r}^{\text{bad}}\subseteq \widetilde{\DD}_{\epsilon,\alpha_C,r}$ be the set of sampled districts that are bad. 
    By Lemma \ref{lem:good-districts-arbitrary}(i), $\frac{|\widetilde{\DD}_{\epsilon,\alpha_C,r}^{\text{bad}}|}{|\DD_{\epsilon,\alpha_C,r}|}\le \frac{\alpha_{C}r}{2^{10}\log(1/\epsilon)}$. In each district of $\widetilde{\DD}^{\text{rich}}_{\epsilon,\alpha_C,r,W,2^8} \setminus \widetilde{\DD}_{\epsilon,\alpha_C,r}^{\text{bad}}$, the fraction of sampled votes received by $W$ will exceed the fraction received by any other candidate by strictly more than $\frac{\epsilon}{2^8\alpha_{C}r\log(1/\epsilon)} - \frac{\epsilon}{2^{11}\alpha_{C}r\log (1/\epsilon)}> \frac{\epsilon}{2^9\alpha_{C}r\log (1/\epsilon)}$. Therefore, \begin{align*}\widehat{f}_{\epsilon,\alpha_{C}}(W) &\ge \sum_{r\in R_1}\frac{|\widetilde{\DD}^{\text{rich}}_{\epsilon,\alpha_C,r,W,2^8} \setminus \widetilde{\DD}_{\epsilon,\alpha_C,r}^{\text{bad}}|}{|\widetilde{\DD}_{\epsilon,\alpha_C,r}|}|\DD^r| \\
    &\ge \sum_{r\in R_1}|\DD_{\epsilon,\alpha_C,r,W,2^8}^{\text{rich}}|-\sum_{r\in R_1} (\frac{\alpha_C r}{2^{11}\log(1/\epsilon)}+\frac{\alpha_C r}{2^{10}\log(1/\epsilon)})|\DD^r| \\
    &\ge \sum_{r\in R_1}|\DD_{\epsilon,\alpha_C,r,W,2^8}^{\text{rich}}|-\frac{\alpha_C k}{2^5},
    \end{align*} where the final inequality follows since $|R_1|\le \log\frac{2^{10}}{\epsilon}$. Next, observe that for any $r\in R_2$, we have $|\DD_{\epsilon,\alpha_C,r,W,2^8}^{\text{rich}}|\le |\DD^r|\le \frac{k}{r}\le \frac{\alpha_{C}k}{2^{12}\log(1/\epsilon)}$. Thus $\sum_{r\in R_2} |\DD_{\epsilon,\alpha_C,r,W,2^8}^{\text{rich}}|\le \frac{\alpha_{C}k}{2^8}$. Therefore, $\widehat{f}_{\epsilon,\alpha_{C}}(W) \ge \sum_{r\in R_1\cup R_2} |\DD_{\epsilon,\alpha_C,r,W,2^8}^{\text{rich}}| - \frac{\alpha_{C}k}{2^5}-\frac{\alpha_{C}k}{2^8}\ge |\DD_{\epsilon,\alpha_C,W,2^8}^{\text{rich}}|-\frac{\alpha_{C}k}{2^4}$.

    Similarly, for each $r\in R_1$, let $\widetilde{\DD}_{\epsilon,\alpha_C,r,C,2^8}^{\text{poor}}=\DD^{\text{poor}}_{\epsilon,\alpha_C,r,C,2^8} \cap \widetilde{\DD}_{\epsilon,\alpha_C,r}$ for each $r$, so that $\frac{|\widetilde{\DD}^{\text{poor}}_{\epsilon,\alpha_C,r,C,2^8}|}{|\widetilde{\DD}_{\epsilon,\alpha_C,r}|} \le \frac{|\DD_{\epsilon,\alpha_C,r,C,2^8}^{\text{poor}}|}{|\DD^r|}+\frac{\alpha_{C}r}{2^{11}\log(1/\epsilon)}$. Then in each district of $\widetilde{\DD}_{\epsilon,\alpha_C,r}\setminus (\widetilde{\DD}^{\text{poor}}_{\epsilon,\alpha_C,r,C,2^8} \cup \widetilde{\DD}_{\epsilon,\alpha_C,r}^{\text{bad}})$, the fraction of sampled votes that need to be changed to make $C$ the winner is at least $\frac{\epsilon}{2^8\alpha_{C}r\log (1/\epsilon)}-\frac{\epsilon}{2^{11}\alpha_{C}r\log (1/\epsilon)} > \frac{\epsilon}{2^9\alpha_{C}r\log (1/\epsilon)}$. Thus, $\veehat{f}_{\epsilon,\alpha_{C}}(C) \le \sum_{r\in R_1}| \DD_{\epsilon,\alpha_C,r,C,2^8}^{\text{poor}}|+\sum_{r\in R_1} (\frac{\alpha_C r}{2^{11}\log(1/\epsilon)}+\frac{\alpha_C r}{2^{10}\log(1/\epsilon)})|\DD^r|\le  |\DD_{\epsilon,\alpha_C,C,2^8}^{\text{poor}}|+\frac{\alpha_C k}{2^5}$. Hence $\widehat{f}_{\epsilon,\alpha_{C}}(W) - \veehat{f}_{\epsilon,\alpha_{C}}(C) \ge |\DD_{\epsilon,\alpha_C,W,2^8}^{\text{rich}}|-|\DD_{\epsilon,\alpha_C,C,2^8}^{\text{poor}}|-\frac{\alpha_{C}k}{2^3}\ge \frac{5\alpha_{C}k}{8}>\frac{\alpha_{C}k}{2}$, where the penultimate inequality follows from Lemma \ref{lem:difference-large-arbitrary}. Thus, Algorithm \ref{alg:WP_knowndist} will return $W$ as the winner in line \ref{line:tiny-epsilon-arbitrary}.

    Next, consider the case when $\epsilon$ is tiny and $\alpha_C>\frac{8}{k}$. Again, let $R'_1$ be the set of values of $r$ considered in the while loop in line \ref{line:r-loop-tiny}, and $R'_2$ be the remaining values of $r$. In this case, by Lemma \ref{lem:good-districts-arbitrary}(ii), we have $\widetilde{\DD}_{\epsilon,\alpha_C,r}^{\text{bad}}=\emptyset$ for all $r\in R_1$. Also for $r\in R_2$, $|\DD^r|\le \frac{k}{r}\le \frac{\alpha_{C}k}{2^{12}\log (2k)}$, and so $\sum_{r\in R_2} |\DD_{\epsilon,\alpha_C,r,W,2^8}^{\text{rich}}|\le \frac{\alpha_{C}k}{2^{11}}$.    
    Therefore, an analogous argument gives $\widehat{f}_{\epsilon,\alpha_{C}}(W) \ge \sum_{r\in R_1}|\DD_{\epsilon,\alpha_C,r,W,2^8}^{\text{rich}}|-\sum_{r\in R_1} (\frac{\alpha_C r}{2^{11}\log(2k)}|\DD^r|)\ge |\DD_{\epsilon,\alpha_C,W,2^8}^{\text{rich}}|-\frac{\alpha_{C}k}{2^{9}}$, and similarly $\veehat{f}_{\epsilon,\alpha_{C}}(C) \le |\DD_{\epsilon,\alpha_C,C,2^8}^{\text{poor}}|+\frac{\alpha_C k}{2^{10}}$, and hence $\widehat{f}_{\epsilon,\alpha_{C}}(W) - \veehat{f}_{\epsilon,\alpha_{C}}(C) \ge |\DD_{\epsilon,\alpha_C,W,2^8}^{\text{rich}}| - |\DD_{\epsilon,\alpha_C,C,2^8}^{\text{poor}}| - \frac{\alpha_{C}k}{2^8}\ge \frac{255\alpha_{C}k}{256}-1>\frac{\alpha_{C}k}{2}$, where the final inequality holds by our assumption that $k>\frac{8}{\alpha_C}$. Therefore, the condition of line \ref{line:tiny-epsilon-arbitrary} is satisfied, and $W$ will be returned as the winner.
    
    Finally, we consider the case when $\epsilon$ is small and $\alpha_C\le \frac{8}{k}$. For this case, $\widetilde{\DD}_{\epsilon,\alpha_C,r}=\DD^r$ for all $r$, and thus $\widetilde{\DD}^{\text{rich}}_{\epsilon,\alpha_C,r,W,2^8} = \DD_{\epsilon,\alpha_C,r,W,2^8}^{\text{rich}}$. Therefore $\widehat{f}_{\epsilon,\alpha_{C}}(W) \ge |\DD_{\epsilon,\alpha_C,r,W,2^8}^{\text{rich}}|$ and $\veehat{f}_{\epsilon,\alpha_{C}}(C) \le |\DD_{\epsilon,\alpha_C,r,C,2^8}^{\text{poor}}|$. By Lemma \ref{lem:difference-large-arbitrary}, we get $\widehat{f}_{\epsilon,\alpha_{C}}(W)-\veehat{f}_{\epsilon,\alpha_{C}}(C) \ge 0$, and hence the condition in line \ref{line:tiny-epsilon-arbitrary} is satisfied. Thus, Algorithm \ref{alg:WP-arbitrary-pop} returns $W$ as the winner.
\end{proof}

\begin{lemma}
\label{lem:sample-complexity-bound-arbitrary}
    With probability at least $1-\delta$, Algorithm \ref{alg:WP-arbitrary-pop} returns $W$ as the winner, and queries at most $\widetilde{O}(\frac{1}{\epsilon^2}\log\frac{m}{\delta}\log\frac{1}{\delta})$ votes.
\end{lemma}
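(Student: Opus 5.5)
The plan mirrors the proof of Lemma~\ref{lem:sample-complexity-bound}. It suffices to argue conditioned on the event $\mathcal{E}$, which has probability at least $1-\delta$. Under $\mathcal{E}$, Lemma~\ref{lem:correct-when-halt-arbitrary} shows that Algorithm~\ref{alg:WP-arbitrary-pop} returns $W$. Lemma~\ref{lem:halt-surely-arbitrary} shows that it halts no later than the outer iteration with $\epstl=\epsilon$. So the whole task is to bound the number of queries $p(\epstl)$ made in a single outer iteration by $\widetilde{O}(\frac{1}{\epstl^2}\log\frac{m}{\delta}\log\frac{1}{\delta})$. Since $\epstl$ halves at each step, a geometric sum over $\epstl\ge\epsilon$ then gives the claimed total.

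\textbf{Big $\epstl$.} I fix $\alphatl$ and $r$ and bound the number of sampled districts times the number of votes per district.
\begin{itemize}
\item By Lemma~\ref{lem:dbmain}, the number of sampled districts is $|\widetilde{\DD}_{\epstl,\alphatl,r}|=O(\frac{\log^2(1/\epstl)}{\alphatl^2r^2}\log\frac{m\log^2(1/\epstl)}{\delta\epstl})=\widetilde{O}(\frac{1}{\alphatl^2r^2}\log\frac{m}{\delta})$.
\item In the branch of line~\ref{line:large-Dr}, the votes per district are $\ell_{\epstl,\alphatl,r}=O(\frac{\alphatl^2r^2\log^2(1/\epstl)}{\epstl^2}\log\frac{\log(1/\epstl)}{\epstl r})=\widetilde{O}(\frac{\alphatl^2r^2}{\epstl^2})$.
\item In the branch of line~\ref{line:small-Dr}, $|\DD^r|\le\frac{2^{19}}{\epstl r}\log\frac{1}{\epstl}\log\frac{\log(1/\epstl)}{\delta\epstl}$. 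Hence $\log|\DD^r|=\widetilde{O}(1)+O(\log\log\frac1\delta)$, and $\ell_{\epstl,\alphatl,r}=\widetilde{O}(\frac{\alphatl^2r^2}{\epstl^2}\log\frac{1}{\delta})$.
\end{itemize}
The factors $r^2$ and $\alphatl^2$ cancel in the product, which is therefore $\widetilde{O}(\frac{1}{\epstl^2}\log\frac m\delta\log\frac1\delta)$. There are $O(\log\frac1\epstl)$ values of $r$ (since $r\le 2^8/\epstl$) and $O(\log\frac1\epstl)$ values of $\alphatl$, so summing costs only polylogarithmic factors.

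\textbf{Tiny $\epstl$.} Here $k\le\frac{2^{15}}{\epstl}\log\frac{\log(1/\epstl)}{\delta\epstl}$, so $\log k=\widetilde{O}(1)$, and the numbers of values of $r$ and $\alphatl$ are $O(\log k)$. The votes per district are $\ell=\widetilde{O}(\frac{\alphatl^2r^2}{\epstl^2}\log\frac{k}{\delta})=\widetilde{O}(\frac{\alphatl^2r^2}{\epstl^2}\log\frac1\delta)$. Two sub-cases arise.
\begin{itemize}
\item For $\alphatl>8/k$, the number of sampled districts is $\widetilde{O}(\frac{1}{\alphatl^2r^2}\log\frac m\delta)$, and the product is as before.
\item For $\alphatl\le 8/k$, the algorithm takes $\widetilde{\DD}_{\epstl,\alphatl,r}=\DD^r$, which has size at most $k/r=O(\frac{1}{\alphatl r})$. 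The product is then $\widetilde{O}(\frac{\alphatl r}{\epstl^2}\log\frac1\delta)$. Since $r\le\frac{2^{12}\log(2k)}{\alphatl}$, this is $\widetilde{O}(\frac{1}{\epstl^2}\log\frac1\delta)$.
\end{itemize}

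\textbf{Main obstacle.} The step that needs care is the branch of line~\ref{line:small-Dr}. There the per-district failure probability depends on $|\DD^r|$, and one must check that $\log|\DD^r|$ costs only $\widetilde{O}(\log\frac1\delta)$. This is exactly where the $\log\frac1\delta$ factor, beyond $\log\frac m\delta$, enters the bound. I also need to check that every $\log\log\frac1\delta$ and $\log\frac1\epstl$ factor is absorbed by the $\widetilde{O}$ notation. Finally, the last line of the algorithm, which queries all votes, is never reached under $\mathcal{E}$: halting at $\epstl=\epsilon\ge 1/\sqrt N$ (the regime implicitly assumed, since otherwise $\epsilon^{-2}\ge N$ and querying all votes is within budget anyway) happens before that line.
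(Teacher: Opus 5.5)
Your proposal is correct and follows the paper's proof essentially step for step. You condition on $\mathcal{E}$, invoke Lemmas~\ref{lem:correct-when-halt-arbitrary} and~\ref{lem:halt-surely-arbitrary}, bound the per-iteration cost $p(\epstl)$ by cancelling the $\alphatl^2r^2$ factors in the big and tiny regimes, and finish with a geometric sum. The differences are minor:
\begin{itemize}
\item In the $\alphatl\le 8/k$ sub-case you bound each term via the loop cap $r\le 2^{12}\log(2k)/\alphatl$, whereas the paper sums $r$ up to $k$ and uses $\alphatl k\le 8$.
\item You retain the $\log\frac1\delta$ factor from the branch of line~\ref{line:small-Dr}. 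The paper's big-$\epstl$ bound drops it, harmlessly, since the final bound contains that factor anyway.
\item You explicitly address the $\epsilon\le 1/\sqrt N$ edge case.
\end{itemize}
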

\begin{proof}
    Due to Lemmas \ref{lem:correct-when-halt-arbitrary} and \ref{lem:halt-surely-arbitrary}, we obtain that Algorithm \ref{alg:WP-arbitrary-pop} must return $W$ as the winner with probability at least $1-\delta$. Similar to the proof of Lemma \ref{lem:sample-complexity-bound}, let $p(\widetilde{\epsilon})$ be the number of queried votes in the iteration of Algorithm \ref{alg:WP-arbitrary-pop} with parameter $\widetilde{\epsilon}$. It suffices to establish that $p(\widetilde{\epsilon})=\widetilde{O}(\frac{1}{\epsilon^2}\log\frac{m}{\delta}\log\frac{1}{\delta})$, from which the lemma would follow by a geometric sum argument over all values of $\widetilde{\epsilon}$ up to $\epsilon$.

    First, suppose that $\widetilde{\epsilon}$ is big. Then for each $\widetilde{\alpha}$ and $r$, we have $|\widetilde{\DD}_{\widetilde{\epsilon},\widetilde{\alpha},r}| = \widetilde{O}(\frac{1}{\widetilde{\alpha}^2r^2}\log\frac{m}{\delta})$. If $|\DD^r|\le \frac{2^{19}}{\widetilde{\epsilon}r}\log\frac{1}{\widetilde{\epsilon}}\log\frac{\log(1/\widetilde{\epsilon})}{\delta}$, then in line \ref{line:small-Dr}, we have $\ell_{\widetilde{\epsilon},\widetilde{\alpha},r}=\widetilde{O}(\frac{\widetilde{\alpha}^2r^2}{\widetilde{\epsilon}^2}\log\frac{|\DD^r|}{\delta})=\widetilde{O}(\frac{\widetilde{\alpha}^2r^2}{\widetilde{\epsilon}^2}\log\frac{1}{\delta})$, since $r\ge 1/2$. Otherwise, $\ell_{\widetilde{\epsilon},\widetilde{\alpha},r}=\widetilde{O}(\frac{\widetilde{\alpha}^2r^2}{\widetilde{\epsilon}^2})$ in line \ref{line:large-Dr}. Therefore, $|\widetilde{\DD}_{\widetilde{\epsilon},\widetilde{\alpha},r}|\cdot \ell_{\widetilde{\epsilon},\widetilde{\alpha},r}=\widetilde{O}(\frac{1}{\widetilde{\epsilon}^2}\log \frac{m}{\delta})$. Since the number of values of $\widetilde{\alpha}$ is $O(\log\frac{1}{\widetilde{\epsilon}})$, and for each $\widetilde{\alpha}$, there are $O(\log\frac{1}{\widetilde{\epsilon}})$ values of $r$, we obtain $p(\widetilde{\epsilon})=\widetilde{O}(\frac{1}{\widetilde{\epsilon}^2}\log \frac{m}{\delta})$.

    Next, when $\widetilde{\epsilon}$ is tiny, we have $\log k=\widetilde{O}(1)$. For any $\widetilde{\alpha},r$, it holds that $\ell_{\widetilde{\epsilon},\widetilde{\alpha},r}=\widetilde{O}(\frac{\widetilde{\alpha}^2r^2}{\widetilde{\epsilon}^2}\log\frac{1}{\delta})$. If $\widetilde{\alpha}>\frac{8}{k}$, we have $|\widetilde{\DD}_{\widetilde{\epsilon},\widetilde{\alpha},r}| = \widetilde{O}(\frac{1}{\widetilde{\alpha}^2r^2}\log\frac{m}{\delta})$, for all $r\le \min\{\frac{2^{12}\log(2k)}{\widetilde{\alpha}},k\}$. Thus, $|\widetilde{\DD}_{\widetilde{\epsilon},\widetilde{\alpha},r}|\cdot \ell_{\widetilde{\epsilon},\widetilde{\alpha},r}=\widetilde{O}(\frac{1}{\widetilde{\epsilon}^2}\log \frac{m}{\delta}\log\frac{1}{\delta})$. Since the number of values of $r$ is $\widetilde{O}(1)$, and the number of values of $\widetilde{\alpha}$ that exceed $8/k$ is also $\widetilde{O}(1)$, the total number of samples across all values of $\widetilde{\alpha}$ exceeding $8/k$ is bounded by $\widetilde{O}(\frac{1}{\widetilde{\epsilon}^2}\log \frac{m}{\delta}\log\frac{1}{\delta})$.     
    Finally if $\widetilde{\alpha}\le \frac{8}{k}$, we have $|\widetilde{\DD}_{\widetilde{\epsilon},\widetilde{\alpha},r}|=|\DD^r|\le \frac{k}{r}=O(\frac{1}{\widetilde{\alpha}r})$ for all $r$. Thus, $|\widetilde{\DD}_{\widetilde{\epsilon},\widetilde{\alpha},r}|\cdot \ell_{\widetilde{\epsilon},\widetilde{\alpha},r}=\widetilde{O}(\frac{\widetilde{\alpha}r}{\widetilde{\epsilon}^2}\log\frac{1}{\delta})$. Hence for each $\widetilde{\alpha}\le \frac{8}{k}$, the total number of samples is bounded by $\widetilde{O}(\frac{\widetilde{\alpha}}{\widetilde{\epsilon}^2}\log\frac{1}{\delta}) \cdot \sum_r r \le \widetilde{O}(\frac{\widetilde{\alpha}}{\widetilde{\epsilon}^2}\log\frac{1}{\delta}) \cdot k\le \widetilde{O}(\frac{1}{\widetilde{\epsilon}^2}\log\frac{1}{\delta})$. Since the number of such $\widetilde{\alpha}$ is only $O(1)$, we have $p(\widetilde{\epsilon})=\widetilde{O}(\frac{1}{\widetilde{\epsilon}^2}\log \frac{m}{\delta}\log\frac{1}{\delta})$ overall.  
    
    Thus, in either case, we have established that $p(\widetilde{\epsilon})=\widetilde{O}(\frac{1}{\widetilde{\epsilon}^2}\log \frac{m}{\delta}\log\frac{1}{\delta})$, completing the proof.
\end{proof}

\section{Conclusion and future research directions}
In this paper we propose a predictor for general district-based elections (with arbitrary district population and arbitrary number of candidates) with optimally bounded (up to polylog factors) query complexity for constant number of candidates, and has an additional logarithmic dependence on the number of candidates.

The immediate technical question that our work leaves open is to decide the necessity of the $\log m$ factor in the query complexity. Can it be removed or can a lower bound be proven?

Bhattacharyya and Dey studied the query complexity of winner prediction for various voting rules in the single-district setting. A natural direction of future research  is thus to study the query complexity of district-based elections for voting rules other than plurality. 

Another interesting set of questions is to estimate the margin of victory of the overall election (measure of the robustness of the election), as well as the margin of victory of the single-district election defined by the district winners (which corresponds to the robustness of the makeup of the House of Representatives in the US, the House of Commons in the UK or the Parliament in India).

\bibliographystyle{plain}
\bibliography{ref}

@inproceedings{KDS23,
  title={Sampling-Based Winner Prediction in District-Based Elections},
  author={Kar, Debajyoti and Dey, Palash and Sanyal, Swagato},
  booktitle={22nd International Conference on Autonomous Agents and Multiagent Systems},
  pages={2661--2663},
  year={2023},
  note={Full paper: arXiv:2203.00083}
}

@article{BD21,
  title={Predicting winner and estimating margin of victory in elections using sampling},
  author={Bhattacharyya, Arnab and Dey, Palash},
  journal={Artificial Intelligence},
  volume={296},
  pages={103476},
  year={2021},
  publisher={Elsevier}
}

@inproceedings{DM24,
  title={Evaluating District-based Election Surveys with Synthetic Dirichlet Likelihood},
  author={Mitra, Adway and Dey, Palash},
  booktitle={23rd International Conference on Autonomous Agents and Multiagent Systems},
  pages={1400--1408},
  year={2024}
}

@inproceedings{M24,
  title={Agent-based Simulation of District-based Elections with Heterogeneous Populations},
  author={Mitra, Adway},
  booktitle={22nd International Conference on Autonomous Agents and Multiagent Systems},
  pages={2730--2732},
  year={2023}
}

@article{BDW02,
  title={Complexity measures and decision tree complexity: a survey},
  author={Buhrman, Harry and De Wolf, Ronald},
  journal={Theoretical Computer Science},
  volume={288},
  number={1},
  pages={21--43},
  year={2002},
  publisher={Elsevier}
}

@InProceedings{AGJKLMSS18,
  author =	{Anshu, Anurag and Gavinsky, Dmitry and Jain, Rahul and Kundu, Srijita and Lee, Troy and Mukhopadhyay, Priyanka and Santha, Miklos and Sanyal, Swagato},
  title =	{A Composition Theorem for Randomized Query Complexity},
  booktitle =	{37th IARCS Annual Conference on Foundations of Software Technology and Theoretical Computer Science (FSTTCS 2017)},
  pages =	{10:1--10:13},
  ISBN =	{978-3-95977-055-2},
  ISSN =	{1868-8969},
  year =	{2018},
  publisher =	{Schloss Dagstuhl -- Leibniz-Zentrum f{\"u}r Informatik},
  doi =		{10.4230/LIPIcs.FSTTCS.2017.10}
}

@article{BDK18,
 author = {Ben-David, Shalev and Kothari, Robin},
 title = {Randomized Query Complexity of Sabotaged and Composed Functions},
 year = {2018},
 pages = {1--27},
 doi = {10.4086/toc.2018.v014a005},
 publisher = {Theory of Computing},
 journal = {Theory of Computing},
 volume = {14},
 number = {5}
}

@article{GLSS18,
 author = {Gavinsky, Dmytro and Lee, Troy and Santha, Miklos and Sanyal, Swagato},
 title = {Optimal Composition Theorem for Randomized Query Complexity},
 year = {2023},
 pages = {1--35},
 doi = {10.4086/toc.2023.v019a009},
 publisher = {Theory of Computing},
 journal = {Theory of Computing},
 volume = {19},
 number = {9}
}

@inproceedings{BDBGM22,
  title={Randomised composition and small-bias minimax},
  author={Ben-David, Shalev and Blais, Eric and G{\"o}{\"o}s, Mika and Maystre, Gilbert},
  booktitle={63rd Annual Symposium on Foundations of Computer Science (FOCS)},
  pages={624--635},
  year={2022},
  organization={IEEE}
}

@inproceedings{BDB20,
  title={A tight composition theorem for the randomized query complexity of partial functions},
  author={Ben-David, Shalev and Blais, Eric},
  booktitle={61st Annual Symposium on Foundations of Computer Science (FOCS)},
  pages={240--246},
  year={2020},
  organization={IEEE}
}

@inproceedings{CKMPS23,
  author =	{Chakraborty, Sourav and Kayal, Chandrima and Mittal, Rajat and Paraashar, Manaswi and Sanyal, Swagato and Saurabh, Nitin},
  title =	{On the Composition of Randomized Query Complexity and Approximate Degree},
  booktitle =	{Approximation, Randomization, and Combinatorial Optimization. Algorithms and Techniques (APPROX/RANDOM 2023)},
  pages =	{63:1--63:23},
  ISBN =	{978-3-95977-296-9},
  ISSN =	{1868-8969},
  year =	{2023},
  doi =	{10.4230/LIPIcs.APPROX/RANDOM.2023.63},
  publisher =	{Schloss Dagstuhl -- Leibniz-Zentrum f{\"u}r Informatik}
}

@inproceedings{S24,
  title={Randomized Query Composition and Product Distributions},
  author={Sanyal, Swagato},
  booktitle={41st International Symposium on Theoretical Aspects of Computer Science (STACS 2024)},
  pages={56--1},
  year={2024},
  organization={Schloss Dagstuhl--Leibniz-Zentrum f{\"u}r Informatik}
}

@inproceedings{GJ16,
  title={A composition theorem for conical juntas},
  author={G{\"o}{\"o}s, Mika and Jayram, TS},
  booktitle={31st Conference on Computational Complexity (CCC 2016)},
  pages={5--1},
  year={2016},
  organization={Schloss Dagstuhl--Leibniz-Zentrum f{\"u}r Informatik}
}

@article{MNSSTX16,
  title={Improved bounds for the randomized decision tree complexity of recursive majority},
  author={Magniez, Fr{\'e}d{\'e}ric and Nayak, Ashwin and Santha, Miklos and Sherman, Jonah and Tardos, G{\'a}bor and Xiao, David},
  journal={Random Structures \& Algorithms},
  volume={48},
  number={3},
  pages={612--638},
  year={2016},
  publisher={Wiley Online Library}
}

@misc{ADBD26,
      title={Monte Carlo to Las Vegas for Recursively Composed Functions}, 
      author={Bandar Al-Dhalaan and Shalev Ben-David},
      year={2026},
      note={Full paper: arXiv:2601.08073}
}

@INPROCEEDINGS{SW86,
  author={Saks, Michael and Wigderson, Avi},
  booktitle={27th Annual Symposium on Foundations of Computer Science (sfcs 1986)}, 
  title={Probabilistic Boolean decision trees and the complexity of evaluating game trees}, 
  year={1986},
  volume={},
  number={},
  pages={29-38},
  doi={10.1109/SFCS.1986.44}
}

@article{S91,
  title={On the Monte carlo boolean decision tree complexity of read-once formulae},
  author={Santha, Miklos},
  journal={Random Structures \& Algorithms},
  volume={6},
  number={1},
  pages={75--87},
  year={1995},
  publisher={Wiley Online Library}
}

@inproceedings{JK10,
author = {Jain, Rahul and Klauck, Hartmut},
title = {The Partition Bound for Classical Communication Complexity and Query Complexity},
year = {2010},
isbn = {9780769540603},
publisher = {IEEE Computer Society},
doi = {10.1109/CCC.2010.31},
booktitle = {25th Annual Conference on Computational Complexity (CCC 2010)},
pages = {247–258},
numpages = {12}
}

@article{GJPW18,
  title={Randomized communication versus partition number},
  author={G{\"o}{\"o}s, Mika and Jayram, TS and Pitassi, Toniann and Watson, Thomas},
  journal={ACM Transactions on Computation Theory (TOCT)},
  volume={10},
  number={1},
  pages={1--20},
  year={2018},
  publisher={ACM New York, NY, USA}
}

@inproceedings{AKK16,
author = {Ambainis, Andris and Kokainis, Martins and Kothari, Robin},
title = {Nearly optimal separations between communication (or query) complexity and partitions},
year = {2016},
isbn = {9783959770088},
publisher =	{Schloss Dagstuhl -- Leibniz-Zentrum f{\"u}r Informatik},
booktitle = {31st Conference on Computational Complexity (CCC 2016)},
articleno = {4},
numpages = {14}
}

@article{G98,
  title={Property testing and its connection to learning and approximation},
  author={Goldreich, Oded and Goldwasser, Shari and Ron, Dana},
  journal={Journal of the ACM (JACM)},
  volume={45},
  number={4},
  pages={653--750},
  year={1998},
  publisher={ACM New York, NY, USA}
}

@book{G17,
  title={Introduction to property testing},
  author={Goldreich, Oded},
  year={2017},
  publisher={Cambridge University Press}
}

@book{DP09,
  title={Concentration of measure for the analysis of randomized algorithms},
  author={Dubhashi, Devdatt P and Panconesi, Alessandro},
  year={2009},
  publisher={Cambridge University Press}
}

\begin{appendices}
\section{Proof of Lemma~\ref{lem:simple-alpha}}
\label{sec:simple-alpha-proof}
In this section, we present a proof of Lemma~\ref{lem:simple-alpha}, restated below.
\begin{lemma}[Restatement of Lemma~\ref{lem:simple-alpha}]
\label{lem:simple-alpha-restated}
Let $E$ be a district-based election with two candidates satisfying $\MOV(E) \geq \epsilon N$ and $4/k\leq \epsilon\leq 1/2$. Then there exists an $\alpha \in [\frac{\epsilon}{8}, \frac{1}{2}]$ such that there are at least $k(\frac{1}{2}+ \alpha)$ districts each with $\MOV$ at least $\widetilde{\Omega}(\frac{\epsilon N}{\alpha k})$ in which the overall winner wins.
\end{lemma}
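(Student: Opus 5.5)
We specialise the machinery behind Lemma~\ref{lem:mov-lemma} to two candidates $W$ and $C$. With two candidates, $\DD_1=\DD_3=\emptyset$, since no third candidate can win anywhere. An optimal $C$-feasible strategy $\SM$ therefore only flips districts of $\DD_W$ to $C$: it flips a set $\DD_2\subseteq\DD_W$ with $2|\DD_2|\in[\Delta_C,\Delta_C+1]$, where $\Delta_C=|\DD_W|-|\DD_C|$. By optimality, $\DD_2$ consists of the $|\DD_2|$ districts of $\DD_W$ with the smallest district-level $\MOV$, i.e.\ the smallest number of votes needed to flip $d$ to $C$. This flip cost $h(d)$ is, up to a factor $2$, the $\MOV$ of the single-district election $E_d$. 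Since $\MOV(E)\ge \epsilon N$, we have $\sum_{d\in\DD_2}h(d)\ge \epsilon N$.

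\textbf{Step 1.} Note that $\epsilon\ge 4/k$ is the analogue of the ``big'' regime. Sort the costs $M_1\ge\dots\ge M_\Delta$ on $\DD_2$, where $\Delta=|\DD_2|$. Repeat the harmonic-sum argument from the big case of Lemma~\ref{lem:mov-lemma}: the indices $i<\lceil\epsilon k/8\rceil$ cannot carry most of the mass. Indeed, if $\sum_{i\ge \lceil \epsilon k/8\rceil}M_i\le \epsilon N/2$, then spending at most $\epsilon N/2$ votes brings the district gap down to $O(\epsilon k)$, specifically at most $2\lceil\epsilon k/8\rceil$. Lemma~\ref{lem:district-lebel-lb}, applied to the altered election (whose $\MOV$ is still at least $\epsilon N/2$), forbids this. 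Hence some $i\in[\lceil\epsilon k/8\rceil,\Delta]$ has $M_i>\frac{\epsilon N}{16\, i\log(1/\epsilon)}$. Set $\beta=i/k$. Then $\beta\ge\epsilon/8$ and $\beta\le \Delta/k\le 1/2$, and the $i$ most expensive districts of $\DD_2$ each cost more than $\frac{\epsilon N}{16\beta k\log(1/\epsilon)}=\widetilde\Omega(\frac{\epsilon N}{\beta k})$.

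\textbf{Step 2 (counting).} Every $d\in\DD_W\setminus\DD_2$ costs at least $\min_{d'\in\DD_2}h(d')$, which is at least the cost threshold above, because $\SM$ chose the cheapest districts. Together with the $i=\beta k$ expensive districts of $\DD_2$, this gives at least $|\DD_W|-|\DD_2|+\beta k$ districts won by $W$ with $\MOV$ of order $\widetilde\Omega(\frac{\epsilon N}{\beta k})$. Using $|\DD_W|+|\DD_C|=k$ and $2|\DD_2|\le|\DD_W|-|\DD_C|+1$, we get
\[
|\DD_W|-|\DD_2|\ \ge\ \tfrac{|\DD_W|+|\DD_C|-1}{2}\ =\ \tfrac{k-1}{2}.
\]
The count is therefore at least $k(\frac12+\beta)-\frac12$. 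Take $\alpha=\beta-\frac{1}{2k}$, or halve $\beta$ to absorb the $-1/2$; since $\beta k\ge \epsilon k/8\ge 1/2$, we still have $\alpha\ge\epsilon/16$. Adjusting the constant $1/8$ in Step~1, e.g.\ using the range starting at $\lceil\epsilon k/4\rceil$ together with $\epsilon\ge4/k$, keeps $\alpha\in[\epsilon/8,1/2]$ as stated.

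\textbf{Main obstacle.} The delicate points are the parity/off-by-one term in $2|\DD_2|\in\{\Delta_C,\Delta_C+1\}$ and ensuring the final $\alpha$ lands in exactly $[\epsilon/8,1/2]$. This is where the hypothesis $\epsilon\ge 4/k$ is used: it guarantees that $\epsilon k/8$ is at least a constant, so the additive $-1$ can be absorbed into $\beta k$.
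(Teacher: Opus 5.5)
Your route is the paper's own. Your $\DD_2$ is exactly the paper's set $S$ of the $\lceil\Delta_C/2\rceil$ cheapest districts of $\DD_W$. You find a large index $i$ with $M_i$ above a harmonic threshold by applying Lemma~\ref{lem:district-lebel-lb} to a partially altered election, and then you count $|\DD_W|-|\DD_2|+i$. However, two steps fail as written.

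\textbf{Step 1 does not reach a contradiction.} Let $a=\lceil\epsilon k/8\rceil$ and leave the $a-1$ most expensive districts of $\DD_2$ unflipped. Each of them counts twice, so the leftover gap can be $2(a-1)$, which is nearly $\epsilon k/4$; for example, $\epsilon k=64$ gives $14$. Lemma~\ref{lem:district-lebel-lb}, applied to an election with $\MOV\ge\epsilon N/2$, only forces a gap of $\epsilon k/6\approx 10.7$. Your own stated bound $2\lceil\epsilon k/8\rceil\ge\epsilon k/4$ already exceeds $\epsilon k/6$.

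The big case of Lemma~\ref{lem:mov-lemma} states the leftover gap as $\epsilon k/8$ and overlooks this doubling. The paper's appendix argument glosses over the same factor, so neither set of constants can be imported as is.

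Your proposed remedy, starting the range at $\lceil\epsilon k/4\rceil$, goes the wrong way. It leaves a gap near $\epsilon k/2$, which exceeds the largest gap $\epsilon k/3$ that Lemma~\ref{lem:district-lebel-lb} can ever certify from $\MOV\ge\epsilon N$. Moving the start index down to about $\epsilon k/16$ repairs Step 1, but after halving you only get $\alpha\ge\epsilon/32$ or so.

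\textbf{How to reach exactly $[\epsilon/8,1/2]$: use parity.}
\begin{enumerate}
\item Write $s=|\DD_2|$ and cap the alteration budget at $\epsilon N/4$. Your threshold $\frac{\epsilon N}{16j\log(1/\epsilon)}$ allows this, since $\sum_{j=a}^{s}1/j\le 1+\ln(8/\epsilon)$ once $a\ge\epsilon k/8$.
\item The altered election $E'$ then has $\MOV\ge 3\epsilon N/4$, so Lemma~\ref{lem:district-lebel-lb} forces a gap of at least $\epsilon k/4$ in $E'$.
\item Because $\Delta_C\equiv k\pmod 2$, the leftover gap is $2(a-1)$ for even $k$ and $2a-3$ for odd $k$. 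The count is $|\DD_W|-s+i=\lfloor k/2\rfloor+i$.
\item For even $k$, take $a=\lceil\epsilon k/8\rceil$ and $\alpha=i/k$.
\item For odd $k$, take $a=\lceil\epsilon k/8+\tfrac12\rceil$ and $\alpha=(i-\tfrac12)/k\ge i/(2k)$.
\item In both cases the leftover gap is below $\epsilon k/4$, giving the contradiction, and $a\le s$ follows from $\Delta_C\ge\epsilon k/3$.
\end{enumerate}
This yields $\alpha\in[\epsilon/8,1/2]$ with threshold $\Omega\bigl(\frac{\epsilon N}{\alpha k\log(1/\epsilon)}\bigr)$. The $-\tfrac12$ lost in the count occurs exactly when the leftover gap gains a $-1$, so no halving is needed.

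\textbf{Step 2 has a min/max slip.} The claim that $\min_{d'\in\DD_2}h(d')$ is at least the cost threshold is false, since $M_\Delta$ can be tiny. What the cheapness of $\DD_2$ actually gives is the following: every $d\in\DD_W\setminus\DD_2$ has flip cost at least $\max_{d'\in\DD_2}h(d')=M_1\ge M_i$. That is what your count needs.
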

\begin{proof}
Let $A$ and $B$ be the two candidates, and assume that $A$ is the winner of $E$. Let $\DD_A$ and $\DD_B$ be the sets of districts in which $A$ and $B$ win respectively. Let the districts in $\DD_A$ be, in non-decreasing order of their $\MOV$, $d_1, \ldots, d_{|\DD_w|}$. Define $\Delta:=|\DD_A| - |\DD_B|$. 


Consider the set $S:=\{d_1, \ldots, d_{\lceil\frac{\Delta}{2}\rceil\}}$ of $\lceil\frac{\Delta}{2}\rceil$ districts of least $\MOV$ in which $A$ wins. The next claim shows that $S$ contains a large number of districts with high $\MOV$.
\begin{claim}
\label{clm:margin-key}
There exists $\alpha' \in [\frac{\epsilon}{4}, 1]$ such that there are at least $\alpha' k$ districts in $S$ each of whose $\MOV$ is strictly more than $\frac{\epsilon N}{100\log(1/\epsilon)\alpha' k}$.
\end{claim}
Let us first finish the proof of the Lemma assuming the claim, and then prove the claim. $|\DD_A|=\frac{k}{2}+\frac{\Delta}{2}$. By Claim~\ref{clm:margin-key} there are at least $\epsilon k/4 \geq 1$ (by the assumptions of the Lemma statement) districts in $S$ with $\MOV$ at least $\frac{\epsilon N}{100\log(1/\epsilon)\alpha' k}$. By the definition of $S$, all districts in $\DD_A \setminus S$ have $\MOV$ at least $\frac{\epsilon N}{100\log(1/\epsilon)\alpha' k}$. Thus the total number of districts where $A$ wins and where the $\MOV$ is at least $\frac{\epsilon N}{100\log(1/\epsilon)\alpha' k}$ is at least $|\DD_A \setminus S|+\alpha' k \geq \frac{k}{2}+ \alpha' k -\frac{1}{2} \geq \frac{k}{2}+ (\alpha'/2) k$ (as $\alpha'k \geq 1$). The lemma follows by taking $\alpha=\alpha'/2$
\begin{proof}
For convenience, for $j=1,\ldots, \lceil\frac{\Delta}{2}\rceil$ we denote the district $d_j$ as $d'_{\lceil\frac{\Delta}{2}\rceil-j+1}$ so that $d'_1,\ldots, d'_{\lceil\frac{\Delta}{2}\rceil}$ is the sequence of districts in $S$ in non-increasing order of their $\MOV$. Define $\MOV(d'_j):=v_j$. We will show that there exists an $i \in [\lfloor \epsilon k/8\rfloor, \lceil\Delta/2\rceil]$ such that $v_i > \frac{\epsilon N}{100\log(1/\epsilon)i}$. This will imply the claim by taking $\alpha' = i/k$ and observing that $v_1\geq\ldots\geq v_i$.

Towards a contradiction, assume that for all $i \in [\lfloor \epsilon k/4\rfloor, \lceil\Delta/2\rceil]$, $N_i \leq \frac{\epsilon N}{100\log(1/\epsilon)i}$. We have that
\begin{align*}
\sum_{i=\lfloor \epsilon k/4\rfloor}^{\lceil\Delta/2\rceil} N_i &\leq \sum_{i=\lfloor \epsilon k/4\rfloor}^{\lceil\Delta/2\rceil}\frac{\epsilon N}{100\log(1/\epsilon)i} \\
&=\frac{\epsilon N}{100\log(1/\epsilon)}\sum_{i=\lfloor \epsilon k/4\rfloor}^{\lceil\Delta/2\rceil} \frac{1}{i} \\
&\leq\frac{\epsilon N}{100\log(1/\epsilon)}(H_k - H_{\epsilon k/4}) \tag*{$\lceil\Delta/2\rceil \leq k$} \\
&\leq \frac{\epsilon N}{100\log(1/\epsilon)} (\ln(4/\epsilon)+2) \tag*{Fact~\ref{fact:hn}} \\
&<\frac{\epsilon N}{5} \tag*{$\epsilon \leq 1/2$}.
\end{align*}
Now consider the election $E'$ obtained by altering votes in districts $D_{\lfloor\epsilon k/4\rfloor}, \ldots, D_{\lceil\Delta/2\rceil}$ such that $A$ is no more the winner in them. This can be accomplished by changing at most $\sum_{i=\lfloor \epsilon k/8\rfloor}^{\lceil\Delta/2\rceil} N_i < \frac{\epsilon N}{5} < \MOV(E)$ votes in $E$. Hence $A$ is the winner of $E'$ and furthermore $\MOV(E')> \MOV(E)-\frac{\epsilon N}{5} > \frac{4 \epsilon N}{5}$ votes. However, in the resulting election, the difference between the numbers of districts won by $A$ and $B$ is at most $\frac{\epsilon k}{8}-1$, which contradicts Lemma~\ref{lem:district-lebel-lb}.
\end{proof}
\end{proof}
\section{Additional preliminaries}
In this section, we include two results that we use in our proofs.
\begin{definition}
For positive integer $n$ the $n$-th harmonic sum $H_n$ is defined as
\[H_n=1+\frac{1}{2}+\ldots+\frac{1}{n}.\]
\end{definition}
\begin{fact}
\label{fact:hn}
For all $n\geq 1$, $\ln n \leq H_n \leq \ln n + 1$.
\end{fact}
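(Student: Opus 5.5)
The statement to prove is Fact~\ref{fact:hn}: for all $n\ge 1$, $\ln n\le H_n\le \ln n+1$. I would compare the sum with the integral of $1/x$, using that $x\mapsto 1/x$ is decreasing on $(0,\infty)$.

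For each integer $i\ge 1$ and every $x\in[i,i+1]$ we have $\frac{1}{i+1}\le \frac1x\le \frac1i$. Integrating over $[i,i+1]$ gives
\[\frac{1}{i+1}\le \int_i^{i+1}\frac{dx}{x}\le \frac1i.\]

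For the lower bound, sum the right-hand inequality over $i=1,\ldots,n-1$:
\[H_{n-1}\ge \int_1^{n}\frac{dx}{x}=\ln n.\]
Since $H_n\ge H_{n-1}$, we get $H_n\ge \ln n$. The case $n=1$ is immediate, since $H_1=1\ge 0=\ln 1$.

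For the upper bound, sum the left-hand inequality over $i=1,\ldots,n-1$:
\[\sum_{j=2}^{n}\frac1j\le \int_1^n\frac{dx}{x}=\ln n.\]
Adding the first term $1$ gives $H_n\le \ln n+1$.

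No step is a real obstacle. The only point needing care is the index bookkeeping: for the lower bound, sum only up to $n-1$ so the integral ends at $n$, and use that dropping the term $1/n$ only decreases the sum.
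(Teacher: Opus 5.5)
Your integral-comparison proof is correct, including the index bookkeeping in both sums and the $n=1$ case. The paper states this as a standard fact without proof, and your argument is the usual one it implicitly relies on.
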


We will need the following form of the Chernoff bound (see \cite{DP09} for a proof).
\begin{lemma}
\label{lem:chernoff-large-dev}
    Let $\{X_i\}_{i\in [n]}$ be $n$ independent $0-1$ random variables and $S=\sum_{i\in [n]} X_i$. Let $\mathbb{E}[S]=\mu$. Then for any $R\ge 2e\mu$, $\text{Pr}[S\ge R]\le 2^{-R}$.
\end{lemma}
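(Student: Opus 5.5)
We would prove the bound directly via the exponential moment method rather than citing \cite{DP09}. This keeps the argument self-contained and shows where the threshold $2e\mu$ comes from.

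Let $p_i:=\Pr[X_i=1]$, so that $\mu=\sum_{i\in[n]}p_i$. Two degenerate cases are dispatched first. If $R\le 0$ the claim is trivial, since $2^{-R}\ge 1$. If $\mu=0$, then $S=0$ almost surely, so $\Pr[S\ge R]=0$ for every $R>0$. Hence we may assume $R>0$ and $\mu>0$, and then $R\ge 2e\mu>\mu$.

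For any $t>0$, Markov's inequality applied to $e^{tS}$ gives $\Pr[S\ge R]\le e^{-tR}\,\mathbb{E}[e^{tS}]$. By independence,
\[
\mathbb{E}[e^{tS}]=\prod_{i\in[n]}\bigl(1+p_i(e^t-1)\bigr)\le \exp\bigl(\mu(e^t-1)\bigr),
\]
using $1+x\le e^x$. We then choose $t:=\ln(R/\mu)$, which is positive because $R>\mu$. Substituting yields
\[
\Pr[S\ge R]\le \exp(R-\mu)\Bigl(\frac{\mu}{R}\Bigr)^{R}=e^{-\mu}\Bigl(\frac{e\mu}{R}\Bigr)^{R}\le\Bigl(\frac{e\mu}{R}\Bigr)^{R}.
\]

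The final step uses the hypothesis $R\ge 2e\mu$, which gives $e\mu/R\le 1/2$. Since $R>0$, raising to the power $R$ preserves this inequality, so $(e\mu/R)^R\le 2^{-R}$, as claimed.

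No step is a real obstacle. The only care needed is choosing $t$ to optimise the exponent, together with the bookkeeping for the degenerate cases $\mu=0$ and $R\le 0$. The rest is the textbook Chernoff computation.
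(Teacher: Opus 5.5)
Your proof is correct: the exponential-moment bound with $t=\ln(R/\mu)$ gives $\Pr[S\ge R]\le e^{-\mu}(e\mu/R)^R\le 2^{-R}$, and you handle the degenerate cases $\mu=0$ and $R\le 0$ properly. The paper does not prove this lemma but cites \cite{DP09}, which obtains it by essentially this standard Chernoff computation, so your argument just makes the cited bound self-contained.
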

\end{appendices}

\end{document}